\documentclass[UKenglish]{lmcs}
\pdfoutput=1

\usepackage{lastpage}

\lmcsheading{}{1--31}{}{}%
{Feb.~02,~2016}{Oct.~29,~2018}{}

\pdfoutput=1

\usepackage{graphicx}
\usepackage{proof}
\usepackage[all]{xy}
\usepackage{amsmath}
\usepackage{amssymb}

\newcommand{\Bf}[1]{{\bf #1}}

\newcommand{\Rm}[1]{{\rm #1}}
\newcommand{\ul}{\underline}
\newcommand{\ol}{\overline}
\newcommand{\mc}{\mathcal}
\renewcommand{\angle}[1]{\langle #1\rangle}

\newcommand{\fa}[1]{\forall{#1}~.~}
\newcommand{\ex}[1]{\exists{#1}~.~}

\newcommand{\lam}[1]{\lambda{#1}~.~}

\newcommand{\Met}{\Bf{EPMet}} 

\newcommand{\Alg}{\Bf{Alg}}


\newcommand{\Obj}[1]{\Bf{Obj}(#1)}

\newcommand{\Set}{\Bf{Set}}
\newcommand{\Pred}{\Bf{Pred}}

\newcommand{\CAT}{\Bf{CAT}}

\newcommand{\dArrow}{\mathbin{\dot\Arrow}}

\newcommand{\dtimes}{\mathbin{\dot\times}}


\newcommand{\arrow}{\rightarrow}

\newcommand{\Arrow}{\Rightarrow}
\newcommand{\bul}{\bullet}
\newcommand{\Id}{\Rm{Id}}
\newcommand{\id}{\Rm{id}}
\newcommand{\union}{\Rm{union}}


\renewcommand{\AA}{\mathbb A}
\newcommand{\BB}{\mathbb B}
\newcommand{\CC}{\mathbb C}
\newcommand{\NN}{\mathbb N}
\newcommand{\DD}{\mathbb D}
\newcommand{\EE}{\mathbb E}

\newcommand{\mT}{\mc T}
\newcommand{\mD}{\mc D}

\newcommand{\mG}{\mc G}

\newcommand{\Top}{\Bf{Top}}
\newcommand{\Meas}{\Bf{Meas}}
\newcommand{\Ran}[1]{\Bf{Ran}_{#1}}

\newcommand{\Lan}[1]{\Bf{Lan}_{#1}}

\newcommand{\Pre}{\Bf{Pre}}

\newcommand{\Up}{\Bf{Up}}

\newcommand{\Lift}{\Bf{Lift}}

\newcommand{\Cls}{\Bf{Cls}}
\newcommand{\Split}{\Bf{Split}}

\newcommand{\toptop}{{\top\top}}
\newcommand{\TT}[2][-]{\ifx#1- #2^{\top\!\!\top} \else #2^{\top\!\!\top(#1)} \fi}
\newcommand{\BRel}{\Bf{BRel}}
\newcommand{\ERel}{\Bf{ERel}}
\newcommand{\BX}{X}
\newcommand{\BY}{Y}
\newcommand{\open}[1]{\mc O_{#1}}
\newcommand{\meas}[1]{\Sigma_{#1}}
\newcommand{\liftparam}[6]{
  \xymatrix@=1pc{#2_{#3} & #4 \ar[r]^-{#6} \ar[l]_-{#5} & #1}
}
\newcommand{\liftparamem}[6]{
  \xymatrix@=1pc{#2^{#3} & #4 \ar[r]^-{#6} \ar[l]_-{#5} & #1}
}

\newcommand{\pbmark}[1][dr]{\save*!/#1-1.2pc/#1:(-1,1)@^{|-}\restore}

\newcommand{\Act}{\mathrm{Act}}

\theoremstyle{plain}
\newtheorem{proposition}[thm]{Proposition}
\newtheorem{theorem}[thm]{Theorem}
\newtheorem{lemma}[thm]{Lemma}
\newtheorem{corollary}[thm]{Corollary}
\theoremstyle{definition}
\newtheorem{example}[thm]{Example}
\newtheorem{definition}[thm]{Definition}


\begin{document}

\title[Codensity Lifting of Monads and Its Dual]{Codensity Lifting of Monads and Its Dual}

\author[S. Katsumata]{Shin-ya Katsumata\rsuper{a}}
\address{\lsuper{a}Kyoto University, Kitashirakawaoiwakecho, Sakyoku, Kyoto, 606-8502, Japan}
\email{\{satoutet,sinya\}@kurims.kyoto-u.ac.jp}
\thanks{Katsumata's current affiliation is National Institute of Informatics, 2-1-2 Hitotsubashi, Chiyodaku, Tokyo, 100-0003, Japan.}

\author[T. Sato]{Tetsuya Sato\rsuper{a}}
\thanks{Sato's current affiliation is University at Buffalo, the State Universiy of New York, 338Y Davis Hall, Buffalo, NY 14260-2500, USA}

\author[T. Uustalu]{Tarmo Uustalu\rsuper{b}}
\address{\lsuper{b}Dept.\ of Software Science, Tallinn University of Technology, 
Akadeemia tee 21B, 12618 Tallinn, Estonia}
\email{tarmo@cs.ioc.ee}
\thanks{Uustalu is now also with School of Computer Science, 
Reykjav\'ik University, Menntavegi 1, 101 Reykjav\'ik, Iceland.}

\keywords{Monad, Comonad, Lifting, Fibration, Giry Monad}
\subjclass{F.3.2 Semantics of Programming Languages}


\begin{abstract}
  We introduce a method to lift monads on the base category of a
  fibration to its total category.  This method, which we call {\em
    codensity lifting}, is applicable to various fibrations which were
  not supported by its precursor, categorical $\top\top$-lifting.
  After introducing the codensity lifting, we illustrate some examples
  of codensity liftings of monads along the fibrations from the
  category of preorders, topological spaces and extended pseudometric
  spaces to the category of sets, and also the fibration from the
  category of binary relations between measurable spaces. 
  We also introduce the dual method called {\em density lifting} of comonads. We next
  study the liftings of {\em algebraic operations} to the
  codensity liftings of monads. We also give a characterisation of the
  class of liftings of monads along posetal fibrations with fibred small
  meets as a limit of a certain large diagram.
\end{abstract}

\maketitle


\section{Introduction}
\label{sec:}

Inspired by Lindley and Stark's work on extending the concept of
reducibility candidates to monadic types \cite{lindley,lindleystark},
the first author previously introduced its semantic analogue called
{\em categorical $\top\top$-lifting} in \cite{katsumatatt}. It
constructs a lifting of a strong monad $\mT$ on the base category of a
closed-structure preserving fibration $p:\EE\arrow\BB$  to its total category.
The construction takes the inverse image of the continuation monad on
the total category along the canonical monad morphism
$\sigma:\mT\arrow(-\Arrow TR)\Arrow TR$ in the base category, which exists
for any strong monad $\mc T$:
\begin{equation}
  \label{eq:ttlift}
  \xymatrix{
    \TT \mT \ar@{.>}[rr] & & (- \Arrow S)\Arrow S \\
    \mT \ar[rr]_-\sigma & & (- \Arrow TR) \Arrow TR
  }
\end{equation}
The objects $R$ and $S$ (such that $TR=pS$) are presupposed parameters
of this $\top\top$-lifting, and by varying them we can derive various
liftings of $\mT$. The categorical $\top\top$-lifting has been used to
construct logical relations for monads \cite{katsumatarelating}
and to analyse the concept of preorders on monads \cite{DBLP:conf/fossacs/KatsumataS13}.

One key assumption for the $\top\top$-lifting to work is that the
fibration $p$ preserves the {\em closed structure}, so that the
continuation monad $(-\Arrow S)\Arrow S$ on the total category becomes
a lifting of the continuation monad $(-\Arrow TR)\Arrow TR$ on the base category. Although many such
fibrations are seen in the categorical formulations of logical
relations \cite{mitsce,hermidathesis,katsumatarelating},
requiring fibrations to preserve the closed structure of the total category
imposes a technical limitation to the applicability of the categorical
$\top\top$-lifting.  Indeed, outside the categorical semantics of type
theories, it is common to work with the categories that are not
closed. In the study of coalgebras, {\em predicate /
  relational liftings} of functors and monads are fundamental
structures to formulate modal operators and (bi)simulation relations,
and the underlying categories of them are not necessarily closed. For
instance, the category $\Meas$ of measurable spaces, which is not
cartesian closed, is used to host labelled Markov processes \cite{DBLP:journals/tcs/BreugelMOW05}. The
categorical $\top\top$-lifting does not work in such situations.

To overcome this technical limitation, in this paper we introduce an
alternative lifting method called {\em codensity lifting}. The idea is
to replace the continuation monad $(-\Arrow S)\Arrow S$ with the {\em
  codensity monad} $\Ran SS$ given by a right Kan extension.  We then
ask fibrations to preserve the right Kan extension, which is often
fulfilled when $\EE$ has and $p$ preserves limits. We demonstrate that the
codensity lifting is applicable to lift  monads on the
base categories of the following fibrations:
\begin{displaymath}
  \xymatrix@=1.2pc{
    \Pre \ar[d] & \Top \ar[d] & \ERel(\Meas) \ar[r] \ar[d] \pbmark & \BRel(\Meas) \ar[d] \ar[r] \pbmark & \Pred \ar[d] & U^*\Met \ar[r] \ar[d] \pbmark & \Met \ar[d]
    \\
    \Set & \Set & \Meas \ar[r]_-{\Delta} & \Meas^2 \ar[r]_-{(\times)\circ U^2} & \Set & \Meas \ar[r]_-U & \Set
  }
\end{displaymath}
The description of these fibrations are in order:
\begin{itemize}
\item Functors $\Pre\arrow\Set$ and $\Top\arrow\Set$ are the  forgetful functors
  from the category of preorders and that of topological spaces to $\Set$.
\item Functor $\BRel(\Meas)\arrow\Meas$ is the fibration for binary
  relations between (the carrier sets of) two measurable spaces. Functor
  $\ERel(\Meas)\arrow\Meas$ is the subfibration of
  $\BRel(\Meas)\arrow\Meas$ obtained by restricting objects in
  $\BRel(\Meas)$ to the binary relations over the same measurable
  spaces.
  
\item Functor $\Met\arrow\Set$ is the  forgetful functor from the category of
  extended pseudometric spaces and non-expansive functions between
  them. We then apply the change-of-base to it to overlay extended
  pseudometrics on measurable spaces.  This yields a fibration
  $U^*\Met\arrow\Set$.
\end{itemize}

By taking the categorical dual of the codensity lifting, we obtain the
method to lift comonads on the base category of a cofibration to its
total category. This method, which we call the {\em density lifting}
of comonads, is newly added to the conference version of this paper
\cite{calco15}.  We illustrate two examples of the density lifting of
$\Set$-comonads along the subobject fibration of $\Set$.

Another issue when we have a lifting $\dot\mT$ of a monad $\mT$ is the
liftability of {\em algebraic operations} for $\mT$ to the lifting
$\dot\mT$.  For instance, let $\dot\mT$ be a lifting of the powerset
monad $\mT_p$ on $\Set$ along the  forgetful functor
$p:\Top\arrow\Set$, which is a fibration. A typical algebraic operation for $\mT_p$ is the
union of $A$-indexed families of sets:
\begin{displaymath}
  \union^A_X:A\pitchfork T_pX\arrow T_pX,\quad
  \union^A_X(f)=\bigcup_{a\in A}f(a);
\end{displaymath}
here $\pitchfork$ denotes the {\em power}.
Then the question is whether we can ``lift'' the function $\union^A_X$ to a
continuous function of type $A\pitchfork\dot T(X,\open X)\arrow\dot
T(X,\open X)$ for every topological space $(X,\open X)$.  We show that
the liftability of algebraic operations to the codensity liftings of monads
has a good characterisation in terms of the parameters supplied to the
codensity liftings.

We are also interested in the categorical property of the {\em
  collection} of liftings of a monad $\mT$ (along a limited class of
fibrations).  We show a characterisation of the class of liftings of
$\mc T$ as a limit of a large diagram of partial orders.

\subsection{Related Work}

This paper is the journal version of \cite{calco15}. We add an
elementary introduction to the $\toptop$-lifting and the codensity lifting
(Section \ref{sec:fromtt}), and the section about the density lifting of comonads
(Section \ref{sec:den}).

In the semantics of programming languages based on typed
$\lambda$-calculi, {\em logical predicates} and {\em logical
  relations} \cite{plotkindef} have been extensively used for
establishing (relational) properties of programs. The categorical
analysis of logical relations emerged around the 90's
\cite{mitsce,mareynolds}, and its fibrational account was given by
Hermida \cite{hermidathesis}. In these works, the essence of logical
predicates and relations is identified as predicate / relational
liftings of the categorical structures corresponding to type
constructors; especially Hermida studied the construction of such
liftings in fibred category theory. Liftings of categorical structures
along a functor are later employed in a categorical treatment of
refinement types \cite{mellieszeilberger}.

One of the earliest work that introduced logical relations
(i.e., relational liftings) for monads is Filinski's PhD thesis
\cite{filinskiphd}. They play a central role in establishing
relationships between two monadic semantics of programming languages
with computational effects
\cite{filinskirepre,wandvaillancourt,filinskicompare,filinskistrovring,katsumatarelating}.
Larrecq, Lasota and Nowak gave a systematic method to lift monads
based on mono-factorisation systems \cite{lrmontyp}. 
Their method is fundamentally different
from the codensity lifting, and the relationship between these lifting
methods is still not clear.

The origin of the codensity lifting goes back to the {\em biorthogonality
technique} developed in proof theory. Girard used this technique in
various contexts, such as 1) the phase space semantics of linear logic
\cite{girardlinear}, 2) the proof of the strong normalisation of
cut-elimination in proof nets \cite{girardlinear} and 3) the
definition of types in the geometry of interaction \cite{goi}. Krivine
also used biorthogonally-closed sets of terms and stacks in his
realisability semantics of classical logic
\cite{krivineclassical}. Pitts introduced a similar technique called
$\top\top$-closure operator, and used $\top\top$-closed relations as a
substitute for admissible relations in the operational semantics of a
functional language. Abadi considered a domain-theoretic analogue of
Pitts' $\top\top$-closure operator, and compared $\top\top$-closed
relations and admissible relations \cite{Abaditt}. Lindley and Stark's
leapfrog method extends Pitts' $\top\top$-closure operator to the
construction of logical predicates for monadic types
\cite{lindley,lindleystark}. The first author gave a categorical
analogue of the leapfrog method \cite{katsumatatt}, which constructs a
lifting of a strong monad along a closed-structure preserving
fibration.

In the coalgebraic study of state transition systems and process
calculi, one way to represent a (bi)simulation relation between
coalgebras is to give a relational coalgebra with respect to a
relational lifting of the coalgebra functor
\cite{hermidajacobs,hesselinkthijs,hughesjacobs,similarq}.  When the
coalgebra functor preserves weak pullbacks, {\em Barr extension}
\cite{barrext} is often employed to derive a relational lifting of the
coalgebra functor.  In a recent work
\cite{DBLP:conf/cmcs/SprungerKDH18}, Sprunger et al. introduced the
codensity lifting of {\em $\Set$-endofunctors} along a partial order
fibration over $\Set$ with fibred small meets. The basic lifting
strategy is the same as this paper; we derive a lifting of an
endofunctor by pulling back a codensity monad along a canonical
morphism. One notable difference is that the parameter to lift an
endofunctor $F$ refers to the category of $F$-algebras. This is
because we need a substitute for the Kleisli category used in the
codensity lifting of monads.

In Section \ref{sec:example}, we illustrate a couple of definitions of
(bi)simulation relations that can be naturally expressed by the
codensity lifting of monads: one is the definition of simulation
relation on a single labelled Markov process
\cite{DBLP:journals/tcs/BreugelMOW05}, and the other is the definition
of bisimulation relation between two labelled Markov processes
\cite{BBLM2014}.

\subsection{Preliminaries}\label{sec:preliminaries}

We use white bold letters $\BB,\CC,\EE,\cdots$ to range over locally small
categories.  We sometimes identify an object in a category $\CC$ and a functor of
type $1\arrow \CC$.

We do a lot of 2-categorical
calculations in $\CAT$.  To reduce the notational burden, we omit writing the
composition operator $\circ$ between functors, or a functor and a
natural transformation. For instance, for functors $G,F,P,Q$ and a natural
transformation $\alpha:P\arrow Q$, by $G\alpha F$ we mean the natural
transformation $G(\alpha_{FI}):G\circ P\circ F(I)\arrow G\circ Q\circ
F(I)$. We use $\bul$ and $*$ for the vertical and horizontal
compositions of natural transformations, respectively.

Let $A$ be a set and $X$ be an object of a category $\CC$. A {\em power of $X$ by $A$} is a
pair of an object $A\pitchfork X$ and an $A$-indexed family of projection morphisms
$\{\pi_a:A\pitchfork X\arrow X\}_{a\in A}$. They satisfy the following universal
property: for any $A$-indexed family of morphisms $\{f_a:B\arrow X\}_{a\in A}$,
there exists a unique morphism $m:B\arrow A\pitchfork X$ such that $\pi_a\circ m=f_a$
holds for all $a\in A$. Here are some examples of powers:
\begin{enumerate}
\item When $\CC=\Set$, the function space $A\Arrow X$ and the
  evaluation function $\pi_a(f)=f(a)$ give a power of $X$ by $A$.
\item When $\CC$ has small products, the product of $A$-fold copies of
  $X$ and the associated projections give a power of $X$ by $A$.
\item When $\CC$ has powers by $A\in\Set$, any functor
  category $[\DD,\CC]$ also has powers by $A$, which can be given
  pointwisely: $(A\pitchfork F)X=A\pitchfork (FX)$.
\end{enumerate}

A {\em right Kan extension} of $F:\AA\arrow\CC$ along $G:\AA\arrow\DD$
is a pair of a functor $\Ran GF:\DD\arrow\CC$ and a natural
transformation $c:(\Ran GF)G\arrow F$ making the mapping
$\ul{(-)}:[\DD,\CC](H,\Ran GF)\arrow [\AA,\CC](HG,F)$ defined by
\begin{displaymath}
  \ul{(-)}(\alpha)=c\bul(\alpha G)
\end{displaymath}
bijective and natural in $H\in[\DD,\CC]$.
A functor $p:\CC\arrow\CC'$
{\em preserves} a right Kan extension $(\Ran GF,c)$ if $(p(\Ran
GF),pc)$ is a right Kan extension of $pF$ along $G$. Thus for any
right Kan extension $(\Ran G{(pF)},c')$ of $pF$ along $G$, we have $p(\Ran
GF)\simeq\Ran G{(pF)}$ by the universal property.

Let $\mT$ be a monad on a category $\CC$. Its components are denoted
by $(T,\eta,\mu)$. The monad induces the {\em Kleisli lifting}
$(-)^\#:\CC(I,TJ)\arrow\CC(TI,TJ)$ defined by $ f^\#=\mu_J\circ Tf $.
We write $J:\CC\arrow\CC_\mT$ and $K:\CC_{\mc T}\arrow\CC$ for the
left and right adjoint of the Kleisli resolution of $\mT$,
respectively. We also write $\epsilon:JK\arrow\Id_{\CC}$ for the
counit of this adjunction. When $\mT$ is decorated with an extra
symbol, like $\dot\mT$, the same decoration is applied to the
components of $\mT$ and the notation for the Kleisli adjunction, like
$\dot\eta,\dot J,\dot\epsilon$, etc.

For the definition of fibrations and related concepts,
see \cite{jacobscltt}.



\section{From $\toptop$-Lifting to Codensity Lifting}
\label{sec:fromtt}

Before introducing the codensity lifting, we first briefly review its
precursor, the {{\em semantic $\toptop$-lifting}} \cite{katsumatatt}. It is a semantic
analogue of Lindley and Stark's {{\em leapfrog method}} \cite{lindleystark,lindley}, and constructs a
{{\em logical predicate}} for a monad $\mT= (T, \eta, \mu)$ on
$\Set$. Below, by a {{\em predicate}} we mean a pair $(X, I)$ of sets such that
$X \subseteq I$.

The semantic $\toptop$-lifting takes a {\em lifting parameter}, which is a pair
of a set $R$ and a predicate $(S, T R)$. Fix such a parameter. The
semantic $\toptop$-lifting is defined as a mapping of a predicate
$(X, I)$ to the predicate $(\TT T X, T I)$, where $\TT T X$ is
constructed in two steps:
\begin{eqnarray}
  T^{\top} X & = & \{ k \in I \Rightarrow R~|~\fa{x \in X}k (x) \in S \} \nonumber \\
  \TT T X & = & \{ c \in T I~|~\fa{k \in T^{\top} X}k^{\#} (c)
  \in S \} . \label{eq:ttlhs}
\end{eqnarray}

Regarding monads as models of computational effects
{\cite{moggicomputational}}, the above two steps may be intuitively
understood as follows. We think of the parameter $R$ as the type of
return values of {\em continuations} (which corresponds to {{\em stack
    frames}} in operational semantics
{\cite{parpolyopeq,lindleystark}}), and the parameter $S$ as a
specification of ``good computations'' over $R$.  Now let $(X, I)$ be
a predicate, which we regard as a set $I$ of values with a
specification $X$ of ``good values''. Then $T^{\top} X$ collects all
the continuations that send good values to good computations, and
$\TT T X$ collects all the computations over $I$ that yield good
computations when passed to continuations in $T^{\top}
X$. Overall, we regard the semantic $\toptop$-lifting as a process to
collect a set $\TT T X$ of good computations from a given set $X$ of
good values. The semantic $\toptop$-lifting is suitable for the
construction of logical predicates for monadic types \cite[Theorem
3.8]{katsumatatt}.

The semantic $\toptop$-lifting can further be formulated in fibred
category theory. To illustrate this, let us introduce the category
$\Pred$, where an object is a predicate and a morphisms from $(X, I)$
to $(Y, J)$ is a function $f : I \rightarrow J$ that maps elements in
$X$ to those in $Y$. The evident forgetful functor
$p : \Pred \rightarrow \Set$ is a partial order fibration: the inverse image of a
predicate $(X, I)$ along a function $f : J \rightarrow I$ is the
predicate $(\{ j~|~f (j) \in X \}, J)$, which we denote by
$f^{\ast} (X, I)$; see \cite[Chapter 0]{jacobscltt} for more detail. Moreover, the
following facts are known: 1) The category $\Pred$ is cartesian
closed. The following gives exponentials in $\Pred$:
\[ (X, I) \dot{\Rightarrow} (Y, J) = (\{ f~|~ \forall x \in X.f (x)
  \in Y \}, I \Rightarrow J) , \] and they are strictly preserved by
$p$ \cite{hermidathesis}. 2) Monads on $\Set$ are always equipped with
the {{\em bind morphism}}
$\sigma : T X \rightarrow (X \Rightarrow T R) \Rightarrow T R$ given
by $\sigma (c) = \lambda k.k^{\#} (c)$, which is derivable from the {\em
  canonical strength} of monads on $\Set$ \cite{moggicomputational}.
Then both $T^\top X$ and $\TT TX$
can be computed using these categorical facts:
\begin{eqnarray}
  (T^{\top} X, I \Rightarrow T R)& = & (X, I) \dArrow (S, T R) \nonumber \\
  (\TT T X, T R) & = & \sigma^{\ast} (((X, I) \dArrow (S, T R)) \dArrow (S, T R)). \label{eq:ttchar}
\end{eqnarray}
We can also deduce from this characterisation that $\TT T$ extends to
a {\em lifting} (Definition \ref{def:lift}) of the monad $\mT$ along
the fibration $p:\Pred\arrow\Set$. A more conceptual reading of
\eqref{eq:ttchar} is that the semantic $\toptop$-lifting is the
inverse image of the continuation monad on $\Pred$ along $\sigma$, as
depicted in \eqref{eq:ttlift}. The right hand side of
\eqref{eq:ttchar} can be computed in a more general situation where
$p$ is a symmetric monoidal closed fibration of type $\EE\arrow\BB$
\footnote{ That is, $\EE, \BB$ are symmetric monoidal closed, and $p$
  strictly preserves the symmetric monoidal closed structure.  } and
$\mT$ is a strong monad on $\BB$. This is the {\em categorical
  $\toptop$-lifting} in \cite[Section 4]{katsumatatt}.

In this paper, we pursue the lifting method based on an alternative
characterisation of the semantic $\toptop$-lifting. We observe that: 1)
the set $T^{\top} X$ is identical to the homset
$\Pred ((X, I), (S, T R))$, and 2) the universal quantification
$\forall k \in T^{\top} X$ in the definition of $\TT T X$ can be
extracted as the intersection of predicates, which corresponds to the
{{\em fibred meet} of the partial order fibration $p : \Pred \arrow
  \Set$. From these observations, we can characterise $\TT T X$ as the fibred meet
  of inverse images:
\begin{equation}
  \label{eq:predcoden}
  (\TT T X,TR) = \bigwedge_{k \in \Pred ((X, I), (S, T R))} ((p k)^{\#})^{*} (S,TR);
\end{equation}
here $\bigwedge$ stands for the fibred meet.

The above presentation of $\TT T X$ in the language of fibred category
theory leads us to {\em adopt} \eqref{eq:predcoden} as the generalised
definition of $\TT T X$ for any partial order fibration
$p : \EE \arrow \BB$ with fibred
meets 
and monad $\mT$ on $\BB$.  Since $p$ need not to be closed-structure
preserving, the generalised definition makes sense in a wide range of
such fibrations, including forgetful functors from the category of
preorders, topological spaces and metric spaces to $\Set$. However,
unlike the categorical $\toptop$-lifting, the meaning of the
generalised definition is not very clear, because it is a direct
encoding of the right hand side of \eqref{eq:ttlhs} in fibred category
theory. In the next section, we introduce the {\em codensity lifting}
for general fibrations, which has a conceptually clear definition
using the {\em codensity monad}. Then in Proposition \ref{pp:single}
we show that the codensity lifting reduces to the right hand side of
\eqref{eq:predcoden} when the lifting parameter is single, and the
fibration has sufficient limits.

\section{Codensity Lifting of Monads}
\label{sec:codense}

Fix a fibration $p:\EE\arrow\BB$ and a monad $\mT$ on
$\BB$. We formally introduce the main subject of this study, {\em liftings} of $\mT$.
\begin{definition}\label{def:lift}
  A {\em lifting} of $\mT$ (along $p$) is a monad $\dot\mT$
  on $\EE$ such that $p\dot T=Tp,p\dot\eta=\eta p$ and $p\dot\mu=\mu p$.
\end{definition}
We do not require {\em fibredness} on $\dot\mT$. The
codensity lifting is a method to construct a lifting of $\mT$ from
the following data called {\em lifting parameter}.
\begin{definition}\label{def:lp}
  A {\em lifting parameter} (for $\mc T$) is a span
  $\liftparam{\EE}{\BB}{\mc T}{\AA}{R}{S}$
  of functors such that $KR=pS$.
  We say that it is {\em single} if $\AA=1$.
\end{definition}
Any single lifting parameter can be written as $(JR,S)$ for some
$R\in\BB$. We therefore call a pair $(R,S)$ of $R\in\BB$ and $S\in\EE_{TR}$ a single
lifting parameter too. This is the same data used in the single-result
categorical $\toptop$-lifting in \cite{katsumatatt}.

Fix a lifting parameter $\liftparam{\EE}{\BB}{\mc T}{\AA}{R}{S}$.  In
this section we introduce the {\em codensity lifting} under the
assumption that the fibration $p:\EE\arrow\BB$ and the functor $S$ of
the lifting parameter satisfy the following {\em codensity condition}.
\begin{definition}\label{def:cc}
  We say that a fibration $p:\EE\arrow\BB$ and
  a functor $S:\AA\arrow\EE$  satisfy the {\em codensity condition} if
  \begin{enumerate}
  \item a right Kan extension of $S$ along $S$ exists, and
  \item $p:\EE\arrow\BB$ preserves this right Kan extension.
  \end{enumerate}
\end{definition}
We give some sufficient conditions for $(p,S)$ to satisfy the
codensity condition.
\begin{proposition}\label{pp:coden}
  Let $p:\EE\arrow\BB$ be a fibration and $S:\AA\arrow\EE$ be a functor.
  The following are sufficient conditions for $(p,S)$ to satisfy
  the codensity condition:
  \begin{enumerate}
  \item $\EE$ has, and $p$ preserves powers, and $\AA=1$.
  \item $\EE$ has, and $p$ preserves small products, and $\AA$ is small discrete.
  \item $\EE$ has, and $p$ preserves small limits, and $\AA$ is small.
  \item $S$ is a right adjoint.
  \end{enumerate}
\end{proposition}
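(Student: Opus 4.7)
The plan is to construct the right Kan extension $\Ran SS$ pointwise in cases (1)--(3), and by the adjoint characterisation in case (4). In each situation, the hypotheses on $p$ make preservation essentially immediate once existence in $\EE$ is established.

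For cases (1)--(3), I would invoke the pointwise formula for right Kan extensions: for each $X \in \EE$,
\[
(\Ran SS)(X) = \lim\bigl((X \downarrow S) \xrightarrow{\pi} \AA \xrightarrow{S} \EE\bigr),
\]
where $\pi$ is the projection from the comma category. In case (3), smallness of $\AA$ together with local smallness of $\EE$ makes each $X \downarrow S$ a small category; completeness of $\EE$ then furnishes the limit, and preservation of small limits by $p$ transports the pointwise construction through, so that $(p(\Ran SS), pc)$ is a right Kan extension of $pS$ along $S$. In case (2), $\AA$ being small discrete makes each $X \downarrow S$ discrete, and the limit collapses to $\prod_{a \in \AA} \EE(X, Sa) \pitchfork Sa$; this is a small product of powers by small sets, which is itself a small product (a power by $A$ is an $A$-fold product of copies), so existence and preservation of small products suffice. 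Case (1) is the $\AA = 1$ specialisation, collapsing the expression further to the single power $\EE(X, S) \pitchfork S$, for which only powers need exist and be preserved.

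For case (4), I would appeal to the standard fact that right Kan extension along a right adjoint is precomposition with the left adjoint: if $L \dashv S$ with counit $\epsilon : LS \arrow \Id$, then $(FL, F\epsilon)$ is a right Kan extension of $F$ along $S$ for any functor $F$. Taking $F = S$ gives $\Ran SS \simeq SL$ with unit $S\epsilon$; applying the same fact to $F = pS$ yields $\Ran S{(pS)} \simeq pSL \simeq p(\Ran SS)$ via the same counit, so $p$ preserves the Kan extension without any further hypothesis on $\EE$ or $p$.

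The only substantive ingredient is the pointwise formula for right Kan extensions, which is a routine consequence of smallness of $X \downarrow S$; cases (1) and (2) are then immediate specialisations of case (3) to coarser limit shapes, and case (4) is independent of the others. I do not anticipate any serious obstacle.
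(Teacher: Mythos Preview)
Your proposal is correct and follows essentially the same approach as the paper: the paper dismisses (1)--(3) as ``immediate'' (the pointwise formula you spell out is precisely what is intended), and for (4) it uses the same fact that $\Ran SF \simeq FL$ when $L \dashv S$, phrasing preservation via the standard observation that such a Kan extension is \emph{absolute} (preserved by every functor), which is exactly what your direct verification for $F = pS$ amounts to.
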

\begin{proof}
  (1-3) are immediate. (4) Let $P$ be a left adjoint of $S$. Then
  the assignment $F\mapsto FP$ extends to a right Kan extension of $F$
  along $S$. This Kan extension is {\em absolute} \cite[Proposition
  X.7.3]{cwm2}.
\end{proof}
\newcommand{\counit}[1]{c}
\newcommand{\munit}[1]{u}
\newcommand{\multi}[1]{m}

Assume that $(p,S)$ satisfies the codensity condition. We take a right
Kan extension $(\Ran SS,\counit S:(\Ran SS)S\arrow S)$, and equip it
with the following monad structure: the unit
$\munit S:\Id\arrow\Ran SS$ and multiplication
$\multi S:(\Ran SS)(\Ran SS)\arrow\Ran SS$ are respectively unique
natural transformations such that $\counit S\bul \munit SS=\id_S$ and
$\counit S\bul \multi SS=\counit S\bul(\Ran SS)\counit S$.  This is
the {\em codensity monad} \cite[Exercise X.7.3]{cwm2}.

Since $p$ preserves the right Kan extension $\Ran SS$,
$(p(\Ran SS),p\counit S)$ is a right Kan extension of $pS$ along
$S$. Thus the mapping
$\ul{(-)}:[\EE,\BB](H,p(\Ran SS))\arrow [\AA,\BB](HS,pS)$ defined by
\begin{displaymath}
  \ul{(-)}=p\counit S\bul -S
\end{displaymath}
is bijective and natural on $H:\EE\arrow\BB$.  We denote its inverse
by $\ol{(-)}$, and call $\ol f$ the {\em mate} of $f$.

The codensity lifting constructs a lifting $\TT\mT=(\TT T,\TT\eta,\TT\mu)$
of $\mT$ along $p$ as follows.

\subsection*{Lifting the Endofunctor $T$}

We first regard $K\epsilon R$ as a natural transformation of type
$TpS\arrow pS$. We then apply the function $\ol{(-)}$ to it:
\begin{displaymath}
  \infer{\ol{K\epsilon R}:Tp\arrow p(\Ran SS)}{
    K\epsilon R:TpS=KJpS=KJKR\arrow KR=pS}
\end{displaymath}
We next take its cartesian lifting with respect to $\Ran SS$.
\begin{displaymath}
  \xymatrix{
    \TT T  \ar@{.>}[rr]^-{\sigma} & & \Ran SS & [\EE,\EE] \ar[d]^-{[\EE,p]} \\
    Tp \ar[rr]_-{\ol{K\epsilon R}} & & p(\Ran SS) & [\EE,\BB]
  }
\end{displaymath}
This is possible because $[\EE,p]:[\EE,\EE]\arrow[\EE,\BB]$ is again a
fibration. We name the cartesian lifting $\sigma$, and its domain
$\TT T$. We have $p\TT T=[\EE,p]\TT T=Tp$.


\subsection*{Lifting the Unit $\eta$}
Consider the following diagram:
\begin{displaymath}
  \xymatrix{
    \Id_\EE \ar@{.>}[rd]_-{\TT\eta} \ar@/^1pc/[rrrd]^-{\munit S} \\
    & \TT T \ar[rr]_-\sigma & & \Ran SS & [\EE,\EE] \ar[dd]^-{[\EE,p]} \\
    p \ar[rd]_-{\eta p} \ar@/^1pc/[rrrd]^-{p\munit S}  \\
    & Tp \ar[rr]_-{\ol{K\epsilon R}} & & p(\Ran SS) & [\EE,\BB]
  }
\end{displaymath}
The triangle in the base category commutes by:
\begin{displaymath}
  \ol{K\epsilon R}\bul\eta p=
  \ol{K\epsilon R\bul\eta pS}=
  \ol{K\epsilon R\bul\eta KR}=
  \ol{\id_{KR}}=
  \ol{\id_{pS}}=
  p\munit S.
\end{displaymath}
Therefore from the universal property of $\sigma$, we obtain the
unique natural transformation $\TT\eta$ above $\eta p$ making the
triangle in the total category commute.

\subsection*{Lifting the Multiplication $\mu$}

Consider the following diagram.
\begin{displaymath}
  \xymatrix{
    \TT T\TT T \ar@{.>}[rd]_-{\TT\mu} \ar[r]^-{\TT T\sigma} & \TT T\Ran SS \ar[rr]^-{\sigma\Ran SS} & & (\Ran SS)\Ran SS \ar@/^1pc/[rd]^-{\multi S} \\
    & \TT T \ar[rrr]^-\sigma & & & \Ran SS & [\EE,\EE] \ar[dd]^-{[\EE,p]} \\
    TTp \ar[r]^-{T\ol{K\epsilon R}} \ar[rd]_-{\mu p} & Tp(\Ran SS) \ar[rr]^-{\ol{K\epsilon R}\Ran SS} & & p(\Ran SS)\Ran SS \ar@/^1pc/[rd]^-{p\multi S} \\
    & Tp \ar[rrr]_-{\ol{K\epsilon R}} & & & p(\Ran SS) & [\EE,\BB]
  }
\end{displaymath}
The pentagon in the base category commutes by:
\begin{eqnarray*}
  \ul{p \multi S \bul \overline{K\epsilon R} \Ran SS \bul T \ol{K\epsilon R}}
  & = & p \counit S \bul p ( \Ran SS ) \counit S \bul \ol{K\epsilon R} ( \Ran SS ) S \bul T \ol{K\epsilon R} S\\
  (\text{interchange law})& = & p \counit S \bul \ol{K\epsilon R} S \bul T p \counit S \bul T \ol{K\epsilon R} S = K\epsilon R \bul KJ K\epsilon R \\
  & = & K\epsilon R\bul \mu KR =  K\epsilon R\bul \mu pS  =  \ul{\ol{K\epsilon R}\bul \mu p}.
\end{eqnarray*}
Therefore from the universal property of $\sigma$,
we obtain the unique morphism $\TT\mu$ above $\mu p$ making
the pentagon in the total category commute.

\begin{theorem}\label{th:firstlift}
  Let $p:\EE\arrow\BB$ be a fibration, $\mT$ be a monad on $\BB$,
  $\liftparam \EE\BB\mT\AA RS$ be a lifting parameter for $\mT$, and
  assume that $(p,S)$ satisfies the codensity condition. The tuple
  $\TT{\mT}=(\TT T,\TT\eta,\TT\mu)$ constructed as above is a lifting
  of $\mT$ along $p$.
\end{theorem}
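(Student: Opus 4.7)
The theorem has two components to verify: first, the compatibility conditions $p \TT T = Tp$, $p \TT\eta = \eta p$, $p \TT\mu = \mu p$, which say that $\TT\mT$ sits above $\mT$; and second, the monad axioms for $\TT\mT$ itself. The compatibility conditions are immediate from the construction, since $\TT T$ was obtained as the domain of a cartesian lifting above $Tp$, and $\TT\eta$ and $\TT\mu$ were obtained as the unique morphisms above $\eta p$ and $\mu p$ respectively. Naturality of $\TT\eta$ and $\TT\mu$ requires no extra argument either, because the universal property of the cartesian lifting $\sigma$ was invoked inside the functor category $[\EE,\EE]$, so the induced morphisms are automatically natural transformations between the right functors.

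The conceptual core of the proof is to recognise that the two commuting diagrams in the base that appear during the construction (the triangle defining $\TT\eta$ and the pentagon defining $\TT\mu$) say exactly that $\ol{K\epsilon R}$ is a monad morphism from $(Tp,\eta p,\mu p)$ to $(p\Ran SS,p\munit S,p\multi S)$ in $[\EE,\BB]$, while the defining equations $\sigma \bul \TT\eta = \munit S$ and $\sigma \bul \TT\mu = \multi S \bul \sigma\Ran SS \bul \TT T \sigma$ in the total category express that $\sigma$ intertwines the candidate monad $\TT\mT$ with the codensity monad $(\Ran SS,\munit S,\multi S)$. Morally, $\TT\mT$ is the pullback of the codensity monad along $\ol{K\epsilon R}$ in the functor-category fibration $[\EE,p]:[\EE,\EE]\arrow[\EE,\BB]$, and the proof simply has to make this pullback argument explicit.

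To verify each of the three monad laws for $\TT\mT$, my plan is to appeal to the universal property of the cartesian morphism $\sigma$: two parallel arrows into $\TT T$ coincide precisely when their $p$-images agree and their post-compositions with $\sigma$ agree. The $p$-image condition holds for all three laws directly from the monadicity of $\mT$ on $\BB$. For the $\sigma$-image condition, I would rewrite $\sigma$ composed with each side using the defining equations for $\TT\eta$ and $\TT\mu$ together with naturality of $\sigma$ in $[\EE,\EE]$ (for instance, $\TT T \sigma \bul \TT\eta \TT T = \TT\eta \Ran SS \bul \sigma$ in the left unit case), reducing each identity to the corresponding law for the codensity monad $(\Ran SS,\munit S,\multi S)$. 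The main obstacle here is notational rather than mathematical: carefully tracking horizontal versus vertical composition in expressions such as $\sigma\Ran SS \bul \TT T\sigma$ and ensuring the whiskerings line up on the correct functors. Once this bookkeeping is done, the monad axioms for $\TT\mT$ fall out of those of $(\Ran SS,\munit S,\multi S)$ and the cartesianness of $\sigma$.
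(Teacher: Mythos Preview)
Your proposal is correct and follows essentially the same route as the paper: reduce each monad axiom to an equality of morphisms into $\TT T$, then use the universal property of the cartesian morphism $\sigma$ to replace the comparison by one of post-compositions with $\sigma$, which in turn collapses via the defining equations of $\TT\eta,\TT\mu$ and the interchange law to the corresponding axiom for the codensity monad $(\Ran SS,\munit S,\multi S)$. The only difference is expository: you articulate the monad-morphism/pullback reading of the construction explicitly, whereas the paper leaves this implicit and proceeds directly to the three $\sigma$-equations.
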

\begin{proof}
  From the universal property of the cartesian morphism $\sigma$, it
  suffices to show the following three equalities:
  \begin{displaymath}
    \sigma\bul\TT\mu\bul\TT T\TT\eta=\sigma,\quad
    \sigma\bul\TT\mu\bul\TT\eta\TT T=\sigma,\quad
    \sigma\bul\TT\mu\bul\TT\mu\TT T=\sigma\bul\TT\mu\bul\TT T\TT\mu.
  \end{displaymath}
  They are easily shown from the definition of $\TT\eta$ and
  $\TT\mu$. For instance,
  \[
  \begin{array}[b]{rcl}
    \sigma\bul\TT\mu\bul\TT T\TT\eta & = &
    \multi S\bul\sigma(\Ran SS)\bul\TT T\sigma\bul\TT T\TT\eta=
    \multi S\bul\sigma(\Ran SS)\bul\TT T\munit S \\
    \text{(interchange law)} & = & \multi S\bul(\Ran SS)\munit S\bul\sigma=\sigma.
  \end{array}
  \tag*{\qedhere}
  \]
\end{proof}
\begin{corollary}\label{co:monadmor}
  The cartesian morphism $\sigma:\TT T\arrow\Ran SS$ is a monad morphism.
\end{corollary}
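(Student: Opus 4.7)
The plan is to observe that Corollary~\ref{co:monadmor} is essentially a restatement of the construction in the proof of Theorem~\ref{th:firstlift}. To say that $\sigma : \TT T \arrow \Ran SS$ is a monad morphism from $\TT{\mT}$ to the codensity monad $(\Ran SS, \munit S, \multi S)$ means exactly that the two equations
\begin{displaymath}
  \sigma \bul \TT\eta = \munit S,
  \qquad
  \sigma \bul \TT\mu = \multi S \bul \sigma \Ran SS \bul \TT T \sigma
\end{displaymath}
hold as natural transformations in $[\EE,\EE]$.

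Both of these equations are already on the table. The first is the commuting triangle in the total category that was used, via the universal property of the cartesian morphism $\sigma$, to single out $\TT\eta$ as the unique lift of $\eta p$ with $\sigma \bul \TT\eta = \munit S$. Likewise, the second is the commuting pentagon in the total category that, via the same universal property, singled out $\TT\mu$ as the unique lift of $\mu p$ with $\sigma \bul \TT\mu = \multi S \bul \sigma \Ran SS \bul \TT T \sigma$. Thus no further computation is needed: the monad-morphism laws for $\sigma$ are literally the defining equations of $\TT\eta$ and $\TT\mu$.

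Accordingly, I do not expect any real obstacle; the only point worth spelling out, if at all, is that the horizontal composite $\sigma * \sigma$ in $[\EE,\EE]$ is indeed given by $\sigma \Ran SS \bul \TT T \sigma$, which is an instance of the interchange law in $\CAT$. The content of the corollary is really a reading of the construction in Theorem~\ref{th:firstlift} from the perspective of monad morphisms.
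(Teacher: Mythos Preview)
Your proposal is correct and matches the paper's approach: the paper states this corollary without proof, precisely because the two monad-morphism equations are the defining equations of $\TT\eta$ and $\TT\mu$ in the construction preceding Theorem~\ref{th:firstlift}. Your write-up actually spells out more than the paper does, but the content is the same.
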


Any lifting of $\mT$ along $p$ can be obtained by the codensity lifting, although
the choice of the lifting parameter is rather canonical.
\begin{theorem}\label{th:comp}
  Let $p:\EE\arrow\BB$ be a fibration, $\mT$ be a monad on $\BB$ and
  ${\dot\mT}$ be a lifting of $\mT$.
  Then there exists a lifting parameter $R,S$ such that $(p,S)$
  satisfies the codensity condition and $\dot\mT\simeq \TT \mT$.
\end{theorem}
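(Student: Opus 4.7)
The plan is to use $\dot\mT$'s own Kleisli category as the parameter domain. Take $\AA = \EE_{\dot\mT}$, $S = \dot K$, and define $R = \hat p : \EE_{\dot\mT} \to \BB_\mT$ to act as $p$ both on objects and on underlying morphisms: a Kleisli morphism $f : X \to \dot T Y$ is sent to $pf : pX \to TpY$, regarded as a morphism $pX \to pY$ in $\BB_\mT$. The three lifting equations $p\dot T = Tp$, $p\dot\eta = \eta p$, $p\dot\mu = \mu p$ give, respectively, that $\hat p$ is well-defined on objects, preserves identities and preserves composition; they also give $K \hat p = p \dot K$, so $(\hat p, \dot K)$ is a lifting parameter for $\mT$ in the sense of Definition \ref{def:lp}.

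For the codensity condition I invoke Proposition \ref{pp:coden}(4): $\dot K$ is right adjoint to $\dot J$, hence $\Ran{\dot K}{\dot K}$ exists, is absolute, and equals $\dot K \dot J = \dot T$ with counit $\counit S = \dot K \dot \epsilon$. Being absolute, this Kan extension is preserved by every functor out of $\EE$, in particular by $p$. A routine check using the triangle identities and the naturality of $\dot\epsilon$ shows that the induced codensity-monad structure $(\munit S, \multi S)$ on $\dot T$ coincides with $(\dot\eta, \dot\mu)$: this is the familiar fact that the codensity monad of a right adjoint is the monad generated by the adjunction.

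To show $\TT\mT \simeq \dot\mT$, the key observation is the identity $\hat p \dot\epsilon = \epsilon \hat p$. Unfolding, $\dot\epsilon_X$ is $\id_{\dot T X}$ regarded as a Kleisli morphism $\dot T X \to X$ in $\EE_{\dot\mT}$, so $\hat p \dot\epsilon_X$ is $\id_{TpX}$ regarded as a Kleisli morphism $TpX \to pX$ in $\BB_\mT$, which is precisely $\epsilon_{\hat p X}$. Therefore $p \counit S = p \dot K \dot \epsilon = K \hat p \dot\epsilon = K \epsilon \hat p = K \epsilon R$, and the mate $\overline{K \epsilon R}$ is $\id_{Tp}$. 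Since $Tp = p \dot T$, we may choose $\TT T = \dot T$ and $\sigma = \id_{\dot T}$ as the cartesian lifting (cartesian lifts of identities are invertible, so any other choice is canonically isomorphic). The defining equations $\sigma \bul \TT\eta = \munit S$ and $\sigma \bul \TT\mu = \multi S \bul \sigma\Ran SS \bul \TT T \sigma$ then force $\TT\eta = \dot\eta$ and $\TT\mu = \dot\mu$.

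The main obstacle, which is bookkeeping rather than conceptual, is juggling the two Kleisli categories $\EE_{\dot\mT}$ and $\BB_\mT$ together with the mate correspondence; once the codensity monad of $\dot K$ is recognised as $\dot\mT$ itself and the counits are matched using the lifting equations, the identification $\TT\mT \simeq \dot\mT$ is immediate.
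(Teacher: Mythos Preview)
Your proof is correct and follows exactly the paper's approach: your $\hat p$ is the paper's canonical functor $p_k:\EE_{\dot\mT}\to\BB_\mT$, and both arguments use Proposition~\ref{pp:coden}(4) for the codensity condition and then observe that $\ol{K\epsilon R}=\id_{Tp}$, forcing $\TT\mT\simeq\dot\mT$. You simply spell out in more detail the step $\hat p\dot\epsilon=\epsilon\hat p$ that the paper leaves implicit when asserting that $\ol{K\epsilon p_k}$ is the identity.
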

\begin{proof}
  We write $p_k:\EE_{{\dot\mT}}\arrow\BB_\mT$ for the canonical
  functor extending $p:\EE\arrow\BB$ to Kleisli categories. Then the
  span $\liftparam{\EE}{\BB}{\mT}{\EE_{\dot\mT}}{p_k}{\dot K}$ is a
  lifting parameter that satisfies the codensity condition by
  Proposition \ref{pp:coden}.
  Since $\dot T=\dot K\dot J$, $(\dot T,\dot K\dot \epsilon)$ is a right Kan extension of $\dot K$ along $\dot K$,
  and this is preserved by $p$.
  Moreover, the morphism
  $\ol{K\epsilon p_k}:Tp\arrow p\dot T=Tp$ becomes the identity
  morphism. Hence ${\dot\mT}$ is isomorphic to $\TT\mT$.
\end{proof}

The codensity lifting is given with respect to the Kleisli resolution
$J\dashv K:\CC_\mT\arrow\CC$ of $\mT$. In fact, we can replace it with
the Eilenberg-Moore resolution $J'\dashv K':\CC^\mT\arrow\CC$ of
$\mT$, because the initiality of the Kleisli resolution of $\mT$ is
irrelevant in the codensity lifting. This replacement affects the
argument in this section as follows:
\begin{itemize}
\item A lifting parameter becomes a pair
  $\liftparamem{\EE}{\BB}{\mc T}{\AA}{R}{S}$ of functors $R,S$ such
  that $pS=K'R$. The codensity condition remains the same.
\item When lifting the components of $\mT$,
  we replace $J$ with $J'$, $K$ with $K'$, and $\epsilon$ with the counit $\epsilon'$
  of $J'\dashv K'$.
\item Theorem \ref{th:firstlift} and Corollary \ref{co:monadmor}
  remains the same.
\item In the proof of Theorem \ref{th:comp}, we use the Eilenberg-Moore
  resolution $\EE^{{\dot\mT}}$ of $\dot\mT$ instead of
  $\EE_{{\dot\mT}}$. The remaining part is the same.
\end{itemize}

At this moment we do not know which resolution is better for the
codensity lifting. Theorem \ref{th:comp} shows that for deriving any
lifting of $\mT$, it is enough to use $\CC_\mT$-valued lifting
parameters; enlarging $\CC_\mT$ to $\CC^\mT$ does not increase the
expressiveness of the codensity lifting. 
In this paper we use the Kleisli
resolution in the codensity lifting.


\section{Examples of Codensity Liftings with Single Lifting Parameters}
\label{sec:example}

We illustrate codensity liftings of monads with single lifting
parameters where 1) the fibration $p:\EE\arrow\BB$ has fibred small
products and 2) $\BB$ has small products. In this situation, $\EE$
also has small products that are preserved by $p$ \cite[Exercise
9.2.4]{jacobscltt}, and any single
lifting parameter satisfies the codensity condition.  We below give a
formula to compute the codensity lifting of a monad with a
single lifting parameters.
\begin{proposition}\label{pp:single}
  Let $p:\EE\arrow\BB$ a fibration such that $p$ has fibred small
  products and $\BB$ has small products, and let $\mc T$ be a monad on
  $\BB$.  Then the functor part of the codensity lifting $\TT\mT$ of
  $\mT$ with a single lifting parameter $R\in\BB$ and $S\in\EE_{TR}$
  satisfies
  \begin{equation}
    \label{eq:general}
    \TT TX\simeq\bigwedge_{f\in\EE(X,S)}((pf)^\#)^{-1}(S),
  \end{equation}
  where $\bigwedge$ stands for the fibred product in $\EE_{T(pX)}$.
\end{proposition}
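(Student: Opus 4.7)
My plan is to directly unfold the construction of $\TT T$ from Section~\ref{sec:codense} in the special case $\AA=1$, exploiting the fact that the hypotheses imply $\EE$ has and $p$ preserves small products, hence also powers (by Exercise~9.2.4 of \cite{jacobscltt}). In particular Proposition~\ref{pp:coden}(1) applies and $(p,S)$ satisfies the codensity condition, so the codensity lifting construction is well-defined with the given single parameter.

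First I would compute the codensity monad and its mate concretely. Since $\AA=1$, the functor $S:1\arrow\EE$ picks a single object, and a right Kan extension of $S$ along itself at $X\in\EE$ is given by the power $\Ran SS(X)\simeq\EE(X,S)\pitchfork S$, with projections $\pi_f:\Ran SS(X)\arrow S$ for $f\in\EE(X,S)$ and counit $c_S=\pi_{\id_S}$. Because $p$ preserves powers, $p\Ran SS(X)\simeq\EE(X,S)\pitchfork TR$ in $\BB$. The natural transformation $K\epsilon R:TpS\arrow pS$ reduces to the single morphism $\mu_R:TTR\arrow TR$, and its mate $\ol{\mu_R}:Tp\arrow p\Ran SS$ is uniquely determined by $pc_S\circ\ol{\mu_R}_S=\mu_R$; projecting by $\pi_f$ and invoking naturality with respect to $f:X\arrow S$ in $\EE$ yields $\pi_f\circ\ol{\mu_R}_X=\mu_R\circ T(pf)=(pf)^\#$ for every $f$.

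Next I would invoke the standard two-step description of a total-category power under the hypotheses: for $X\in\EE$ with $pX=J$, the object $A\pitchfork X$ sitting above $A\pitchfork J$ is isomorphic to the fibred product $\bigwedge_{a\in A}\pi_a^*X$ in the fiber $\EE_{A\pitchfork J}$, where $\pi_a:A\pitchfork J\arrow J$ are the base projections. By construction $\TT TX$ is the cartesian lifting of $\ol{\mu_R}_X$ applied to $\Ran SS(X)$, hence $\TT TX\simeq\ol{\mu_R}_X^*(\Ran SS(X))$. Since reindexing preserves fibred products, combining the previous steps gives
\[
  \TT TX \;\simeq\; \ol{\mu_R}_X^*\bigwedge_f \pi_f^* S \;\simeq\; \bigwedge_f (\pi_f\circ\ol{\mu_R}_X)^* S \;=\; \bigwedge_{f\in\EE(X,S)}((pf)^\#)^{-1}(S),
\]
as required.

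The only conceptually substantive step is the two-step decomposition of the total-category power as a base power followed by a fibred product of reindexings along the base projections; this is precisely where the hypotheses that $p$ has fibred small products and that $\BB$ has small products do the real work. The remaining ingredients—identification of $\Ran SS$ as a power, computation of the mate, and stability of fibred products under reindexing—are routine manipulations of universal properties.
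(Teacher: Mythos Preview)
Your proof is correct and follows essentially the same route as the paper's: both identify $\Ran SS(X)$ with the power $\EE(X,S)\pitchfork S$, compute the mate $\ol{K\epsilon R}_X$ as the tupling $\langle (pf)^\#\rangle_{f\in\EE(X,S)}$ (the paper via the explicit mate formula, you via naturality and the counit equation---these are equivalent), and then pull back using the two-step decomposition of the total-category power as a fibred product of reindexings of $S$ along base projections, exactly as in \cite[Exercise~9.2.4]{jacobscltt}. The only cosmetic difference is that the paper phrases the Kan extension calculation in $\BB$ (as $\EE(-,S)\pitchfork pS$) from the start, whereas you work in $\EE$ and apply $p$; the substance is identical.
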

\begin{proof}
  We supply the lifting parameter $(JR,S)$ to the codensity liftinig (see the convention after
  Definition \ref{def:lp}).
  Let $X\in\EE$. We take the power $(\EE(X,S)\pitchfork pS,\pi)$ in
  $\BB$.  From \cite[Exercise 9.2.4]{jacobscltt}, the object
  \begin{displaymath}
    \bigwedge_{f\in\EE(X,S)}(\pi_f)^{-1}(S)
  \end{displaymath}
  together with an appropriate projection morphism is a
  power of $S$ by $\EE(X,S)$, and $p$ sends it to the power
  $(\EE(X,S)\pitchfork pS,\pi)$. Therefore the pair
  $(\EE(-,S)\pitchfork pS,\pi_{\id_S})$ is a right Kan extension of $pS$ along $S$, and
  the mate function $\ol{(-)}:\EE(FS,pS)\arrow[\EE,\BB](F,\EE(-,S)\pitchfork pS)$ of this
  right Kan extension is given by
  \begin{displaymath}
    (\ol f)_X=\angle{f\circ Fg}_{g\in\EE(X,S)}:FX\arrow \EE(X,S)\pitchfork pS.
  \end{displaymath}
  From this, we have $ (\ol{K\epsilon JR})_X=\angle{K\epsilon_{JR}\circ
    KJpg}_{g\in\EE(X,S)}=\angle{(pg)^\#}_{g\in\EE(X,S)}$. Therefore
  \begin{align*}
    \TT TX & =
                 (\angle{(pg)^\#}_{g\in\EE(X,S)})^{-1}\left(\bigwedge_{f\in\EE(X,S)}(\pi_f)^{-1}(S)\right) \\
           & \simeq
                      \bigwedge_{f\in\EE(X,S)}(\angle{(pg)^\#}_{g\in\EE(X,S)})^{-1}(\pi_f)^{-1}(S) \\
           & \simeq
                      \bigwedge_{f\in\EE(X,S)}(\pi_f\circ\angle{(pg)^\#}_{g\in\EE(X,S)})^{-1}(S) \\
           & =
                 \bigwedge_{f\in\EE(X,S)}((pf)^\#)^{-1}(S).
  \tag*{\qedhere}
  \end{align*}
\end{proof}
In the rest of this section, we instantiate the parameters of
Proposition \ref{pp:single} and identify the right hand side of
\eqref{eq:general}. All the fibrations appearing in this section have
fibred small limits, and are over categories with small limits.

\subsection{Lifting $\Set$-Monads to the Category of Preorders}
\label{sec:expre}

The forgetful functor $p:\Pre\arrow\Set$ from the category
$\Pre$ of preorders and monotone functions is a fibration with fibred
small limits. The inverse image of a preorder $(J,\le_J)$ along a
function $f:I\arrow J$ is the preorder $(I,\le_I)$ given by $i\le_I
i'\iff f(i)\le_J f(i')$. The fibred small limits are given by the
set-theoretic intersections of preorder relations. Although
the category
$\Pre$ is cartesian closed, $p$ does not preserve exponentials. Hence
the categorical $\top\top$-lifting \cite{katsumatatt} is not applicable for
lifting $\Set$-monads along $p$.

We consider the codensity lifting of a monad $\mT$ on $\Set$ along
$p:\Pre\arrow\Set$ with a single lifting parameter: a pair of
$R\in\Set$ and $S=(TR,\le)\in\Pre$.  By instantiating
\eqref{eq:general}, for every preorder $(X,\le_X)\in\Pre$ ($X$ for
short), $\TT TX$ is the preorder $(TX,\TT\le_X)$ where $\TT\le_X$ is given by
\begin{equation}
  \label{eq:ttpre}
  x\mathbin{\TT\le_X} y\iff
  \fa{f\in \Pre(X,S)}(pf)^\#(x)\le (pf)^\#(y).
\end{equation}

We further instantiate this by letting $\mc T$ be the powerset monad $\mc
T_p$, $R=1$ and $\le$ be one of the following partial orders on
$T_p1=\{\emptyset,1\}$:
\begin{enumerate}
\item Case ${\le}=\{\emptyset\le 1\}$. The homset
  $\Pre(X,S)$ is isomorphic to the set $\Up(X)$
  of upward closed subsets of $X$, and \eqref{eq:ttpre} is rewritten to:
  \begin{eqnarray*}
    x\mathbin{\TT\le_X} y & \iff &
                                   (\fa{F\in\Up(X)}x\cap F\neq\emptyset\implies y\cap F\neq\emptyset) \\
                          & \iff & \fa{i\in x}\ex{j\in y}i\le_X j,
  \end{eqnarray*}
  that is, $\TT\le_X$ is the lower preorder.
\item Case ${\le}=\{1\le\emptyset\}$. By the similar argument,
  $\mathbin{\TT\le}$ is the upper
  preorder:
  \begin{eqnarray*}
    x\mathbin{\TT\le_X} y & \iff & \fa{j\in y}\ex{i\in x}i\le_X j.
  \end{eqnarray*}
\end{enumerate}

In order to make $\TT\le$ the {\em convex preorder} on $\mc T_p$:
\begin{displaymath}
  x\mathbin{\TT\le_X} y\iff(\fa{i\in x}\ex{j\in y}i\le_X j)\wedge(\fa{j\in y}\ex{i\in x}i\le_X j),
\end{displaymath}
it suffices to supply the cotupling $\Set_{\mc T_p}\leftarrow 1+1\rightarrow\Pre$
of the above two lifting parameters to the codensity lifting.

\subsection{Lifting $\Set$-Monads to the Category of Topological Spaces}
\label{sec:extop}

The forgetful functor $p:\Top\arrow\Set$ from the category $\Top$ of
topological spaces and continuous functions is a fibration with fibred
small limits. For a topological space $(X,\open X)$ and a function
$f:Y\arrow X$, the inverse image topological space $f^{-1}(X,\open X)$ is
given by $(Y,\{f^{-1}(U)~|~U\in \open X\})$.
We note that each fibre category $\Top_X$ on a set $X$
is the poset of topologies on $X$ ordered by the coarseness,
that is, $(X,\open 1)\le (X, \open 2)$ holds if and only if $\open
2\subseteq\open 1$.

We consider the codensity lifting of a monad $\mT$ on $\Set$ along
$p:\Top\arrow\Set$ with a single lifting parameter: a pair of
$R\in\Set$ and  $S=(TR,\open S)\in\Top$. 
By instantiating \eqref{eq:general},
for every $(X,\open X)\in\Top$ ($X$ for short),
$\TT
TX$ is the topological space $(TX,\TT T\open{X})$ whose
topology $\TT T\open{X}$ is the coarsest one making
every set $((pf)^\#)^{-1}(U)$ open,
where $f$ and $U$ range over $\Top(X,S)$ and $\open S$, respectively.

We further instantiate this by letting $\mc T=\mc T_p,R=1$, and
$\open S$ be one of the following topologies on $T_p1$. The resulting
codensity liftings respectively equip $T_pX$ with the same topologies as {\em
  lower} and {\em upper Vietoris topologies}, which appear in the
construction of {\em hyperspace} \cite{nadler}.
\begin{enumerate}
\item Case $\open S=\{\emptyset,\{1\},\{\emptyset,1\}\}$.  The
  topology $\TT T_p\open X$ is the coarsest one making every set
  $\{V\subseteq pX~|~V\cap U\neq\emptyset\}$ open, where $U$ ranges
  over $\open X$. We call this  {\em lower Vietoris
    lifting}.
\item Case $\open S=\{\emptyset,\{\emptyset\},\{\emptyset,1\}\}$.
  The topology $\TT T_p\open X$ is the coarsest one making every set
  $\{V\subseteq pX~|~V\subseteq U\}$ open, where $U$ ranges over
  $\open X$. We call this {\em upper Vietoris lifting}.
\end{enumerate}
We note that the hyperspace of $X\in\Top$ has closed subsets of $X$ as
points, and therefore is {\em not} a lifting of the powerset monad $\mc T_p$.

\subsection{Simulations on LMPs by Codensity Lifting}
\label{sec:exmeas}

We next move on to the category $\Meas$ of measurable spaces and
measurable functions between them. Recall that $\Meas$ has small
limits. We introduce some notations: For $X\in\Meas$, by $\meas X$ we
mean the $\sigma$-algebra of $X$. For a topological space $X\in\Top$,
by $\mc BX\in\Meas$ we mean the Borel space associated to $X$.

Let $p:\Pred\arrow\Set$ be the subobject fibration of $\Set$. We here
explicitly give $\Pred$ as follows: an object of $\Pred$ is a pair of
sets $(X,I)$ such that $X$ is a subset of $I$, and a morphism from
$(X,I)$ to $(Y,J)$ is a function $f:I\arrow J$ such that
$f(X)\subseteq Y$. The first and second component of $X\in\Pred$ is
denoted by $X_0$ and $X_1$, respectively.  The fibration $p$ has
fibred small limits.
We  then consider the following two fibrations $q,r$ obtained by the
change-of-base of the subobject fibration $p:\Pred\arrow\Set$:
\begin{displaymath}
  \xymatrix{
    \ERel(\Meas) \ar[r] \ar[d]_-r \pbmark & \BRel(\Meas) \ar[rr] \ar[d]_-q \pbmark & & \Pred \ar[d]^-p \\
    \Meas \ar[r]_-{\Delta} & \Meas^2 \ar[r]_-{U^2} & \Set^2 \ar[r]_-{\Rm{Prod}} \ar[r] & \Set
  }
\end{displaymath}
Here, $\Delta$ is the diagonal functor and $\Rm{Prod}$ is the binary product
functor. The derived legs $q$ and $r$ are again fibrations with fibred
small limits.\footnote{$\BRel$ and $\ERel$ stand for binary
  relations and endo-relations, respectively.} An explicit description
of $\BRel(\Meas)$ is:
\begin{itemize}
\item An object $\BX$ is a triple,
  whose components are denoted by $\BX_0,\BX_1,\BX_2$,
  such that $\BX_1,\BX_2$ are measurable spaces and
  $\BX_0\subseteq U\BX_1\times U\BX_2$ is a binary
  relation between the carrier sets of $\BX_1$ and $\BX_2$.

\item A morphism $(f_1,f_2):\BX\arrow\BY$ is a pair
  of measurable functions $f_1:\BX_1\arrow\BY_1$ and
  $f_2:\BX_2\arrow\BY_2$ such that $(Uf_1\times Uf_2)(\BX_0)\subseteq
  \BY_0$.
\end{itemize}
An explicit definition of $\ERel(\Meas)$ is:
\begin{itemize}
\item An object $\BX$ is a pair,
  whose components are denoted by $\BX_0,\BX_1$,
  such that $\BX_1$ is a measurable space and
  $\BX_0\subseteq U\BX_1\times U\BX_1$ is a binary
  relation on the carrier set of $\BX_1$.

\item A morphism $f:\BX\arrow\BY$ is a
  measurable function $f:\BX_1\arrow\BY_1$
  such that $Uf_1(\BX_0)\subseteq \BY_0$.
\end{itemize}

Before proceeding further, we introduce some concepts and notations about
binary relations. For a binary relation $R\subseteq X\times Y$ and a subset
$A\subseteq X$, the image of $A$ by $R$ is defined to be the set
$\{y\in Y~|~\ex{x\in A}(x,y)\in R\}$, and is denoted by $R[A]$. For a
subset $V\subseteq I$, by $\chi_V:I\arrow [0,1]$ we mean the indicator
function defined by: $\chi_V(i)=1$ when $i\in V$ and $\chi_V(i)=0$
when $i\not\in I$.  For two binary relations $P\subseteq I\times J$
and $Q\subseteq J\times K$, by $P;Q\subseteq I\times K$ we mean the
relational composition of $P$ followed by $Q$.  We extend this
operation to $\BRel(\Meas)$-objects $X,Y$ such that $X_2=Y_1$ in the
evident way.

\newcommand{\SPMsr}{\Bf{SPMsr}} The target of the codensity lifting in
this section is the {\em sub-Giry monad} \cite{gilly}, which
we recall below.  For a measurable space $X\in\Meas$,
by $\SPMsr(X)$ we mean the set of sub-probability measures on $X$.  We
equip it with the $\sigma$-algebra generated from the sets of the
following form:
\begin{displaymath}
  \{\mu\in\SPMsr(X)~|~\mu(V)\in W\}\quad(V\in \meas X,W\in\open{[0,1]}),
\end{displaymath}
and denote this measurable space by $GX$. The assignment $X\mapsto GX$
can be extended to a monad $\mG$ on $\Meas$, called the {\em sub-Giry
  monad} \cite{gilly}. Notice that $G1$ is the Borel space
$\mc B[0,1]$ associated to the unit interval $[0,1]$ with
the subspace topology induced from the real line.

\subsubsection{Liftings of Sub-Giry Monad to $\BRel(\Meas)$ and $\ERel(\Meas)$}
We first consider the codensity lifting of the product sub-Giry monad
$\mG^2$ along the fibration $q:\BRel(\Meas)\arrow\Meas^2$ with a
single lifting parameter $R=(1,1)$ (the pair of one-point measurable
space) and $S=(S_0,G1,G1)$; here $S_0$ is a binary relation over the
unit interval $[0,1]=U(G1)$.  From \eqref{eq:general}, the codensity
lifting sends an object $\BX\in\BRel(\Meas)$ to the object $\TT
G\BX\in\BRel(\Meas)$, whose relation part is given by
\begin{displaymath}
  (\TT G\BX)_0=
  \left\{
    (v_1,v_2) ~|~
    \fa{(f,g)\in \BRel(\Meas)(X,S)\hspace{-0.25em}}\hspace{-0.2em}\left(\int_{\BX_1}f~dv_1,\int_{\BX_2}g~dv_2\right) \in S_0
  \right\}.
\end{displaymath}
When the binary relation $S_0$ satisfies certain closure properties, we can
simplify the right hand side of the above equality.
\begin{proposition}\label{prop:subGiryLiftings1}
  Suppose that $S=(S_0,G1,G1)\in\BRel(\Meas)$ satisfies the following conditions:
  \begin{enumerate}
  \item For any object $\BX\in \BRel(\Meas)$, morphism $(f,g):X\arrow
    S$, and $0 \leq \beta \leq 1$, the pair of indicator functions
    $(\chi_{f^{-1}([\beta,1])}, \chi_{g^{-1}([\beta,1])})$ is a
    morphism from $X$ to $S$ in $\BRel(\Meas)$.
  \item The binary relation $S_0$ is closed under taking convex hulls.
  \item The binary relation $S_0$ is closed under taking pointwise suprema.
  \end{enumerate}
  Then the codensity lifting of $\mG^2$ along $q:\BRel(\Meas)\arrow\Meas^2$
  with the single lifting parameter $R=(1,1)$ and $S$ satisfies:
  \[
  (\TT G\BX)_0 =
  \left\{(v_1,v_2)~\vline~
    \begin{array}{l@{}}
      \fa{V \in \meas{\BX_1}, W \in \meas{\BX_2}}\\
      (\chi_V,\chi_W) \in \BRel(\Meas)(X,S)\implies (v_1(V), v_2(W)) \in S_0
    \end{array}
  \right\}.
  \]
\end{proposition}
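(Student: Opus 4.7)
The plan is to prove the equality by two inclusions; call the right-hand set $B$. The direction $(\TT G\BX)_0 \subseteq B$ is immediate from the preceding display, since any pair $(\chi_V,\chi_W)$ that is a morphism in $\BRel(\Meas)(\BX,S)$ is a special case of the morphisms quantified over in the characterisation of $(\TT G\BX)_0$, and for such a pair one has $\int_{\BX_1}\chi_V\,dv_1 = v_1(V)$ and $\int_{\BX_2}\chi_W\,dv_2 = v_2(W)$.

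For the reverse inclusion $B \subseteq (\TT G\BX)_0$, I would fix $(v_1,v_2)\in B$ and an arbitrary morphism $(f,g):\BX \arrow S$ in $\BRel(\Meas)$; since $G1$ has carrier $[0,1]$, both $f$ and $g$ are $[0,1]$-valued. The key step is a \emph{coupled dyadic approximation}
\[
  f_n \;=\; \frac{1}{2^n}\sum_{k=1}^{2^n}\chi_{f^{-1}([k\cdot 2^{-n},1])}, \qquad g_n \;=\; \frac{1}{2^n}\sum_{k=1}^{2^n}\chi_{g^{-1}([k\cdot 2^{-n},1])},
\]
using the \emph{same} thresholds $k\cdot 2^{-n}$ on both sides. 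These simple functions are monotonically increasing in $n$ and converge pointwise to $f$ and $g$ respectively.

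Now the three conditions intervene in sequence. By condition~(1) applied to $\beta=k\cdot 2^{-n}$, each indicator pair $(\chi_{f^{-1}([k\cdot 2^{-n},1])},\chi_{g^{-1}([k\cdot 2^{-n},1])})$ is a $\BRel(\Meas)$-morphism $\BX \arrow S$, so the assumption $(v_1,v_2)\in B$ gives $(v_1(f^{-1}([k\cdot 2^{-n},1])),\,v_2(g^{-1}([k\cdot 2^{-n},1]))) \in S_0$ for every $n,k$. Condition~(2) (convexity) then yields that the coordinatewise average
\[
  \bigl(\textstyle\int_{\BX_1}f_n\,dv_1,\;\int_{\BX_2}g_n\,dv_2\bigr) \;=\; \frac{1}{2^n}\sum_{k=1}^{2^n}\bigl(v_1(f^{-1}([k\cdot 2^{-n},1])),\,v_2(g^{-1}([k\cdot 2^{-n},1]))\bigr)
\]
lies in $S_0$ for every $n$. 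By the monotone convergence theorem this pair increases coordinatewise to $(\int_{\BX_1}f\,dv_1,\int_{\BX_2}g\,dv_2)$, and condition~(3) (closure under pointwise suprema) gives that the limit is also in $S_0$, completing the argument.

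The main obstacle is the coupling: condition~(1) operates on one morphism $(f,g)$ at a time, so the approximation must use identical thresholds in both coordinates for the chain from (1) through (2) to (3) to close up on a single pair of integrals. The only secondary check is monotonicity of the dyadic approximants in $n$, which reduces to the elementary inequality $\lfloor 2^n t\rfloor/2^n \le \lfloor 2^{n+1}t\rfloor/2^{n+1}$; everything else is a routine combination of the layer-cake representation of an integral with the three closure properties of $S_0$.
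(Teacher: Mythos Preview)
Your proof is correct and follows essentially the same strategy as the paper: both prove the nontrivial inclusion by writing the integrals via a coupled layer-cake representation (same thresholds on both coordinates), then applying conditions (1), (2), (3) in sequence to pass from indicator pairs to convex combinations to the supremum. The only cosmetic difference is that you use a specific dyadic sequence together with the monotone convergence theorem, whereas the paper takes the supremum over all finite partitions $\{\beta_n\}$ with $\sum_n \beta_n = 1$ directly as the definition of the Lebesgue integral; both routes are standard and equivalent here.
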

\begin{proof}
  ($\supseteq$) Obvious.
  ($\subseteq$) Let $(v_1,v_2)$ be a pair in the right hand side relation of the
  above equation.
  Take an arbitrary pair $(f,g)\in \BRel(\Meas)(X,S)$.
  From the definition of Lebesgue integral we obtain
  \[
  \int_{\BX_1}f~dv_1 = \sup\left\{\sum_{n=0}^N \beta_n v_1(f^{-1}([{\textstyle\sum_{i=0}^n \beta_i},1]))~\vline~ \sum_{n=0}^N\beta_n = 1, \beta_n > 0 \right\}
  \]
  From the first and second condition
  \[
  \left(\sum_{n=0}^N \beta_n v_1(f^{-1}([{\textstyle\sum_{i=0}^n \beta_i},1])),\sum_{n=0}^N \beta_n v_2(g^{-1}([{\textstyle\sum_{i=0}^n \beta_i},1]))\right) \in S_0
  \]
  holds for \emph{each} $\{\beta_n\}_{n=0}^N$ such that $\sum_{n=0}^N
  \beta_n = 1$ and $\beta_n > 0$.  From the third condition, we
  conclude $\left(\int_{\BX_1}f~dv_1,\int_{\BX_2}g~dv_2\right) \in
  S_0$.
\end{proof}

We next consider the codensity lifting of $\mG$ along the fibration
$r:\ERel(\Meas)\arrow\Meas$ with $R=1$ and $S=(S_0,G1)$; here
$S_0$ is again  a binary relation over $[0,1]$.  By instantiating
\eqref{eq:general}, we obtain
\begin{displaymath}
  (\TT G\BX)_0
  =
  \left\{
    (v_1,v_2)~|~
    \fa{f\in \ERel(\Meas)(X,S)}\!\left(\int_{\BX_1}f~dv_1,\int_{\BX_1}f~dv_2\right) \in S_0
  \right\}.
\end{displaymath}
The following proposition, which is analogous to Proposition
\ref{prop:subGiryLiftings1}, holds for the above codensity lifting.
\begin{proposition}\label{prop:subGiryLiftings2}
  Suppose that $S=(S_0,G1)\in\ERel(\Meas)$ satisfies the following conditions:
  \begin{enumerate}
  \item For any object $\BX\in\ERel(\Meas)$, morphism $f:X\arrow S$
    and $0 \leq \beta \leq 1$, the indicator function
    $\chi_{f^{-1}([\beta,1])}$ is a morphism from $X$ to $S$ in
    $\ERel(\Meas)$.
  \item The binary relation $S_0$ is closed under taking convex hulls
    and pointwise supremums.
  \end{enumerate}
  Then the codensity lifting $\TT \mG$ of $\mG$ along
  $r:\ERel(\Meas)\arrow\Meas$ with the single lifting parameter $R=1$
  and $S=(S_0,G1)$ satisfies
  \begin{displaymath}
    (\TT G\BX)_0=
    \left\{(v_1,v_2)~|~
      \fa{V \in \meas{\BX_1}} \chi_V \in \BRel(\Meas)(X,S)\implies (v_1(V), v_2(V)) \in S_0
    \right\}.
  \end{displaymath}
\end{proposition}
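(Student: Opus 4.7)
The plan is to adapt the proof of Proposition~\ref{prop:subGiryLiftings1} to the endo-relational setting of $\ERel(\Meas)$. The only essential change is that morphisms from $\BX$ to $S$ are now single measurable functions $f:\BX_1\arrow G1$ rather than pairs, so that we integrate the same $f$ against both measures $v_1$ and $v_2$. Consequently, conditions~(1) and~(2) of Proposition~\ref{prop:subGiryLiftings1} collapse into the two conditions listed here.

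The $(\subseteq)$ inclusion is immediate: for any $(v_1,v_2)\in(\TT G\BX)_0$ and any $V\in\meas{\BX_1}$ such that $\chi_V\in\ERel(\Meas)(\BX,S)$, I would instantiate the universally quantified $f$ appearing in the right-hand side of \eqref{eq:general} by $\chi_V$ and use the identity $\int_{\BX_1}\chi_V\,dv_i = v_i(V)$ to conclude $(v_1(V),v_2(V))\in S_0$.

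For the $(\supseteq)$ inclusion, I would take $(v_1,v_2)$ in the right-hand side and an arbitrary morphism $f:\BX\arrow S$ in $\ERel(\Meas)$. The standard level-set representation of the Lebesgue integral gives
\[
  \int_{\BX_1}f\,dv_i
  = \sup\left\{\, \sum_{n=0}^N \beta_n\, v_i\bigl(f^{-1}\bigl(\bigl[{\textstyle\sum_{j=0}^n\beta_j},1\bigr]\bigr)\bigr)
  \;\middle|\; {\textstyle\sum_{n=0}^N}\beta_n = 1,\; \beta_n > 0 \right\}.
\]
By condition~(1), each indicator $\chi_{f^{-1}([\sum_{j=0}^n\beta_j,1])}$ is a morphism $\BX\arrow S$ in $\ERel(\Meas)$, so for every $n$ the pair $\bigl(v_1(f^{-1}([\sum_{j=0}^n\beta_j,1])),\, v_2(f^{-1}([\sum_{j=0}^n\beta_j,1]))\bigr)$ lies in $S_0$ by the hypothesis on $(v_1,v_2)$. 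Closure of $S_0$ under convex hulls (condition~(2)) then places each finite convex combination in $S_0$, and closure under pointwise suprema (also condition~(2)) promotes the supremum to yield $\bigl(\int_{\BX_1}f\,dv_1,\,\int_{\BX_1}f\,dv_2\bigr)\in S_0$.

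There is no real obstacle beyond reproducing the calculation of Proposition~\ref{prop:subGiryLiftings1} with this single-function simplification; the only minor bookkeeping point is checking that the level-set formula applies uniformly, which it does since $f$ is $[0,1]$-valued and the $v_i$ are sub-probability measures on $\BX_1$.
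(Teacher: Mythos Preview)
Your proposal is correct and follows essentially the same approach as the paper, which simply states that the proof is analogous to that of Proposition~\ref{prop:subGiryLiftings1}. Your adaptation---replacing the pair $(f,g)$ by a single $f$ and merging conditions (2) and (3) into the present condition (2)---is exactly the intended simplification, and the level-set/convex-hull/supremum argument goes through unchanged.
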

\subsubsection{Simulations on a Single LMP by Codensity Lifting}

We further instantiate $S_0\subseteq [0,1]^2$ in Proposition
\ref{prop:subGiryLiftings2} with the numerical order $\le$.  The
codensity lifting with this single lifting parameter is simplified as
follows:


\begin{proposition}\label{pp:simuone}
  The codensity lifting $\TT \mG$ of $\mG$ along $r:\ERel(\Meas)\arrow\Meas$
  with the single lifting parameter $R=1$  and $(\le,G1)$
  satisfies:
  \begin{displaymath}
    (v_1,v_2)\in(\TT G\BX)_0\iff
    (\fa{U\in \meas{\BX_1}}\BX_0[U]\subseteq U\implies {v_1}(U)\le{v_2}(U)).
  \end{displaymath}
\end{proposition}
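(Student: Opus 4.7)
The plan is to invoke Proposition \ref{prop:subGiryLiftings2} with the single lifting parameter $R=1$ and $S=(\le,G1)$, and then unpack what its right hand side becomes in this specific case. To do this I need to verify the two hypotheses of Proposition \ref{prop:subGiryLiftings2} for this $S$, and then identify the family of those measurable $V \subseteq \BX_1$ for which $\chi_V$ constitutes a morphism $\BX \arrow S$ in $\ERel(\Meas)$.

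First I would check that $S_0={\le}$ satisfies the two closure conditions. That $\le$ is closed under convex combinations of pairs is immediate from linearity, and that $\le$ is closed under pointwise suprema is the standard order-theoretic fact that suprema preserve inequalities in each component. For condition $(1)$, given any morphism $f : \BX \arrow S$ in $\ERel(\Meas)$ and any $\beta \in [0,1]$, the set $f^{-1}([\beta,1])$ is measurable (since $[\beta,1]$ is a Borel subset of $[0,1]$ and $f$ is measurable). It remains to show that $\chi_{f^{-1}([\beta,1])}$ respects the relation: for $(x,y) \in \BX_0$ we have $f(x) \le f(y)$ by assumption, so $x \in f^{-1}([\beta,1])$ forces $f(y) \ge f(x) \ge \beta$, hence $y \in f^{-1}([\beta,1])$; this gives $\chi_{f^{-1}([\beta,1])}(x) \le \chi_{f^{-1}([\beta,1])}(y)$, so the map is indeed a morphism in $\ERel(\Meas)$.

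Next I would identify the morphisms of the form $\chi_V : \BX \arrow S$. For $V \in \meas{\BX_1}$, the function $\chi_V : \BX_1 \arrow G1$ is automatically measurable, and it is a morphism in $\ERel(\Meas)$ precisely when for all $(x,y) \in \BX_0$ one has $\chi_V(x) \le \chi_V(y)$, i.e.\ $x \in V$ implies $y \in V$. Unpacking the definition of $\BX_0[V] = \{y \mid \exists x \in V,\ (x,y) \in \BX_0\}$, this condition is exactly $\BX_0[V] \subseteq V$.

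Plugging these two observations into the equation supplied by Proposition \ref{prop:subGiryLiftings2} yields the stated characterisation. No step here is difficult; the only place where care is needed is making sure that the closure conditions of Proposition \ref{prop:subGiryLiftings2} genuinely apply to $(\le, G1)$, but both are routine properties of the numerical order on $[0,1]$.
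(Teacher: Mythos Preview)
Your proposal is correct and follows essentially the same route as the paper's own proof: apply Proposition~\ref{prop:subGiryLiftings2}, verify its hypotheses for $S_0={\le}$ (closure under convex hulls and pointwise suprema, plus the indicator-function condition via the implication $f(x)\le f(y)\Rightarrow(\beta\le f(x)\Rightarrow\beta\le f(y))$), and then observe that $\chi_V\in\ERel(\Meas)(\BX,S)$ is equivalent to $\BX_0[V]\subseteq V$. The paper's argument is slightly terser but structurally identical.
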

\begin{proof}
  We apply Proposition \ref{prop:subGiryLiftings2}: the binary
  relation $S_0 ={\le}$ is obviously closed under taking convex hulls
  and pointwise supremums, and for any $f \in \ERel(\Meas)(X,S)$ and
  $\beta \in [0,1]$ we have $\chi_{f^{-1}([\beta,1])} \in
  \ERel(\Meas)(X,S)$, because $\beta \leq f(x) \implies \beta \leq
  f(y)$ holds for each $(x,y) \in X_0$.  We then conclude the above
  equivalence because the condition $\BX_0[U]\subseteq U$ is
  equivalent to $\chi_U \in \ERel(\Meas)(X,S)$.
\end{proof}
With the above lifting of sub-Giry monad, we can coalgebraically formulate
{\em simulation relations} on a single {\em labelled Markov process
  (LMP)} proposed in \cite{DBLP:journals/tcs/BreugelMOW05}. Fix a set
$\Act$ of actions. The following is
a coalgebraic definition of LMPs
\cite{DBLP:journals/tcs/BreugelMOW05}:
\begin{definition}
  An LMP is an $(\Act\pitchfork G-)$-coalgebra in $\Meas$.
\end{definition}
We omit the proof of the equivalence between this coalgebraic
definition of LMPs and \cite[Definition
1]{DBLP:journals/tcs/BreugelMOW05}. The concept of simulation relation
on an LMP is proposed in \cite{DBLP:journals/tcs/BreugelMOW05}:
\begin{defiC}[{\cite[Definition 3]{DBLP:journals/tcs/BreugelMOW05}}]
  Let $(X,x)$ be an LMP and $R\subseteq UX\times
  UX$ be a reflexive relation. We say that $R$ is a {\em simulation relation}
  on $(X,x)$ if
  \begin{displaymath}
    \fa{(s_1,s_2)\in R}
    \fa{a\in \Act}
    \fa{U\in\meas X}
    R[U]=U\implies
    \pi_a(x(s_1))(U)\le \pi_a(x(s_2))(U).
  \end{displaymath}
\end{defiC}
In this definition, the formula after ``$\forall{U\in\meas X}$'' is
similar to the right hand side of the equivalence proved in Proposition
\ref{pp:simuone}. Actually, as $R$ is assumed to be reflexive,
$R[U]=U$ is equivalent to $R[U]\subseteq U$. Then the above formula defining
simulation relations can be folded into the existence of a coalgebra in
$\ERel(\Meas)$:
\begin{theorem}\label{th:lmpsim}
  Let $(X,x)$ be an LMP and $R\subseteq UX\times UX$ be a reflexive
  relation. The following are equivalent:
  \begin{enumerate}
  \item $R$ is a simulation relation on $(X,x)$.
  \item $x$ is a morphism of type $(R,X)\arrow \Act\pitchfork \TT G(R,X)$ in $\ERel(\Meas)$,
    where $\TT G$ is the lifting given in Proposition \ref{pp:simuone}.
  \end{enumerate}
\end{theorem}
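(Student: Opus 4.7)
The plan is to unfold both conditions into elementary statements about $x$, the relation $R$, and the $\sigma$-algebra $\Sigma_X$, and then observe that reflexivity of $R$ reconciles the two. I would start by computing the object $\Act \pitchfork \TT G(R,X)$ in $\ERel(\Meas)$ explicitly. Since $r:\ERel(\Meas)\arrow\Meas$ has fibred small limits, the power by $\Act$ is given in the base by the $\Meas$-power $\Act\pitchfork GX$, and its endo-relation is the intersection of the $\pi_a$-preimages of $(\TT G(R,X))_0$. Hence $(y_1,y_2)\in (\Act\pitchfork \TT G(R,X))_0$ iff $(\pi_a(y_1),\pi_a(y_2))\in (\TT G(R,X))_0$ for every $a\in\Act$.

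Next, I would appeal to Proposition~\ref{pp:simuone} to rewrite $(\TT G(R,X))_0$: a pair $(v_1,v_2)$ lies in it iff for every $U\in\meas X$ with $R[U]\subseteq U$ we have $v_1(U)\le v_2(U)$. Combining these two unfoldings, condition (2) becomes: for every $(s_1,s_2)\in R$, every $a\in\Act$, and every $U\in\meas X$ with $R[U]\subseteq U$, the inequality $\pi_a(x(s_1))(U)\le \pi_a(x(s_2))(U)$ holds. Note that the underlying measurability of $x$ is already built into its being an LMP, so the only additional content of (2) is the relational preservation, as desired.

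Finally, I would compare this with the definition of a simulation relation, which uses the condition $R[U]=U$ in place of $R[U]\subseteq U$. Here the reflexivity of $R$ is essential: since $(u,u)\in R$ for every $u\in U$, we always have $U\subseteq R[U]$, so the two conditions $R[U]=U$ and $R[U]\subseteq U$ are equivalent. Quantifying over either set of $U$'s yields the same condition, so (1) and (2) coincide.

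I do not expect a serious obstacle. The main bookkeeping step is verifying that powers in $\ERel(\Meas)$ are computed fibrewise by a meet of pulled-back relations, which is a standard consequence of fibred small limits and the change-of-base construction of $r$; after that, the equivalence is a direct rewriting supported by reflexivity.
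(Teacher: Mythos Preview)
Your proposal is correct and matches the paper's approach: the paper does not give a formal proof but indicates, in the paragraph immediately preceding the theorem, that reflexivity of $R$ makes $R[U]=U$ equivalent to $R[U]\subseteq U$, after which the simulation condition folds into the $\ERel(\Meas)$-morphism condition via Proposition~\ref{pp:simuone}. Your unfolding of the $\Act$-power in $\ERel(\Meas)$ and the subsequent comparison is exactly this reasoning spelled out in more detail.
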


\subsubsection{Bisimulations between Two LMPs by Codensity Lifting}

We next instantiate $S_0\subseteq [0,1]^2$ in Proposition
\ref{prop:subGiryLiftings1} with the equality relation ${=}$ on $[0,1]$. We
first introduce an auxiliary concept, which appears in
\cite{BBLM2014}.
\begin{definition}
  Let $R\subseteq I\times J$ be a binary relation and $U\subseteq I$ and
  $V\subseteq J$ be subsets. We say that the pair $(V,W)$ is $R$-closed
  if $\fa{(x,y)\in R}(x\in V)\iff (y\in W)$ holds.
\end{definition}
\begin{lemma}
  Let $\BX\in\BRel(\Meas)$ be an object and $V\subseteq U\BX_1$ and
  $W\subseteq U\BX_2$ be arbitrary subsets. Then the following are
  equivalent:
  \begin{enumerate}
  \item $(V,W)$ is $\BX_0$-closed.
  \item $\BX_0\cap(V\times J)=\BX_0\cap(I\times W)$.
  \item $(\chi_V,\chi_W)\in\BRel(\Meas)(\BX,S)$.
  \end{enumerate}
\end{lemma}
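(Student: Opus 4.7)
The proof is a routine unfolding of definitions, so the plan is to establish $(1)\Leftrightarrow(2)$ and $(1)\Leftrightarrow(3)$ independently. Throughout, write $I=U\BX_1$ and $J=U\BX_2$, so that $\BX_0\subseteq I\times J$.

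For $(1)\Leftrightarrow(2)$: an element $(x,y)\in\BX_0$ lies in $\BX_0\cap(V\times J)$ exactly when $x\in V$, and in $\BX_0\cap(I\times W)$ exactly when $y\in W$. Hence the set equality in (2) is equivalent, element by element, to the biconditional $x\in V\iff y\in W$ for every $(x,y)\in\BX_0$, which is precisely the definition of $\BX_0$-closedness in (1). This step is purely tautological.

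For $(1)\Leftrightarrow(3)$: I would unfold what it means for $(\chi_V,\chi_W)$ to be a morphism $\BX\arrow S$ in $\BRel(\Meas)$. The morphism data consists of (i) measurable maps $\chi_V\colon\BX_1\arrow G1$ and $\chi_W\colon\BX_2\arrow G1$, and (ii) the relation-preservation property $(U\chi_V\times U\chi_W)(\BX_0)\subseteq S_0$. Since here $S_0$ is the equality relation on $[0,1]$ and the indicator functions take values in $\{0,1\}$, (ii) says that $\chi_V(x)=\chi_W(y)$ for all $(x,y)\in\BX_0$, i.e.\ $x\in V\iff y\in W$, which is (1). The measurability condition (i) corresponds to $V\in\meas{\BX_1}$ and $W\in\meas{\BX_2}$; this is the (implicit) standing assumption under which the lemma is to be applied in the proof of Proposition~\ref{prop:subGiryLiftings1}, where the subsets arise as $f^{-1}([\beta,1])$ and $g^{-1}([\beta,1])$ for measurable $f,g$.

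There is no real obstacle: the entire argument is definitional. The one point to be careful about is making explicit that condition (3), to be meaningful at all, presupposes the measurability of $V$ and $W$; once this is noted, the logical core of the equivalence reduces, in all three formulations, to the single statement ``for every $(x,y)\in\BX_0$, $x\in V$ iff $y\in W$''.
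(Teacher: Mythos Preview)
The paper states this lemma without proof, so there is no argument to compare against; your unfolding is correct and is exactly the routine verification the authors presumably had in mind. You also correctly flag the one genuine subtlety: the lemma allows $V$ and $W$ to be arbitrary subsets, yet condition (3) can only hold when $\chi_V$ and $\chi_W$ are measurable, i.e.\ when $V\in\meas{\BX_1}$ and $W\in\meas{\BX_2}$. Strictly speaking the equivalence $(1)\Leftrightarrow(3)$ fails otherwise (take $\BX_0=\emptyset$: then (1) holds vacuously for any $V,W$, but (3) still requires measurability). Your reading---that measurability is an implicit standing assumption, as indeed it is in the lemma's only application---is the correct one.
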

\begin{proposition}\label{th:brellift}
  The codensity lifting $\TT\mG$ of $\mG^2$ along
  $q:\BRel(\Meas)\arrow\Meas^2$ with the single lifting parameter
  $R=1$ and $({=},G1,G1)$ satisfies:
  \begin{displaymath}
    (\TT G\BX)_0=
    \left\{({v_1},{v_2})~|~
      \fa{V\in \meas{\BX_1},W\in \meas{\BX_2}}
      (V,W)\colon \BX_0\text{-closed} \implies {v_1}(V) = {v_2}(W)\right\}.
  \end{displaymath}
\end{proposition}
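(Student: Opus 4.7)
The plan is to derive this as a corollary of Proposition \ref{prop:subGiryLiftings1} applied to $S_0 = {=}$, combined with the preceding lemma characterising $\BX_0$-closed pairs as exactly those $(V,W)$ giving morphisms $(\chi_V,\chi_W)\colon\BX\arrow S$.

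First I would verify the three hypotheses of Proposition \ref{prop:subGiryLiftings1} for the parameter $S=({=},G1,G1)$. Conditions (2) and (3) are immediate: the diagonal ${=}\subseteq[0,1]^2$ is trivially closed under convex hulls (a convex combination of equal pairs is an equal pair) and under pointwise suprema (the supremum of equals is equal). For condition (1), suppose $(f,g)\colon\BX\arrow S$, i.e.\ $f(x)=g(y)$ whenever $(x,y)\in\BX_0$. Then for any $\beta\in[0,1]$ and $(x,y)\in\BX_0$, $f(x)\ge\beta\iff g(y)\ge\beta$, so $\chi_{f^{-1}([\beta,1])}(x)=\chi_{g^{-1}([\beta,1])}(y)$, showing $(\chi_{f^{-1}([\beta,1])},\chi_{g^{-1}([\beta,1])})\colon\BX\arrow S$ as required.

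Having checked the hypotheses, Proposition \ref{prop:subGiryLiftings1} yields
\[
(\TT G\BX)_0 = \{(v_1,v_2)\mid \forall V\in\meas{\BX_1},W\in\meas{\BX_2},\ (\chi_V,\chi_W)\in\BRel(\Meas)(\BX,S)\implies v_1(V)=v_2(W)\}.
\]
It then suffices to rewrite the premise ``$(\chi_V,\chi_W)\in\BRel(\Meas)(\BX,S)$'' using the equivalence (1)$\Leftrightarrow$(3) of the preceding lemma, which replaces it with ``$(V,W)$ is $\BX_0$-closed''. Substituting this into the displayed equality gives exactly the claimed description of $(\TT G\BX)_0$.

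I expect no serious obstacle: the bulk of the work has been done in Proposition \ref{prop:subGiryLiftings1} and in the lemma, so the proof reduces to checking three essentially formal closure properties of the equality relation and rewriting via the lemma. The only mild subtlety is condition (1), where one must remember that a morphism $(f,g)$ in $\BRel(\Meas)$ with target $({=},G1,G1)$ forces $f(x)=g(y)$ on $\BX_0$-related pairs, which is precisely what makes the thresholded indicators also compatible with $\BX_0$.
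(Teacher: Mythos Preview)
Your proposal is correct and follows exactly the route the paper intends: the paper does not spell out a proof for this proposition, but your argument mirrors precisely the proof it gives for the analogous Proposition~\ref{pp:simuone} (apply the simplification proposition, verify the closure conditions for the chosen $S_0$, then rewrite the indicator-morphism condition via the preceding lemma). There is nothing to add.
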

%
%
We point out a relationship between the above codensity lifting and
the concept of bisimulation relation between two LMPs introduced by
Bacci et al \cite{BBLM2014}.
\begin{defiC}[{\cite[Definition 5]{BBLM2014}}]
  Let $(X_1 , x_1 )$ and $(X_2 , x_2 )$ be two LMPs. A binary relation
  $R \subseteq U X_1 \times U X_2$ is a bisimulation relation if
  the following holds:
  \begin{eqnarray*}
    & & \fa{(s_1 , s_2 )\in R}\fa{a\in \Act}\fa{V \in\meas{X_1} , W \in\meas{X_2}} \\
    & & \quad (V, W): R\text{-closed}\implies\pi_a (x_1(s_1 ))(V ) = \pi_a (x_2(s_2 ))(W ).  
  \end{eqnarray*}
\end{defiC}
By folding the above defining formula with the characterisation of $\TT
G$ given in Theorem \ref{th:brellift}, we obtain the following
coalgebraic reformulation of Bacci et al's bisimulation relation:
\begin{theorem}\label{th:bisim}
  Let $(X_i,x_i)$ be LMPs ($i = 1,2$) and  $R\subseteq
  UX_1\times UX_2$ be a binary relation. Then the following are equivalent:
  \begin{enumerate}
  \item $R$ is a bisimulation relation between $(X_1,x_1)$ and $(X_2,x_2)$.
  \item $(x_1,x_2)$ is a morphism of type $(R,X_1,X_2)\arrow {\Act}
    \pitchfork \TT G(R,X_1,X_2)$ in $\BRel(\Meas)$,
    where $\TT G$ is the lifting given in Proposition \ref{th:brellift}.
  \end{enumerate}
\end{theorem}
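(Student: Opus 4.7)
The plan is to prove Theorem~\ref{th:bisim} by purely unfolding the categorical statement in (2) against the definitions of the power in $\BRel(\Meas)$ and the characterisation of $\TT G$ from Proposition~\ref{th:brellift}, and checking that what results is literally Bacci et al.'s definition.

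First, I would analyse what a morphism $(x_1,x_2) : (R,X_1,X_2) \arrow \Act \pitchfork \TT G(R,X_1,X_2)$ in $\BRel(\Meas)$ means.  Since $q : \BRel(\Meas) \arrow \Meas^2$ is a fibration with fibred small limits, powers by the discrete set $\Act$ in $\BRel(\Meas)$ are computed pointwise: the underlying pair of measurable spaces is $(\Act \pitchfork GX_1, \Act \pitchfork GX_2)$, and the relation component of $\Act \pitchfork \BY$ is $\{(f_1,f_2) \mid \forall a \in \Act.\, (f_1(a), f_2(a)) \in \BY_0\}$.  By the universal property of powers, giving the morphism is equivalent to giving, for each $a \in \Act$, a morphism $(\pi_a \circ x_1, \pi_a \circ x_2) : (R, X_1, X_2) \arrow \TT G(R, X_1, X_2)$ in $\BRel(\Meas)$.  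The underlying $\Meas^2$-part of this family is just the data that $x_1, x_2$ are the transition kernels of two LMPs, so the nontrivial content is the relation-preservation condition.

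Next I unfold that relation-preservation condition: for each $a \in \Act$ and each $(s_1, s_2) \in R$, one must have
\[
  (\pi_a(x_1(s_1)), \pi_a(x_2(s_2))) \in (\TT G(R, X_1, X_2))_0.
\]
Applying Proposition~\ref{th:brellift} to $\BX = (R, X_1, X_2)$, this is in turn equivalent to: for all $V \in \meas{X_1}$ and $W \in \meas{X_2}$, if $(V,W)$ is $R$-closed then $\pi_a(x_1(s_1))(V) = \pi_a(x_2(s_2))(W)$.  Quantifying over all $(s_1, s_2) \in R$ and $a \in \Act$, this is exactly Bacci et al.'s defining condition for $R$ to be a bisimulation relation between $(X_1, x_1)$ and $(X_2, x_2)$.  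Running the implications in either direction therefore establishes the equivalence of (1) and (2).

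There is no real obstacle here; the proof is essentially a bookkeeping exercise.  The only point that deserves explicit mention is the identification of the power in $\BRel(\Meas)$, for which one invokes that $q$ preserves fibred products (and more generally the small limits discussed before Proposition~\ref{pp:single}), so that the power by $\Act$ is built fibrewise on top of the pointwise power in $\Meas^2$.  Once that is in place, the rest is a direct substitution of Proposition~\ref{th:brellift} into the morphism condition and a comparison of universally quantified formulae.
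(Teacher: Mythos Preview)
Your proposal is correct and matches the paper's approach: the paper gives no explicit proof of this theorem, presenting it as an immediate consequence of ``folding'' Bacci et al.'s defining formula using the characterisation in Proposition~\ref{th:brellift}, which is exactly the unfolding-and-comparison argument you spell out. Your additional remark on how the power by $\Act$ is computed in $\BRel(\Meas)$ fills in a detail the paper leaves implicit.
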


\subsubsection{Codensity Lifting of $\mG^2$ by the Inequality Relation}

From Proposition \ref{pp:simuone} and Theorem \ref{th:lmpsim}, we
naturally speculate that the codensity lifting of the product sub-Giry
monad $\mathcal{G}^2$ using the inequality relation $\le$ 
yields the lifting that may be used for the
definition of simulation relations between {\em two} LMPs. Below we
try this, and discuss the problem of the composability of simulation
relations.

We instantiate $S_0\subseteq [0,1]^2$ in Proposition \ref{prop:subGiryLiftings1}
with the numerical order $\le$.
\begin{proposition}\label{pp:simtwo}
  The codensity lifting $\TT\mG$ of $\mG^2$ along
  $q\colon\BRel(\Meas)\to\Meas^2$ with the single lifting parameter
  $R=(1,1)$ and $S=(\le,G1,G1)$ satisfies
  \begin{displaymath}
    (\TT G\BX)_0=
    \left\{({v_1},{v_2})~|~
      (\fa{V\in \meas{\BX_1},W\in \meas{\BX_2}}\BX_0[V]\subseteq W\implies {v_1}(V)\le{v_2}(W)
    \right\}
  \end{displaymath}
\end{proposition}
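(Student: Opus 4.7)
The plan is to derive this characterisation as a direct instantiation of Proposition \ref{prop:subGiryLiftings1} with $S_0 = {\le}$, following exactly the same pattern used for Proposition \ref{pp:simuone} in the single-LMP case. So first I would verify that $S = ({\le}, G1, G1) \in \BRel(\Meas)$ satisfies the three hypotheses of Proposition \ref{prop:subGiryLiftings1}, and then translate the indicator-function condition there into the hypothesis $\BX_0[V] \subseteq W$.

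For the three closure hypotheses on $S_0 = {\le}$: (2) and (3) are immediate from elementary properties of the real line — if $a_n \le b_n$ then $\sum \lambda_n a_n \le \sum \lambda_n b_n$ for any convex combination, and $\sup_n a_n \le \sup_n b_n$. For (1), given $(f,g) \in \BRel(\Meas)(\BX, S)$ (so $f(x) \le g(y)$ for all $(x,y) \in \BX_0$) and $\beta \in [0,1]$, we need to show that $(\chi_{f^{-1}([\beta,1])}, \chi_{g^{-1}([\beta,1])}) \in \BRel(\Meas)(\BX, S)$. This amounts to: if $(x,y) \in \BX_0$ and $f(x) \ge \beta$, then $g(y) \ge \beta$, which is immediate from $f(x) \le g(y)$.

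Next, I would observe that for subsets $V \subseteq U\BX_1$ and $W \subseteq U\BX_2$, the pair $(\chi_V, \chi_W)$ is a morphism $\BX \to S$ in $\BRel(\Meas)$ precisely when $\chi_V(x) \le \chi_W(y)$ for every $(x,y) \in \BX_0$. Since indicator functions take values in $\{0,1\}$, this inequality fails exactly when $x \in V$ and $y \notin W$. Hence $(\chi_V,\chi_W) \in \BRel(\Meas)(\BX,S)$ if and only if $\BX_0[V] \subseteq W$. Plugging this equivalence into the formula supplied by Proposition \ref{prop:subGiryLiftings1} yields the claimed characterisation.

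I do not anticipate any serious obstacle: the only subtlety is verifying condition (1) of Proposition \ref{prop:subGiryLiftings1} for $S_0 = {\le}$, which is a one-line argument, and translating the predicate $(\chi_V,\chi_W) \in \BRel(\Meas)(\BX,S)$ into the condition $\BX_0[V] \subseteq W$ — this is the natural asymmetric analogue of the $\BX_0$-closedness lemma used in the proof of Proposition \ref{th:brellift}.
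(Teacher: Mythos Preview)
Your proposal is correct and matches the paper's intended approach: the paper explicitly frames Proposition~\ref{pp:simtwo} as an instantiation of Proposition~\ref{prop:subGiryLiftings1} with $S_0={\le}$, and your verification of the three hypotheses together with the translation of $(\chi_V,\chi_W)\in\BRel(\Meas)(\BX,S)$ into $\BX_0[V]\subseteq W$ is precisely the $\BRel$ analogue of the paper's proof of Proposition~\ref{pp:simuone}.
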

Following Theorem \ref{th:lmpsim} and
Theorem \ref{th:bisim}, we define simulation relations between
two LMPs as coalgebras in $\BRel(\Meas)$.
\begin{definition}
  We define a {\em simulation relation} from an LMP $(X_1,x_1)$ to an
  LMP $(X_2,x_2)$ to be a binary relation $R\subseteq U\BX_1\times
  U\BX_2$ such that $(x_1,x_2)$ is a morphism of type
  $(R,X_1,X_2)\arrow \Act\pitchfork \TT G(R,X_1,X_2)$ in
  $\BRel(\Meas)$, where $\TT G$ is the lifting given in Proposition \ref{pp:simtwo}.
  We moreover say that $R$ {\em preserves measurable sets} if
  for any measurable set $V\in \meas {X_1}$, we have $R[V]\in\meas {X_2}$.
\end{definition}
Unfolding the definition, $R$ is a simulation relation from
$(X_1,x_1)$ to $(X_2,x_2)$ if and only if the following holds:
\begin{eqnarray*}
  & & \fa{(s_1,s_2)\in R}
      \fa{a\in \Act}
      \fa{V\in \meas{\BX_1},W\in \meas{\BX_2}}\\
  & & \quad R[V]\subseteq W\implies
      \pi_a(x_1(s_1))(V)\le \pi_a(x_2(s_2))(W).
\end{eqnarray*}

However, simulation relations defined as above are {\em not}
closed under the relational composition. A counterexample can
be found when $\Act=1$.
\begin{example}\label{ex:failure}
  Let $A$ and $B$ be the discrete and indiscrete
  spaces over a two-point set $2=\{0,1\}$, respectively.  We define three
  probability measures $v_1, v_3\in GA$ and $v_2\in GB$ by:
  \begin{displaymath}
    v_1(\{0\})=v_1(\{1\}) = 1/2,\quad
    v_2(\{0,1\})=1,\quad
    v_3(\{0\})=1/3,~v_3(\{1\})=2/3.\quad
  \end{displaymath}
  We consider three constant functions $k(v_i)$ from $2$ returning $v_i$
  for $i=1,2,3$. They are clearly measurable functions of the following
  type:
  \begin{displaymath}
    k(v_1):A\arrow GA,\quad
    k(v_2):B\arrow GB,\quad
    k(v_3):A\arrow GA,
  \end{displaymath}
  hence they are LMPs (recall $\Act=1$).  We can then easily check that
  $\Rm{Eq}_2\subseteq 2\times 2$ is a simulation relation from $k(v_1)$ to
  $k(v_2)$, and also from $k(v_2)$ to $k(v_3)$.  However, $\Rm{Eq}_2$ is
  not a simulation relation from $k(v_1)$ to $k(v_3)$.
\end{example}
This problem stems from the fact that $\TT\mG$ given in Proposition
\ref{pp:simtwo} does {\em not} satisfy the following property (which
is seen in the definition of 
{\em relators} in
\cite{similarq} and {\em lax extensions} in \cite{Marti2015880}):
\begin{displaymath}
  \fa{X,Y\in\BRel(\Meas)}X_2=Y_1\implies
  (\TT G X;\TT G Y)_0\subseteq
  (\TT G(X;Y))_0,
\end{displaymath}
This is a sufficient condition for the composability of simulation
relations. Example \ref{ex:failure} is actually constructed using a
counterexample to the above property.

A work-around is to require simulation relations to preserve
measurable sets.
\begin{proposition}
  Let $(X_i,x_i)$ be LMPs for $i=1,2,3$, and $R_1\subseteq UX_1\times
  UX_2$ and $R_2\subseteq UX_2\times UX_3$ be measurable-set
  preserving simulation relations from $(X_1,x_1)$ to $(X_2,x_2)$ and
  $(X_2,x_2)$ to $(X_3,x_3)$, respectively. Then $R_1;R_2$ is a
  measurable-set preserving simulation relation from $(X_1,x_1)$ to
  $(X_3,x_3)$.
\end{proposition}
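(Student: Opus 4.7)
The plan is to verify the two required properties of $R_1;R_2$ directly from the definitions, using the hypothesis of measurable-set preservation as the key glue between the two simulations.

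First I would check that $R_1;R_2$ preserves measurable sets: for any $V\in\meas{X_1}$, we have $(R_1;R_2)[V]=R_2[R_1[V]]$, and $R_1[V]\in\meas{X_2}$ by the hypothesis on $R_1$, hence $R_2[R_1[V]]\in\meas{X_3}$ by the hypothesis on $R_2$.

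For the simulation condition, fix $(s_1,s_3)\in R_1;R_2$, an action $a\in\Act$, and measurable sets $V\in\meas{X_1}$, $W\in\meas{X_3}$ with $(R_1;R_2)[V]\subseteq W$. By definition of relational composition, choose $s_2\in UX_2$ with $(s_1,s_2)\in R_1$ and $(s_2,s_3)\in R_2$. The natural intermediate set to bridge the two simulations is $W'=R_1[V]$, which lies in $\meas{X_2}$ precisely because $R_1$ preserves measurable sets. Then $R_1[V]\subseteq W'$ holds trivially, and
\[ R_2[W']=R_2[R_1[V]]=(R_1;R_2)[V]\subseteq W. \]
Applying the simulation property of $R_1$ to $(s_1,s_2)$, $V$, $W'$ yields
\[ \pi_a(x_1(s_1))(V)\le\pi_a(x_2(s_2))(W'), \]
and applying the simulation property of $R_2$ to $(s_2,s_3)$, $W'$, $W$ yields
\[ \pi_a(x_2(s_2))(W')\le\pi_a(x_3(s_3))(W). \]
Chaining these two inequalities gives the required bound $\pi_a(x_1(s_1))(V)\le\pi_a(x_3(s_3))(W)$.

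There is essentially no obstacle beyond identifying the correct intermediate witness $W'=R_1[V]$; the key conceptual point is that Example \ref{ex:failure} fails precisely because $R_1[V]$ need not be measurable in general, and measurable-set preservation is exactly what is needed to legitimately invoke the simulation property of $R_1$ with $W'$ as a test set in $\meas{X_2}$.
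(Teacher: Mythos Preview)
Your proof is correct and is the natural direct verification; the paper itself states this proposition without proof, so there is nothing to compare against. The choice of the intermediate set $W'=R_1[V]$ and the chaining of the two simulation inequalities is exactly the intended argument, and your closing remark correctly identifies why measurable-set preservation is the crucial hypothesis.
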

On the other hand, we do not know whether there is a largest
measurable-set preserving simulation relation. We leave this point to
the future work, and move onto other examples of the codensity
lifting.

\subsection{Kantorovich Metric by Codensity Lifting}
\label{sec:kantoro}

An {\em extended pseudometric space} (we omit ``extended'' hereafter)
is a pair $(X,d)$ of a set $X$ and a {\em pseudometric}
$d:X\times X\arrow[0,\infty]$ taking values in the extended
nonnegative real numbers. The pseudometric should satisfy
\begin{displaymath}
  d(x,x)=0,\quad d(x,y)=d(y,x),\quad
  d(x,y)+d(y,z)\ge d(x,z).
\end{displaymath}
For pseudometric spaces $(X,d)$ and $(Y,e)$, a function $f:X\arrow Y$ is
{\em non-expansive} if for any $x,x'\in X$, $d(x,x')\ge e(f(x),f(x'))$ holds.
We define $\Met$ to be the category of
pseudometric spaces and non-expansive functions.
The forgetful functor $p:\Met\arrow\Set$ is a
fibration with fibred small limits. The inverse image of a
pseudometric $(Y,d)$ along a function $f:X\arrow Y$ is given by $f^{-1}(Y,d)=(X,d\circ
(f\times f))$. The fibred small limit of pseudometric spaces
$\{(X,d_i)\}_{i\in I}$ above the same set $X$ is given by the pointwise sup of pseudometrics:
$
\bigwedge_{i\in I}(X,d_i)=(X,\sup_{i\in I}d_i).
$

We first consider the codensity lifting of a monad $\mc T$ on $\Set$ along
$p:\Met\arrow\Set$ with a single lifting parameter: a pair of $R\in\Set$
and $S=(TR,s)\in\Met$. By instantiating \eqref{eq:general}, for every
$(X,d)\in\Met$ ($X$ for short), the pseudometric space
$\TT TX$ is of the form $(TX,\TT Td)$ where the pseudometric $\TT Td$
is given by
\begin{displaymath}
  \TT Td(c,c')=\sup_{f\in\Met(X,S)}s(f^\#(c),f^\#(c')).
\end{displaymath}

We next derive the {\em Kantorovich metric} \cite{kantorovich} on subprobability
measures by the codensity lifting.
We perform the following change-of-base of the fibration
\begin{displaymath}
  \xymatrix{
    U^*(\Met) \ar[d]_-q \ar[rr] \pbmark & & \Met \ar[d]^-p \\
    \Meas \ar[rr]_-U & & \Set
  }
\end{displaymath}
and obtain a new fibration $q$ with fibred small limits.
An object in $U^*(\Met)$ is a pair of a measurable space $(X,\meas X)$
and a pseudometric $d$ on $X$. A morphism from $((X,\meas X),d)$ to
$((Y,\meas Y),e)$ in $U^*(\Met)$  is a measurable function $f:(X,\meas X)\arrow (Y,\meas Y)$ that
is also non-expansive with respect to pseudometrics $d$ and $e$.

We consider the codensity lifting of $\mc G$ along
$q:U^*\Met\arrow\Meas$ with the following single lifting parameter:
the pair of $R=1$ and $S=(G1,s)=(\mc B[0,1],s)$, where $s(x,y)=|x-y|$.
By instantiating \eqref{eq:general}, for every $(X,d)\in\Met$ ($X$ for
short), $\TT GX$ is the pair of the measurable space $GX$ and the
following pseudometric $\TT Gd$ on the set $\SPMsr(X)$ of
subprobability measures on $X$:
\begin{displaymath}
  \TT Gd(v_1,v_2)=\sup_f\left|\int_X fdv_1-\int_X fdv_2\right|;
\end{displaymath}
in the above sup, $f$ ranges over $U^*\Met(X,S)$, the
set of measurable functions of type $X\arrow\mc B[0,1]$ that are also
non-expansive, that is, $\fa{x,y\in UX}d(x,y)\ge |f(x)-f(y)|$. 
This pseudometric $\TT Gd$ between subprobability measures is called
the {\em Kantorovich metric} \cite{kantorovich}.

\section{Density Lifting of Comonads}
\label{sec:den}

The categorical dual of the codensity lifting of monads along
fibrations is the {\em density lifting of comonads} along
cofibrations.

Fix a cofibration $p:\EE\arrow\BB$ and a comonad
$\mD=(D,\epsilon,\delta)$ on $\BB$. We take the co-Kleisli resolution
$(K\dashv J,\eta)$ of the comonad $\mD$.
A {\em lifting parameter} for $\mD$ is a span of functors
$\liftparam{\EE}{\BB}{\mD}{\AA}{R}{S}$ such that $pS=KR$.  We also
assume that $(p,S)$ satisfies the {\em density condition}: $\Lan SS$
exists and $p$ preserves it.

Fix a lifting parameter $\liftparam{\EE}{\BB}{\mD}{\AA}{R}{S}$ and
assume that $(p,S)$ satisfies the density condition. The density
lifting of $\mD$ proceeds as follows. From $pS=KR$, we have the
following natural transformation:
\begin{displaymath}
  K\eta R:pS=KR\arrow KJKR=DKR=DpS.
\end{displaymath}
As $p$ preserves the left Kan extension $\Lan SS$, we obtain a
left Kan extension  $p(\Lan SS)$ of $pS$ along $S$. With this left Kan extension, we
take the mate of the above natural transformation, and obtain
\begin{displaymath}
  \ul{K\eta R}:p(\Lan{S}S)\arrow Dp.
\end{displaymath}
In the cofibration $[\EE,p]:[\EE,\EE]\arrow[\EE,\BB]$, we take the
co-cartesian lifting of this natural transformation with respect to
$\Lan SS$:
\begin{displaymath}
  \xymatrix{
    \Lan SS \ar@{.>}[rr] & & \TT D & [\EE,\EE] \ar[d]^-{[\EE,p]} \\
    p(\Lan SS) \ar[rr] & & Dp & [\EE,\BB]
  }
\end{displaymath}
It yields an endofunctor $\TT D$ above $Dp$, that is, a lifting of the
endofunctor $D$. The counit and comultiplication for $\TT D$ can be
dually constructed as done in Section \ref{sec:codense}.

Let us see the density lifting of a comonad $\mD$ on $\Set$ along the
subobject cofibration $p:\Pred\arrow\Set$ 
with a single lifting parameter $R\in\Set$ and
$S\in\Pred_{DR}$.  It yields a comonad $\TT\mD$ whose functor
part is given by
\begin{equation}
  \label{eq:comontt}
  \TT DX=(\{(pf)^\flat(x)~|~f\in\Pred(S,X),x\in S_0\},DI)\quad
  (X\in\Pred);
\end{equation}
here $(-)^\flat$ is the co-Kleisli lifting of the comonad $\mD$.
Below we instantiate $\mD$ with the {\em product comonad} and
the {\em stream comonad} \cite{uustaluvene}.

\subsection{Density Lifting of Product Comonad}

Fix a set $A$. We consider the product comonad $\mD_A$ on
$\Set$, whose functor part is given by $D_AI=I\times A$.  The
co-Kleisli lifting of this comonad extends a function $f:D_AI\arrow J$
to the function $f^\flat:D_AI\arrow D_AJ$ given by
$f^\flat(i,a)=(f(i,a),a)$.

We instantiate the density lifting \eqref{eq:comontt} with the
product comonad, and obtain
\begin{displaymath}
  \TT D_AX=
  (\{(f(i,a),a)~|~f\in\Pred(S,X),(i,a)\in S_0\},X_1\times A).
\end{displaymath}
This density lifting is actually a product comonad on
$\Pred$.
\begin{theorem}
  Let $R\in\Set$ and $S\in\Pred_{R\times A}$ be a single lifting parameter
  for the product comonad $\mD_A$.  Then the density lifting of
  $\mD_A$ satisfies
  \begin{displaymath}
    \TT D_AX=X\dtimes (S_0[R],A)\quad (X\in\Pred),
  \end{displaymath}
  where $\dtimes$ is the binary product in $\Pred$ given by
  $(X,I)\dtimes(Y,J)= (X\times Y,I\times J)$.
\end{theorem}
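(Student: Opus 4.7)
The plan is to unfold the explicit description of $\TT D_A X$ supplied by \eqref{eq:comontt} and match it term-by-term against the product predicate $X \dtimes (S_0[R], A) = (X_0 \times S_0[R],~X_1 \times A)$. For the product comonad the co-Kleisli lifting of a function $g \colon R \times A \arrow X_1$ is $g^\flat(i,a) = (g(i,a), a)$, so specialising \eqref{eq:comontt} to $\mD_A$ gives
\begin{displaymath}
  \TT D_A X = \bigl(\{(f(i,a), a) \mid f \in \Pred(S,X),~(i,a) \in S_0\},~X_1 \times A\bigr).
\end{displaymath}
The carrier sets on both sides already coincide, so the content of the theorem reduces to the set-theoretic equality
\begin{displaymath}
  \{(f(i,a), a) \mid f \in \Pred(S,X),~(i,a) \in S_0\} = X_0 \times S_0[R].
\end{displaymath}

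For the inclusion $\subseteq$ I would observe that any $f \in \Pred(S,X)$ sends $S_0$ into $X_0$, so $f(i,a) \in X_0$ whenever $(i,a) \in S_0$, while $a \in S_0[R]$ holds by the very definition of the image of $R$ under $S_0$ regarded as a relation from $R$ to $A$. For the converse inclusion $\supseteq$, given $(x, a) \in X_0 \times S_0[R]$ I would pick an $i \in R$ with $(i,a) \in S_0$ (which exists by definition of $S_0[R]$) and take $f \colon R \times A \arrow X_1$ to be the constant function at $x$; since $x \in X_0$, this $f$ sends $S_0$ into $X_0$, hence $f \in \Pred(S,X)$, and then $f(i,a) = x$ witnesses $(x,a)$ in the left-hand set. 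The degenerate situations with $X_0 = \emptyset$ or $S_0 = \emptyset$ need no separate treatment: both sides of the desired equality collapse to $\emptyset$.

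The only step that calls for any thought is choosing a convenient witness $f$ in the $\supseteq$ direction; constant functions suffice, which is precisely what makes the homset $\Pred(S,X)$ ``see'' every element of $X_0$. Naturality in $X$ is routine from the construction, and compatibility with the product-comonad counit and comultiplication on $X \dtimes (S_0[R], A)$ follows by the universal property of the cocartesian lifting used to build $\TT D_A$ in Section~\ref{sec:den}, dual to the monad-law calculation of Theorem~\ref{th:firstlift}.
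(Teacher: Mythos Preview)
Your proof is correct and follows essentially the same route as the paper: both arguments reduce the claim to the set-theoretic equality of predicate parts, dispatch $\subseteq$ immediately, and establish $\supseteq$ by picking a witness $(r,a)\in S_0$ for the given $a\in S_0[R]$ and using the constant function at the chosen $x\in X_0$ as the required morphism in $\Pred(S,X)$. Your closing remarks on naturality and the comonad structure are not needed for the statement as written (it concerns only the object assignment), but they do no harm.
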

\begin{proof}
  ($\subseteq$) easy. ($\supseteq$) Let $(i,a)\in X\times S_0[R]$. There
  exists $r\in R$ such that $(r,a)\in S_0$. We take the constant
  function $k_i:D_AR\arrow I$ returning $i$. This belongs to
  $\Pred(S,X)$ as $i\in X_0$.  Then we obtain
  $(k_i(r,a),a)=(i,a)\in (\TT D_AX)_0$.
\end{proof}

\subsection{Density Lifting of Stream Comonad}

We next consider the stream comonad $\mD$ on $\Set$. Its
functor part sends a set $I$ to the function space $\NN\Arrow I$
from the set $\NN$ of natural numbers.
We regard functions in the space as infinite sequences of elements in
$I$. For an infinite sequence $x\in\NN\Arrow I$ and a natural number
$i\in \NN$, by $x/i$ we mean the infinite sequence
$x_i,x_{i+1},\cdots$, that is, $x/i=\lam jx(i+j)$. The counit and the
comultiplication of the stream comonad are given by
\begin{displaymath}
  \epsilon_I(l)=l(0),\quad \delta_I(l)(m)=l/m.
\end{displaymath}

We instantiate the density lifting \eqref{eq:comontt} with the
stream comonad, and obtain
\begin{displaymath}
  \TT DX=(\{\lam{m}f(s/m)~|~f\in\Pred(S,X),s\in S_0\},\NN\Arrow X_1)\quad
  (X\in\Pred).
\end{displaymath}
We note that $\TT D(\emptyset,I)=(\emptyset,DI)$.
\begin{theorem}
  Let $R\in\Set$ and $S\in\Pred_{\NN\Arrow R}$
  be a single lifting parameter for the stream comonad $\mD$.
  For any $X\in\Pred$, we have the following equivalence:
  \begin{eqnarray*}
    x\in(\TT DX)_0
    & \iff &
    \ex{v\in S_0}
    (\fa{i\in\NN}v/i\in S_0\implies x(i)\in X_0)\wedge \\
    & & \quad\quad\quad\quad(\fa{m,n\in\NN}v/n=v/m\implies x(n)=x(m)).
  \end{eqnarray*}
\end{theorem}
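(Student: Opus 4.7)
The plan is to unfold the formula for $\TT D X$ given by \eqref{eq:comontt} specialised to the stream comonad, where the co-Kleisli lifting acts as $f^\flat(l) = \lam m f(l/m)$, and show the equivalence directly. Recall that elements of $(\TT D X)_0$ are exactly streams of the form $\lam m f(s/m)$ for some $f\in\Pred(S,X)$ and $s\in S_0$. The backward direction is the substantive one; forward is a matter of choosing $v=s$ and reading off the conditions.

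For the forward direction, given $x = \lam m f(s/m)$ with $f\in\Pred(S,X)$ and $s\in S_0$, set $v=s$. Then: (i) whenever $v/i\in S_0$, we have $x(i)=f(v/i)\in X_0$ because $f$ takes $S_0$ into $X_0$; (ii) whenever $v/n = v/m$, we have $x(n) = f(v/n) = f(v/m) = x(m)$. Both required conjuncts hold.

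For the backward direction, suppose $v\in S_0$ satisfies the two conditions. I construct a witness $f\in\Pred(S,X)$ such that $x = \lam m f(v/m)$, and then $x\in (\TT D X)_0$ follows by taking $s=v$. First observe that $v/0 = v \in S_0$ forces $x(0)\in X_0$ by the first condition, so $X_0$ is non-empty; fix some $x_0\in X_0$. Define $f\colon (\NN\Arrow R)\arrow X_1$ by
\[
  f(w) = \begin{cases} x(m) & \text{if } w = v/m \text{ for some } m\in\NN,\\ x_0 & \text{otherwise.}\end{cases}
\]
Well-definedness of the first clause is exactly condition (ii): if $w = v/m = v/n$ then $x(m)=x(n)$. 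To see $f\in\Pred(S,X)$, i.e.\ $f(S_0)\subseteq X_0$, take $w\in S_0$: in the first clause $w=v/m$ lies in $S_0$, so condition (i) gives $f(w)=x(m)\in X_0$; in the second clause $f(w)=x_0\in X_0$ by choice. Finally $\lam m f(v/m) = \lam m x(m) = x$ by construction.

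I do not expect a genuine obstacle; the only subtle point is the potential vacuity when $X_0 = \emptyset$ or $S_0 = \emptyset$, and this is handled by the observation above that the right-hand side of the claimed equivalence already entails $X_0\neq\emptyset$ whenever it is satisfied (and both sides are empty when $S_0$ is). The key conceptual remark, worth stating in the proof, is that the two conditions on $v$ are precisely what is needed to extend the partial assignment $v/m \mapsto x(m)$ to a morphism of predicates $S\arrow X$, with the first condition ensuring compatibility with the predicate inclusion on $S_0$-elements and the second ensuring functionality of the assignment on the orbit $\{v/m\mid m\in\NN\}$.
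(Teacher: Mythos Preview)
Your proof is correct and follows essentially the same approach as the paper: both directions are argued in the same way, and the backward direction constructs the witnessing morphism by defining it on the orbit $\{v/m\mid m\in\NN\}$ (using condition (ii) for well-definedness), extending by a constant from $X_0$, and checking the predicate condition via (i). The one minor organisational difference is that the paper treats the case $X_0=\emptyset$ separately up front and then assumes $X_0\neq\emptyset$, whereas you derive $X_0\neq\emptyset$ inside the backward direction from the observation $v/0=v\in S_0$; your handling is arguably cleaner.
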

\begin{proof}
It is easy to check that this equivalence holds when $X_0=\emptyset$.
We thus show this equivalence under the assumption that $X_0\neq\emptyset$.

($\Longrightarrow$) 
Take $f\in\Pred(S,X)$ and $s\in S_0$ and assume $x=\lam mf(s/m)$.
We show that $s$ satisfies the two conditions on the right hand side:
\begin{enumerate}
\item Let $i\in\NN$ and assume $s/i\in S_0$. From
  $f\in\Pred(S,X)$, we have $f(s/i)=x(i)\in X_0$.
\item Let $m,n\in\NN$ and assume $s/n=s/m$.
  Then $x(n)=f(s/n)=f(s/m)=x(m)$.
\end{enumerate}

($\Longleftarrow$) Let $x\in DI$ and $v\in S_0$. We consider the following
binary relation $F\subseteq DR\times I$:
\begin{displaymath}
  F=\{(v/i,x(i))~|~i\in\NN\}.
\end{displaymath}
From the condition $\fa{n,m\in\NN}v/n=v/m\implies x(n)=x(m)$, this
binary relation is actually a (graph of a) partial function from $DR$ to
$I$.  Moreover, for any $a\in S_0$, if $F(a)$ is defined, then $F(a)\in
X_0$, because of the condition $\fa{i\in\NN}v/i\in S_0\implies x(i)\in
X_0$.  We now pick an element $y\in X_0$, and extend $F$ to a total
function $F':DR\arrow I$ such that $F'(a)=y$ when $a$ is not in the
domain of $F$. Clearly $F'\in\Pred(S,X)$. Now for any $m\in\NN$,
we have $F'(v/m)=x(m)$. Hence $x\in(\TT DX)_0$.
\end{proof}





\section{Lifting Algebraic Operations to Codensity-Lifted Monads}
\label{sec:}

We introduce the concept of {\em algebraic operation} \cite{DBLP:journals/acs/PlotkinP03}
for general monads, and discuss their liftings to codensity liftings of
monads.  The following definition is a modification  of
\cite[Proposition 2]{DBLP:journals/acs/PlotkinP03} for non-strong
monads, and coincides with the original one when $\CC=\Set$.
\begin{definition}
  Let $\CC$ be a category, $\mT$ be a monad on $\CC$, $A$ be a set and
  assume that $\CC$ has powers by $A$.  An {\em $A$-ary algebraic
    operation} for $\mc T$ is a natural transformation
  $\alpha:A\pitchfork K\arrow K$, where $K$ is the right adjoint of
  the Kleisli resolution of $\mT$.  We write $\Alg(\mT,A)$ for the
  class of $A$-ary algebraic operations for $\mT$.
\end{definition}
\begin{example}\label{ex:union}
  For each set $A$,
  the powerset monad $\mT_p$ has the algebraic operation of {\em
    $A$-ary set-union}
  $
    \union^A_X:A\pitchfork T_pX\arrow T_pX
    $ given by
    $\union^A_X(V)=\bigcup_{a\in A}V_a
  $.
\end{example}

Fix a fibration $p:\EE\arrow\BB$, a monad $\mT$ on $\BB$,
a set $A$ and assume that $\EE$ has and $p$ preserves powers by $A$.
\begin{definition}
  Let $\dot\mT$ be a lifting of $\mT$ along $p$ and
  $\alpha\in\Alg(\mT,A)$ be an $A$-ary algebraic operation for $\mc
  T$.  A {\em lifting} of $\alpha$ to $\dot\mT$ is an algebraic
  operation $\dot\alpha\in\Alg(\dot\mT,A)$ such that
  $p\dot\alpha=\alpha p_k$; here $p_k:\EE_{\dot\mT}\arrow\BB_\mT$ is
  the canonical extension of $p$ to Kleisli categories.  We write
  $\Alg_\alpha(\dot\mT,A)$ for the class $\{\dot\alpha\in
  \Alg(\dot\mT,A)~|~p\dot\alpha=\alpha p_k\}$ of liftings of $\alpha$
  to $\dot\mT$.
\end{definition}
\begin{example}[Continued from Example \ref{ex:union}] \label{ex:unionlift}
   Let $\dot\mT$ be a lifting
  of $\mT_p$ along $p:\Top\arrow\Set$.  Since $p$ is faithful, there
  is at most one lifting of $\union^A$ to $\dot \mT$. It exists if and only if for
  every $(X,\open X)\in\Top$, 
  $\union_X^A$ is a continuous function of
  type $A\pitchfork\dot T(X,\open X)\arrow\dot T(X,\open X)$.
\end{example}

We give a characterisation of the liftings of algebraic
operations to codensity liftings of monads.  Fix a lifting parameter
$\liftparam{\EE}{\BB}{\mT}{\AA}{R}{S}$
and assume that $(p,S)$ satisfies the codensity condition. We
perform the codensity lifting of $\mT$ along $p$ with the lifting
parameter $(R,S)$, and consider the Kleisli resolution of
$\TT\mT$. The functor $p:\EE\arrow\BB$ extends to
$p_k:\EE_{\TT\mT}\arrow\BB_{\mT}$, and it satisfies
\begin{displaymath}
  p_k\TT J=Jp,\quad 
  p\TT K=Kp_k,\quad
  p\TT\eta=\eta p,\quad
  p_k\TT\epsilon=\epsilon p_k.
\end{displaymath}

Starting from a natural transformation $\alpha_0:A\pitchfork S\arrow
S$ such that $p\alpha_0=\alpha R$, we construct a lifting
$\phi(\alpha_0)\in\Alg_\alpha(\TT\mT,A)$ of $\alpha$ as follows.

From $A\pitchfork S=(A\pitchfork\Id_\EE)S$, the natural transformation
$\alpha_0$ induces the mate $\ol{\alpha_0}:A\pitchfork\Id_\EE\arrow
\Ran SS$.
We then obtain the following situation:
\begin{displaymath}
  \xymatrix{
    A\pitchfork\Id_\EE \ar@/^1pc/[rrrd]^-{\overline{\alpha_0}} \ar@{.>}[rd]_-\beta \\
    & \TT T \ar[rr]_-\sigma & & \Ran SS & [\EE,\EE] \ar[dd]^-{[\EE,p]} \\
    A\pitchfork p \ar[rd]_-{\alpha Jp\bul A\pitchfork\eta p} \ar@/^1pc/[rrrd]^-{\overline{\alpha R}} \\
    & Tp \ar[rr]_-{\ol{K\epsilon R}} & & p(\Ran SS) & [\EE,\BB]
  }
\end{displaymath}
The triangle in the base category commutes by:
\begin{eqnarray*}
  \ol{K\epsilon R}\bul\alpha Jp\bul A\pitchfork\eta p
  & = & \ol{K\epsilon R\bul\alpha JpS\bul A\pitchfork\eta pS}
  = \ol{K\epsilon R\bul\alpha JKR\bul A\pitchfork\eta KR} \\
  & = & \ol{(K\epsilon\bul\alpha JK\bul A\pitchfork\eta K)R}
  = \ol{(\alpha\bul A\pitchfork K\epsilon\bul A\pitchfork\eta K)R}
  = 
  p\ol{\alpha_0}.
\end{eqnarray*}
We thus obtain the unique morphism $\beta$ above $\alpha Jp\bul
A\pitchfork\eta p$ making the triangle in the total category
commute. Using this $\beta$, we define
$\phi(\alpha_0):A\pitchfork\TT K\arrow\TT K$ by
\begin{displaymath}
  \phi(\alpha_0)=\TT K\TT\epsilon\bul\beta\TT K:A\pitchfork\TT K\arrow\TT K.
\end{displaymath}
This algebraic operation is 
a lifting of $\alpha$ to $\TT T$:
\begin{displaymath}
  p\phi(\alpha_0)=
  p (\TT K\TT\epsilon\bul\beta\TT K)
  = (K\epsilon \bul \alpha JK \bul A \pitchfork \eta K)p_k
  =  (\alpha \bul A\pitchfork K\epsilon\bul A \pitchfork \eta K)p_k
  = \alpha p_k.
\end{displaymath}
The following theorem shows that $\phi$
characterises the class of liftings of $\alpha$ to
the codensity liftings of monads. It is an analogue of
Theorem 11 in \cite{katsumatarelating}, which is stated for
the categorical $\top\top$-lifting.
\begin{theorem}\label{th:alg}
  Let $p:\EE\arrow\BB$ be a fibration, $\mT$ be a monad on $\BB$, and
  $\liftparam{\EE}{\BB}{\mT}{\AA}{R}{S}$ be a lifting parameter, and
  $A$ be a set.  Suppose that $(p,S)$ satisfies the codensity condition,
  and $\EE$ has, and $p$ preserves
  powers by $A$. Then for any $\alpha\in\Alg(\mT,A)$, the mapping
  $\phi$ constructed as above has the following type and is bijective:
  \begin{displaymath}
    \phi:[\AA,\EE]_{\alpha R}(A\pitchfork S,S)
    \arrow
    \Alg_\alpha(\TT\mT,A).
  \end{displaymath}
\end{theorem}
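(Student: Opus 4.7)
The plan is to construct an explicit inverse $\psi$ to $\phi$ and to verify both round-trip identities by exploiting the universal properties of the cartesian lift $\sigma : \TT T \arrow \Ran SS$ and of the right Kan extension $(\Ran SS, c_S)$. For $\dot\alpha \in \Alg_\alpha(\TT\mT, A)$ I define
\[
  \psi(\dot\alpha) = c_S \bul \sigma S \bul \dot\alpha \, \TT J \, S \bul A \pitchfork \TT\eta \, S : A \pitchfork S \arrow S.
\]
A short preliminary calculation shows that $\psi(\dot\alpha)$ sits above $\alpha R$: unfolding $p$ and using $p\dot\alpha = \alpha p_k$, $p_k \TT J = J p$, $p \TT\eta = \eta p$, and $p c_S \bul p\sigma S = K\epsilon R$ (the defining identity of the mate), naturality of $\alpha$ together with the triangle identity $K\epsilon \bul \eta K = \id_K$ collapses the composite to $\alpha R$.

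To establish $\psi \circ \phi = \id$, the key lemma is
\[
  \phi(\alpha_0) \, \TT J \bul A \pitchfork \TT\eta = \beta.
\]
I would prove it by first recognising $\TT K \TT\epsilon \, \TT J = \TT\mu$, so that $\phi(\alpha_0) \, \TT J = \TT\mu \bul \beta \, \TT T$, then applying naturality of $\beta$ at $\TT\eta$ to rewrite $\beta \, \TT T \bul A \pitchfork \TT\eta = \TT T \TT\eta \bul \beta$, and finally invoking the monad law $\TT\mu \bul \TT T \TT\eta = \id_{\TT T}$. Whiskering by $S$ and using the defining property $c_S \bul \overline{\alpha_0} \, S = \alpha_0$ of the mate then yields $\psi(\phi(\alpha_0)) = \alpha_0$.

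For $\phi \circ \psi = \id$, given $\dot\alpha$ I set $\gamma := \dot\alpha \, \TT J \bul A \pitchfork \TT\eta$ and show that $\gamma$ satisfies both defining conditions of the cartesian lift $\beta$ corresponding to $\alpha_0 := \psi(\dot\alpha)$: the equation $p \gamma = \alpha J p \bul A \pitchfork \eta p$ is immediate from $p \dot\alpha = \alpha p_k$ and $p \TT\eta = \eta p$, while the equation $\sigma \bul \gamma = \overline{\alpha_0}$ reduces under the mate bijection to $c_S \bul \sigma S \bul \gamma S = \alpha_0$, which is precisely the definition of $\psi(\dot\alpha)$. Hence $\gamma = \beta$ by the uniqueness part of the universal property of $\sigma$. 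Independently, every $\dot\alpha \in \Alg(\TT\mT, A)$ can be reconstructed from $\gamma$ as $\dot\alpha = \TT K \TT\epsilon \bul \gamma \, \TT K$: this follows from naturality of $\dot\alpha$ applied to $\TT\epsilon$ together with the triangle identity $\TT K \TT\epsilon \bul \TT\eta \, \TT K = \id_{\TT K}$. Substituting $\gamma = \beta$ yields $\dot\alpha = \phi(\alpha_0)$.

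I expect the main obstacle to be the $\gamma = \beta$ step: it requires aligning the universal property of the cartesian lift $\sigma$, which characterises $\beta$ by its $p$-image \emph{and} its composite with $\sigma$, against the mate bijection of the right Kan extension, which characterises $\overline{\alpha_0}$ by its whiskered composite with $c_S$. Once this double uniqueness argument is packaged cleanly, the remaining identities reduce to standard applications of the monad and triangle laws.
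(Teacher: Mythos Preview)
Your proposal is correct and follows essentially the same approach as the paper: the paper also defines the inverse by $\psi(\dot\alpha)=\ul{\sigma\bul\dot\alpha\,\TT J\bul A\pitchfork\TT\eta}$ (which unfolds to your $c_S\bul\sigma S\bul\dot\alpha\,\TT J\,S\bul A\pitchfork\TT\eta\,S$), then verifies the two round-trips via the same identities you isolate---$\TT K\TT\epsilon\,\TT J=\TT\mu$, interchange/naturality to swap $\beta\,\TT T$ with $A\pitchfork\TT\eta$, the monad law, and the cartesian uniqueness of $\beta$ to identify it with $\dot\alpha\,\TT J\bul A\pitchfork\TT\eta$. Your write-up is slightly more explicit about the type check $p\psi(\dot\alpha)=\alpha R$ and about why the cartesian and Kan-extension universal properties interact cleanly in the $\gamma=\beta$ step, but the argument is the same.
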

\begin{proof}
  The candidate $\psi$ of the inverse of $\phi$ is explicitly given as
  follows: it maps $\dot\alpha\in \Alg_\alpha(\TT\mT,A)$ to a morphism
  of type $A\pitchfork S\arrow S$ by the following mate:
  \begin{displaymath}
    \infer{\psi(\dot\alpha)=\ul{\sigma\bul\dot\alpha\TT J\bul A\pitchfork\TT\eta}:A\pitchfork S\arrow S}{
      \sigma\bul\dot\alpha\TT J\bul A\pitchfork\TT\eta:A\pitchfork\Id_\EE\arrow\Ran SS
    }
  \end{displaymath}
  We first show $\psi \circ \phi = \id$:
  \[ \psi ( \phi ( \alpha_0 ) ) = \underline{\sigma \bul \TT{\mu} \bul
    \beta \TT T \bul A \pitchfork \TT{\eta}} = \underline{\sigma \bul
    \TT{\mu} \bul \TT T \TT{\eta} \bul \beta} = \alpha_0 . \] We next
  show $\phi \circ \psi = \id$. By definition, $\phi ( \psi (
  \dot{\alpha} ) ) = \TT K\TT\epsilon \bul \beta \TT K$ where $\beta$
  is the unique morphism above $( \alpha J\bul A \pitchfork \eta ) p$
  such that $\sigma \bul \beta = \overline{\psi ( \dot{\alpha} )} =
  \sigma \bul \dot{\alpha}\TT J \bul A \pitchfork \TT{\eta}$. The
  morphism $\dot{\alpha}\TT J \bul A \pitchfork \TT{\eta}$ is exactly
  such one. Therefore
  \begin{align*}
    \phi ( \psi ( \dot{\alpha} ) )
    & = \TT K\TT\epsilon \bul ( \dot{\alpha}\TT J \bul A \pitchfork \TT{\eta} ) \TT K
    = \TT K\TT\epsilon \bul \dot{\alpha}\TT J\TT K \bul A \pitchfork \TT{\eta} \TT K \\
    & = \dot{\alpha} \bul A \pitchfork \TT K\TT{\epsilon} \bul A \pitchfork \TT{\eta} \TT K
    = \dot{\alpha}.
  \tag*{\qedhere}
  \end{align*}
\end{proof}
\begin{example}[Continued from Example \ref{ex:unionlift}]
  We look at liftings of
  $\union^A\in\Alg(\mT_p,A)$ to the codensity liftings of $\mT_p$ along
  $p:\Top\arrow\Set$ with some single lifting parameters.

  Let $R\in\Set$ and $S=(T_pR,\open S)\in\Top$ be a single lifting
  parameter.  Theorem \ref{th:alg} is instantiated to the following
  statement: a lifting of $\union^A$ to $\TT\mT_p$ exists if and only
  if $\union^A_R:A\pitchfork T_pR\arrow T_pR$ is a continuous function
  of type $A\pitchfork S\arrow S$. Here, $A\pitchfork S$ is the
  product of $A$-fold copies of $S$, and its topology
  $\open {A\pitchfork S}$ is generated from all the sets of the form
  $\pi_a^{-1}(U)$, where $a$ and $U$ range over $A$ and $\open S$,
  respectively.  We further instantiate the single lifting parameter
  $(R,S)$ as follows (see Section \ref{sec:extop}):
  \begin{enumerate}
  \item Case $R=1,\open S=\{\emptyset,\{1\},\{\emptyset,1\}\}$.  For
    any set $A$, $\union^A_1$ is a continuous function of type
    $A\pitchfork S\arrow S$ because
$
      (\union^A_1)^{-1}(\{1\})=
      \bigcup_{a\in A}\pi_a^{-1}(\{1\})\in
      \open {A\pitchfork S}.
$
    From Theorem \ref{th:alg}, for any set $A$, $\union^A$ lifts
    to the lower Vietoris lifting $\TT\mT_p$.

  \item Case $R=1,\open S=\{\emptyset,\{\emptyset\},\{\emptyset,1\}\}$.
    For any finite set $A$, $\union^A_1$ is a
    continuous function of type $A\pitchfork S\arrow
    S$ because
$
      (\union^A_1)^{-1}(\{\emptyset\})=
      \bigcap_{a\in A}\pi_a^{-1}(\{\emptyset\})\stackrel *\in\open {A\pitchfork S}.
$
    On the other hand, the membership $\stackrel *\in$ does not hold
    when $A$ is infinite. 
    From Theorem \ref{th:alg}, for any set $A$,
    $\union^A$ lifts to the upper Vietoris lifting $\TT\mT_p$ if
    and only if $A$ is finite.
  \end{enumerate}
\end{example}



\section{Pointwise Codensity Lifting}
\label{sec:without}

Fix a fibration $p:\EE\arrow\BB$, a monad $\mT$ on $\BB$ and a lifting
parameter $\liftparam{\EE}{\BB}{\mc T}{\AA}{R}{S}$.  When $\AA$ is a
large category, or $\BB,\EE$ are not very complete, the right Kan
extension $\Ran SS$ may not exist, hence the codensity lifting in
Section \ref{sec:codense} is not applicable to lift $\mT$. In this
section we introduce an alternative method, called {\em pointwise
  codensity lifting}, that relies on fibred limits of $p$. The trick
is to swap the order of computation: instead of taking the inverse
image after computing $\Ran SS$, we first take the inverse image of
the components of $\Ran SS$, bringing everything inside a fibre, then
compute the right Kan extension as a fibred limit.

We assume that $\AA$ is small (resp. large) and $p$ has fibred small
(resp. large) limits. The pointwise codensity lifting lifts $\mT$ as
follows. What we actually construct below is a Kleisli triple over $\EE$
which corresponds to a lifting of $\mT$.

\subsection*{Lifting Object Assignment}
We first lift $T$ to an object mapping $\dot T:|\EE|\arrow|\EE|$.
Let $X\in\EE$. Consider the following diagram:
\begin{displaymath}
  \xymatrix{
    X\downarrow S \ar@{}[rd]|-{\Arrow{\gamma_X}} \ar[r]^-{\pi_X} \ar[d]_-{!_{X\downarrow S}} & \AA \ar[r]^-R \ar[d]^-S & \BB_\mT \ar@{=}[r] \ar[d]_-{K} \ar@{}[rd]|(.3){\Arrow\epsilon} & \BB_\mT \ar[d]^-K \\
    1 \ar[r]_-X & \EE \ar[r]_-p & \BB \ar[r]_-T  \ar[ru]_-J & \BB
  }
\end{displaymath}
where $(X\downarrow S,\pi_X,!_{X\downarrow S},{\gamma_X})$ is the {\em
  comma category}. The middle square commutes as $R,S$ is a lifting
parameter. We let $\delta_X=K\epsilon R\pi_X\bul Tp{\gamma_X}$ be the
composite natural transformation, and take the inverse image of $S\pi_X$
along $\delta_X$:
\begin{displaymath}
  \xymatrix{
    \delta_X^{-1}(S\pi_X) \ar@{.>}[rr]^-{\ol{\delta_X}(S\pi_X)} & & S\pi_X & [X\downarrow S,\EE] \ar[d]^-{[X\downarrow S,p]} \\
    TpX!_{X\downarrow S} \ar[rr]_-{\delta_X} & & KR\pi_X & [X\downarrow S,\BB]
  }
\end{displaymath}
We obtain a functor $\delta_X^{-1}(S\pi_X):X\downarrow S\arrow\EE$
such that $p\delta_X^{-1}(S\pi_X)=TpX!_{X\downarrow S}$.  We then define
$\TT TX$ by $ \TT TX=\lim(\delta_X^{-1}(S\pi_X)) $, where right hand
side is the fibred limit.  In the following calculations we will use the
vertical projection and the tupling operation of this fibred limit, denoted
by
\begin{eqnarray*}
  & & P_X:(\TT TX)!_{X\downarrow S}\arrow\delta_X^{-1}(S\pi_X), \\
  & & \angle-:[X\downarrow S,\EE]_{f!_{X\downarrow S}}(Y!_{X\downarrow S},\delta_X^{-1}(S\pi_X))\arrow\EE_f(Y,\TT TX)\quad (f\in\EE(Y,TpX)).
\end{eqnarray*}

\subsection*{Lifting the Unit}
We next lift $\eta$.  Consider the following diagram:
\begin{displaymath}
  \xymatrix{
    X!_{X\downarrow S} \ar@/^1pc/[rrrd]^-{\gamma_X} \ar@{.>}[rd]_-{\eta'_X} \\
    & \delta_X^{-1}(S\pi_X) \ar[rr]_-{\ol{\delta_X}(S\pi_X)} & & S\pi_X & [X\downarrow S,\EE] \ar[dd]^-{[X\downarrow S,p]} \\
    pX!_{X\downarrow S} \ar@/^1pc/[rrrd]^-{p{\gamma_X}} \ar[rd]_-{\eta pX!_{X\downarrow S}} \\
    & TpX!_{X\downarrow S} \ar[rr]_-{\delta_X} & & KR\pi_X & [X\downarrow S,\BB]
  }
\end{displaymath}
where the lower triangle commute by:
\begin{displaymath}
  \delta_X\bul\eta p X!_{X\downarrow S}=
  K\epsilon R\pi_X\bul \eta pS\pi_X\bul p{\gamma_X}=
  K\epsilon R\pi_X\bul \eta KR\pi_X\bul p{\gamma_X}=
  p\gamma_X.
\end{displaymath}
Therefore there exists the unique natural transformation $\eta'_X$ above $\eta
pX!_{X\downarrow S}$ making the upper triangle commute.  We define
$\TT\eta_X=\angle{\eta'_X}$,
which is above $\eta pX$.

\subsection*{Lifting the Kleisli lifting}
We finally lift the Kleisli lifting $(-)^\#$ of $\mT$.
Let $g:X\arrow \TT TY$ be a morphism in $\EE$,
and $f=P_Y\bul g!_{Y\downarrow S}:X!_{Y\downarrow S}\arrow\delta_Y^{-1}(S\pi_Y)$ be a morphism, which is  above $pg!_{Y\downarrow S}$ and satisfies $g=\angle
f$.  We obtain the composite natural transformation
$\ol{\delta_Y}(S\pi_Y)\bul f:X!_{Y\downarrow
  S}\arrow\delta_Y^{-1}(S\pi_Y)\arrow S\pi_Y$.  From the universal
property of the comma category, we obtain the unique functor
$M_f:Y\downarrow S\arrow X\downarrow S$ such that $\pi_XM_f=\pi_Y$ and
$\gamma_XM_f=\ol{\delta_Y}(S\pi_Y)\bul f$. We next consider the
following diagram:
\begin{displaymath}
  \xymatrix{
    \delta_X^{-1}(S\pi_X)M_f \ar@{.>}[rd]_-{f^\flat} \ar@/^1pc/[rrrd]^-{\ol{\delta_X}(S\pi_X)M_f} \\
    & \delta_Y^{-1}(S\pi_Y) \ar[rr]_-{\ol{\delta_Y}(S\pi_Y)} & & S\pi_Y & [Y\downarrow S,\EE] \ar[dd]^-{[Y\downarrow S,p]} \\
    TpX!_{Y\downarrow S} \ar@/^1pc/[rrrd]^-{\delta_XM_f} \ar[rd]_-{\mu pY!_{Y\downarrow S}\bul Tpf} \\
    & TpY!_{Y\downarrow S} \ar[rr]_-{\delta_Y} & & KR\pi_Y & [Y\downarrow S,\BB]
  }
\end{displaymath}
where the lower triangle commutes. 
Therefore there exists the unique natural transformation $f^\flat$
above
$
\mu pY!_{Y\downarrow S}\bul Tpf=
\mu pY!_{Y\downarrow S}\bul Tpg!_{Y\downarrow S}=
(pg)^\#!_{Y\downarrow S}  
$
making the upper triangle
commute.
Then we define $g^{\TT\#}=\angle{f^\flat\bul P_X M_f}$, which
is above $(pg)^\#$.
\begin{theorem}\label{th:gentt}
  Let $p:\EE\arrow\BB$ be a fibration with fibred small (resp. large) limits, $\mT$
  be a monad on $\BB$, $\liftparam\EE\BB\mT\AA RS$ be a lifting
  parameter for $\mT$ and assume that $\AA$ is small (resp. large). The tuple $(\TT
  T,\TT\eta,(-)^{\TT\#})$ constructed as above is a Kleisli triple on $\EE$,
  and the corresponding monad is a lifting of $\mc T$.
\end{theorem}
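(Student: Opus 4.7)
The plan is to verify that $(\TT T,\TT\eta,(-)^{\TT\#})$ satisfies the three Kleisli triple axioms---namely $\TT\eta_X^{\TT\#}=\id_{\TT TX}$, $f^{\TT\#}\bul\TT\eta_X=f$ for $f\colon X\arrow\TT TY$, and $g^{\TT\#}\bul f^{\TT\#}=(g^{\TT\#}\bul f)^{\TT\#}$ for $f\colon X\arrow\TT TY$, $g\colon Y\arrow\TT TZ$---and then deduce the lifting property of the resulting monad $\TT\mT$. Since $\TT T$, $\TT\eta_X$, and $g^{\TT\#}$ have been constructed above $Tp$, $\eta pX$, and $(pg)^\#$ respectively, we immediately obtain $p\TT T=Tp$ and $p\TT\eta=\eta p$; once associativity is in hand, the induced multiplication satisfies $p\TT\mu_X=p(\id_{\TT TX}^{\TT\#})=(p\id_{\TT TX})^\#=\id_{TpX}^\#=\mu pX$, so $p\TT\mu=\mu p$ follows.

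The uniform strategy for the three axioms exploits two universal properties. First, the tupling $\angle{-}$ for the fibred limit $\TT TX$ is a bijection, so two morphisms $g_1,g_2\colon Z\arrow\TT TX$ above the same base morphism coincide iff $P_X\bul g_1!_{X\downarrow S}=P_X\bul g_2!_{X\downarrow S}$. Second, $\ol{\delta_X}(S\pi_X)$ is cartesian, so a natural transformation into $\delta_X^{-1}(S\pi_X)$ lying above a specified base morphism is determined by its composition with $\ol{\delta_X}(S\pi_X)$. Hence each axiom reduces to a base-level identity following from the Kleisli triple laws of $\mT$, together with an equation between natural transformations with codomain $S\pi_X$, checkable by unfolding the defining triangles of $\eta'_X$ and $(-)^\flat$.

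For the unit laws this unpacking is short. To establish $\TT\eta_X^{\TT\#}=\id_{\TT TX}$, note that $M_{\eta'_X}$ is the identity functor on $X\downarrow S$ because $\gamma_X\bul(\ol{\delta_X}(S\pi_X)\bul\eta'_X)=\gamma_X$ by the defining triangle of $\eta'_X$; consequently $(\eta'_X)^\flat=P_X$ by uniqueness of the cartesian factorisation, and the tupling collapses to the identity. For $f^{\TT\#}\bul\TT\eta_X=f$, both sides lie above $pf=(pf)^\#\bul\eta pX$, and post-composition with $P_Y$ reduces the equation to the defining triangle of $\eta'_X$ combined with naturality of $P$.

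The main obstacle is associativity. Both sides of $g^{\TT\#}\bul f^{\TT\#}=(g^{\TT\#}\bul f)^{\TT\#}$ sit above $(pg)^\#\bul(pf)^\#=((pg)^\#\bul pf)^\#$ by the Kleisli laws for $\mT$. The right-hand side is already in the form $\angle{(g^{\TT\#}\bul f)_*^\flat\bul P_X M_{(g^{\TT\#}\bul f)_*}}$, where $(-)_*$ abbreviates the factorisation through $P$; to equate it with the left-hand side one must describe $g^{\TT\#}\bul f^{\TT\#}$ in $\angle{-}$-form, which requires tracking how the reindexing functor $M_{(-)}$ and the vertical lift $(-)^\flat$ interact under iterated application. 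Concretely, one must establish a composition law linking $M_{f_*}$ and $M_{g_*}$ with $M_{(g^{\TT\#}\bul f)_*}$ (via the universal property of the comma category $Z\downarrow S$), and then invoke uniqueness of the vertical factor above $((pg)^\#\bul pf)^\#$ to conclude. The bookkeeping is notationally heavy but introduces no new universal property beyond cartesianness of $\ol{\delta_Z}(S\pi_Z)$ and the limit property of $\TT TZ$.
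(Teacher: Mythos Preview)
Your proposal is correct and follows essentially the same route as the paper: reduce each Kleisli law to an equality detectable via the limit projection $P$ and the cartesian morphism $\ol{\delta}(S\pi)$, using the universal property of the comma category to control the reindexing functors $M_{(-)}$ (the paper introduces $h=g^\flat\bul fM_g$ and checks $M_fM_g=M_h$ and $h^\flat=g^\flat\bul f^\flat M_g$, exactly your ``composition law''). Two small slips in your first unit law: the defining triangle reads $\ol{\delta_X}(S\pi_X)\bul\eta'_X=\gamma_X$ (no leading $\gamma_X\bul$), and the conclusion is $(\eta'_X)^\flat=\id$ rather than $P_X$, so that $\TT\eta_X^{\TT\#}=\angle{(\eta'_X)^\flat\bul P_X M_{\eta'_X}}=\angle{P_X}=\id_{\TT TX}$.
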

\begin{proof}
  We first show $(\TT\eta_X)^\#=\angle{\eta'_X}^\#=\id$.  The
  composite natural transformation is
  $\ol{\delta_X}(S\pi_X)\bul\eta'_X=\gamma_X$ by definition.
  Therefore $M_f=\Id_{X\downarrow S}$.  Hence $f^\flat$ is also the
  identity morphism.  Therefore the above composite is also the
  identity morphism.

  We next show $f^\#\circ\TT\eta_X=f$:
  \begin{displaymath}
    \angle{f^\flat\bul P M_f}\circ\TT\eta_X
    =
    \angle{f^\flat\bul P M_f\bul\angle{\eta'_X}!_{Y\downarrow S}}
    =
    \angle{f^\flat\bul P M_f\bul\angle{\eta'_X}!_{X\downarrow S}M_f}
    =
    \angle{f^\flat\bul\eta'_XM_f}
    \stackrel *= 
    \angle f.
  \end{displaymath}
  The last equation $\stackrel *=$ holds because the morphisms on both
  sides are above the same morphism:
  \begin{eqnarray*}
    p(f^\flat\bul\eta'_X M_f)
    & = & pf^\flat\bul p\eta'_XM_f= \mu pY!_{Y\downarrow S}\bul Tpf\bul\eta pX!_{X\downarrow S}M_f\\
    & = & \mu pY!_{Y\downarrow S}\bul Tpf\bul\eta pX!_{Y\downarrow S}=\mu pY!_{Y\downarrow S}\bul \eta p\delta_Y^{-1}(S\pi_Y)\bul pf\\
    & = & \mu pY!_{Y\downarrow S}\bul \eta TpY!_{Y\downarrow S}\bul pf=pf
  \end{eqnarray*}
  and are equalised by the cartesian morphism $\ol{\delta_Y}(S\pi_Y):\delta_Y^{-1}(S\pi_Y)\arrow S\pi_Y$:
  \begin{displaymath}
    \ol{\delta_Y}(S\pi_Y)\bul f^\flat\bul\eta'_X M_f=
    \ol{\delta_X}(S\pi_X)M_f\bul\eta'_X M_f=
    \gamma_X M_f=
    \ol{\delta_Y}(S\pi_Y)\bul f.
  \end{displaymath}

  We finally show $(\angle g^\#\circ \angle f)^\#=\angle
  g^\#\circ\angle f^\#$ for $\angle f:X\arrow\TT TY$ and $\angle
  g:Y\arrow\TT TZ$. Let $h = g^{\flat} \bullet f M_g$.
  \begin{enumerate}
  \item We show $M_f M_g = M_h$. From
    \[ \pi_X M_h = \pi_Z = \pi_Y M_g = \pi_X M_f M_g \]
    and
    \[ \gamma_X M_h = \overline{\delta_Z} ( S \pi_Z ) \bullet g'
    \bullet f M_g = \overline{\delta_Y} ( S \pi_Y ) M_g \bullet f
    M_g = ( \overline{\delta_Y} ( S \pi_Y ) \bullet f )
    M_g = \gamma_X M_f M_g, \]
    the universal property of the comma object makes $M_h = M_f M_g$.
  \item We show $h^{\flat} = g^{\flat} \bullet f^{\flat} M_g$. First
    the following calculation shows $p
    h^{\flat} = p ( g^{\flat} \bullet f^{\flat} M_g )$:
    \begin{eqnarray*}
      p h^{\flat} & = & \mu p Z!_{Z \downarrow S} \bullet T p h\\
      & = & \mu p Z!_{Z \downarrow S} \bullet T p g^{\flat} \bullet T p f M_g\\
      & = & \mu p Z!_{Z \downarrow S} \bullet T ( \mu p Z!_{Z \downarrow S}
      \bullet T p g ) \bullet T p f M_g\\
      & = & \mu p Z!_{Z \downarrow S} \bullet T \mu p Z!_{Z \downarrow S} \bullet
      T^2 p g \bullet T p f M_g\\
      & = & \mu p Z!_{Z \downarrow S} \bullet \mu T p Z!_{Z \downarrow S} \bullet
      T^2 p g \bullet T p f M_g\\
      & = & \mu p Z!_{Z \downarrow S} \bullet T p g \bullet \mu p Y !_{Z
        \downarrow S} \bullet T p f M_g\\
      & = & \mu p Z!_{Z \downarrow S} \bullet T p g \bullet \mu p Y !_{Y
        \downarrow S} M_g \bullet T p f M_g .\\
      p ( g^{\flat} \bullet f^{\flat} M_g ) & = & p g^{\flat} \bullet p
      f^{\flat} M_g\\
      & = & \mu p Z!_{Z \downarrow S} \bullet T p g \bullet ( \mu p Y!_{Y
        \downarrow S} \bullet T p f ) M_g\\
      & = & \mu p Z!_{Z \downarrow S} \bullet T p g \bullet \mu p Y!_{Y
        \downarrow S} M_g \bullet T p f M_g .
    \end{eqnarray*}
    Second, the cartesian morphism $\overline{\delta_Z} ( S \pi_Z )$
    equalise 
    $h^{\flat}$ and $g^{\flat} \bullet f^{\flat} M_g$:
    \begin{eqnarray*}
      \overline{\delta_Z} ( S \pi_Z ) \bullet h^{\flat} & = &
      \overline{\delta_X} ( S \pi_X ) M_h\\
      \overline{\delta_Z} ( S \pi_Z ) \bullet g^{\flat} \bullet
      f^{\flat} M_g & = & \overline{\delta_Y} ( S \pi_Y ) M_g \bullet
      f^{\flat} M_g\\
      & = & ( \overline{\delta_Y} ( S \pi_Y ) \bullet f^{\flat}
      ) M_g\\
      & = & \overline{\delta_X} ( S \pi_X ) M_f M_g\\
      & = & \overline{\delta_X} ( S \pi_X ) M_h .
    \end{eqnarray*}
    Therefore $p h^{\flat} = p ( g^{\flat} \bullet f^{\flat} M_g )$.

  \item Finally,
    we show $\langle g \rangle^{\#} \circ \langle f \rangle^{\#}=
    ( \langle g \rangle^{\#} \circ \langle f \rangle
    )^{\#}$.
    \[
    \begin{array}[b]{rcl}
      \langle g \rangle^{\#} \circ \langle f \rangle^{\#} &
      = & \langle g^{\flat} \bullet P_Y M_g \rangle \circ \langle
      f^{\flat} \bullet P_X M_f \rangle\\
      & = & \langle g^{\flat} \bullet P_Y M_g \bullet \langle f^{\flat}
      \bullet P_X M_f \rangle !_{Z \downarrow S} \rangle\\
      & = & \langle g^{\flat} \bullet P_Y M_g \bullet \langle f^{\flat}
      \bullet P_X M_f \rangle !_{Y \downarrow S} M_g \rangle\\
      & = & \langle g^{\flat} \bullet f^{\flat} M_g \bullet P_X M_f M_g
      \rangle\\
      & = & \langle h^{\flat} \bullet P_X M_h \rangle\\
      ( \langle g \rangle^{\#} \circ \langle f \rangle
      )^{\#} & = & ( \langle g^{\flat} \bullet P_Y M_g
      \rangle \circ \langle f \rangle )^{\#}\\
      & = & ( \langle g^{\flat} \bullet P_Y M_g \bullet \langle f
      \rangle !_{Z \downarrow S} \rangle )^{\#}\\
      & = & ( \langle g^{\flat} \bullet P_Y M_g \bullet \langle f
      \rangle !_{Y \downarrow S} M_g \rangle )^{\#}\\
      & = & ( \langle h \rangle )^{\#}\\
      & = & \langle h^{\flat} \bullet P_X M_h \rangle .
    \end{array}
    \tag*{\qedhere}
    \]
  \end{enumerate}
\end{proof}
The pointwise codensity lifting coincides with the codensity lifting
in Section \ref{sec:codense},
provided that $\Ran SS$ and $p(\Ran SS)$ are both pointwise.
\begin{theorem}\label{th:coin}
  Let $p:\EE\arrow\BB$ be a fibration, $\mT$ be a monad on $\BB$ and
  $\liftparam{\EE}{\BB}{\mT}{\AA}{R}{S}$ be a lifting
  parameter. Assume that $p,S$ satisfies the codensity condition, and
  moreover $\Ran SS$ and $p(\Ran SS)$ are both pointwise.  Then
  $((\ol{K\epsilon R})^{-1}(\Ran SS))X\simeq\lim(\delta_X^{-1}(S\pi_X))$.
\end{theorem}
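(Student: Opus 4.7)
The plan is to unfold both sides to an expression of the form ``inverse image of a pointwise-computed limit'' and then apply the well-known commutation between inverse images and fibred limits in a fibration with fibred limits.

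First, since $\Ran SS$ is pointwise and $p(\Ran SS)$ is pointwise (and $p$ preserves the latter by the codensity condition), evaluation at $X$ gives $(\Ran SS)X\simeq\lim(S\pi_X)$ in $\EE$, with $p(\Ran SS)X\simeq\lim(pS\pi_X)$ in $\BB$, and the component $p\pi_j$ of the limit cone corresponds under $p$ to $\pi_j$ for each $j\in X\downarrow S$. Chasing the definition of the mate $\ol{(-)}$, the component $(\ol{K\epsilon R})_X:TpX\arrow\lim(pS\pi_X)$ is exactly the morphism induced by the cone $\delta_X:TpX!_{X\downarrow S}\arrow pS\pi_X$ through the universal property of $\lim(pS\pi_X)$. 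Next, because cartesian morphisms in the functor fibration $[\EE,p]:[\EE,\EE]\arrow[\EE,\BB]$ are precisely those that are componentwise cartesian, the value at $X$ of $(\ol{K\epsilon R})^{-1}(\Ran SS)$ is the inverse image in the fibre $\EE_{TpX}$:
\[
((\ol{K\epsilon R})^{-1}(\Ran SS))X\;=\;(\ol{K\epsilon R})_X^{\,*}((\Ran SS)X)\;\simeq\;(\ol{K\epsilon R})_X^{\,*}(\lim S\pi_X).
\]

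The core step is then to show that for a pointwise limit $\lim D$ of a diagram $D:\mathbb{I}\arrow\EE$ preserved by $p$, and any $f:Y\arrow p(\lim D)=\lim(pD)$ with $f_j=p\pi_j\circ f$, one has in the fibre $\EE_Y$
\[
f^{\,*}(\lim D)\;\simeq\;\lim_{j\in\mathbb{I}}\bigl(f_j^{\,*}(Dj)\bigr),
\]
the limit on the right being the fibred limit (which exists by the hypothesis on $p$). I verify this by a Yoneda argument: a vertical morphism $Z\arrow f^{\,*}(\lim D)$ in $\EE_Y$ corresponds, by cartesianness of the lifting of $f$, to a morphism $Z\arrow\lim D$ above $f$; by the universal property of $\lim D$, this is the same as a natural family of morphisms $Z\arrow Dj$ above $f_j$; by cartesianness of each lifting $f_j^{\,*}(Dj)\arrow Dj$, each such component corresponds to a vertical morphism $Z\arrow f_j^{\,*}(Dj)$, naturally in $j$; finally, by the universal property of the fibred limit, this family is the same as a single vertical morphism $Z\arrow\lim_j f_j^{\,*}(Dj)$.

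Applying this with $\mathbb{I}=X\downarrow S$, $D=S\pi_X$ and $f=(\ol{K\epsilon R})_X$ (so that $f_j=(\delta_X)_j$) yields
\[
(\ol{K\epsilon R})_X^{\,*}(\lim S\pi_X)\;\simeq\;\lim_{j\in X\downarrow S}\bigl((\delta_X)_j^{\,*}(S\pi_X(j))\bigr)\;=\;\lim(\delta_X^{-1}(S\pi_X)),
\]
which combined with the first paragraph gives the desired isomorphism. The only non-routine point is the commutation lemma itself; all the rest is bookkeeping with definitions of mates, cones, and componentwise cartesian morphisms. The main obstacle is keeping the naturality of the correspondence in $j$ straight through the two successive uses of cartesianness, but this follows from naturality of the cartesian lifting operation in a fibration.
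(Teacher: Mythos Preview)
Your proposal is correct and follows essentially the same line as the paper's proof. Both proofs reduce to (i) identifying $((\ol{K\epsilon R})^{-1}(\Ran SS))X$ with $(\ol{K\epsilon R}X)^{-1}(\lim S\pi_X)$ using pointwiseness and componentwise cartesianness, and (ii) commuting the inverse image past the limit, where the leg morphisms become the $(\delta_X)_j$.

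The only packaging difference is in step (ii). The paper first rewrites the $\EE$-limit $\lim S\pi_X$ as a fibred limit of inverse images along the $\BB$-limit cone $P$ (the standard ``limits in a total category via fibred limits'' decomposition, cf.\ \cite[Exercise 9.2.4]{jacobscltt}), then uses that reindexing functors preserve fibred limits, and finally composes inverse images to obtain $(P\bul\ol{K\epsilon R}X!_{X\downarrow S})^{-1}(S\pi_X)$; the computation $P\bul\ol{K\epsilon R}X!_{X\downarrow S}=\delta_X$ is exactly your claim that $(\ol{K\epsilon R})_X$ is the mediating morphism for the cone $\delta_X$. You instead prove the combined commutation $f^*(\lim D)\simeq\lim_j f_j^*(Dj)$ in one Yoneda step. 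Your route avoids citing the two auxiliary facts separately and is arguably more self-contained; the paper's route makes explicit where the fibred-limit structure enters. Either way the content is the same.
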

\begin{proof}
  Let $(c_S,\Ran SS)$ be the pointwise right Kan extension.  Because
  its image by $p$ is also assumed to be a pointwise Kan extension,
  the following diagram is a right Kan extension of $pS\pi_X$ along
  $!_{X\downarrow S}$:
  \begin{displaymath}
    \xymatrix{
      X\downarrow S  \ar@{}[rd]|-{\Arrow{\gamma_X}} \ar[r]^-{\pi_X} \ar[d]_-{!_{X\downarrow S}} & \AA \ar[r]^-S \ar[d]^-S \ar@{}[rd]|-{\Arrow\epsilon} & \EE \ar[r]^-p & \BB \\
      1 \ar[r]_-X & \EE \ar@/_1pc/[rru]_-{p\Ran SS} & &
    }
  \end{displaymath}
  That is, the pair $(L,P)=(p(\Ran SS)X, pc_S\pi_X\bul p\Ran
  SS\gamma_X)$ is a limit of $pS\pi_X$.  Then
  \begin{eqnarray*}
    ((\ol{K\epsilon R})^{-1}(\Ran SS))X
    & =  & (\ol{K\epsilon R}X)^{-1}((\Ran SS)X) \\
    (\text{$\Ran SS$ pointwise}) & \simeq  & (\ol{K\epsilon R}X)^{-1}(\lim S\pi_X) \\
    (\text{limits by fibred limits}) & = & (\ol{K\epsilon R}X)^{-1}(\lim P^{-1}(S\pi_X)) \\
    (\text{preservation of fibred limits}) & \simeq & \lim (\ol{K\epsilon R}X!_{X\downarrow S})^{-1}(P^{-1}(S\pi_X)) \\
    & \simeq & \lim (P\bul\ol{K\epsilon R}X!_{X\downarrow S})^{-1}(S\pi_X)
  \end{eqnarray*}
  By expanding $P$, 
  \[
  \begin{array}[b]{rcl}
    P\bul\ol{K\epsilon R}X!_{X\downarrow S} & = &
    pc_S\pi_X\bul p\Ran SS\gamma_X\bul \ol{K\epsilon R}X!_{X\downarrow S}\\
    & = & 
    pc_S\pi_X\bul \ol{K\epsilon R}S\pi_X\bul Tp\gamma_X \\
    & = & 
    (pc_S\bul \ol{K\epsilon R}S)\pi_X\bul Tp\gamma_X \\
    & = &
    K\epsilon R\pi_X\bul Tp\gamma_X \\
    & = &
    \delta_X.
  \end{array}
  \tag*{\qedhere}
  \]
\end{proof}

\section{Characterising the Collection of Liftings as a Limit}
\label{sec:limit}

We give a characterisation of the class of liftings of a monad on the
base category of a {\em posetal} fibration with fibred small
limits. We show that the class of liftings of $\mc T$ is the vertex of
a certain type of limiting cone.

Fix a posetal fibration $p:\EE\arrow\BB$ with fibred small limits and
a monad $\mT$ on $\BB$.  Notice that each fibre actually admits {\em
  large} limits computed by meets.  Since $p$ is posetal, $p$ is
faithful. Without loss of generality, we regard each homset $\EE(X,Y)$
as a subset of $\BB(pX,pY)$.
\begin{definition}
  We define $\Lift ( \mT )$ to be the class of liftings of $\mT$ along $p$.
  We introduce a partial order $\preceq$ on them by
  \begin{displaymath}
    \dot{T} \preceq \dot{T'} \Longleftrightarrow \fa{X \in \EE}
    \dot{T} X \leqslant \dot{T'} X\quad (\text{in}~\EE_{T(pX)}). 
  \end{displaymath}  
\end{definition}
The partially ordered class $(\Lift(\mT),\preceq)$ admits arbitrary
large meets given by the pointwise meet.

We introduce a convenient notation for the codensity liftings of $\mT$.
By $[S]^R$ we mean the pointwise codensity lifting $\TT\mT$ of $\mT$
with a single lifting parameter $R\in\BB$ and $S\in\EE_{TR}$. By expanding
the definition, we have
\begin{displaymath}
  [S]^{R}X=\bigwedge_{f\in\EE(X,S)}(f^\#)^{-1}(S);
\end{displaymath}
see also \eqref{eq:general}.
\begin{definition}
  Let $X\in\EE$ be an object. An object $S \in \EE_{T(pX)}$ is {\em{closed}} with
  respect to $X$ if
  1) $\eta_{pX}\in\EE(X,S)$ and
  2) for any $f\in\EE(X,S)$, we have $f^\#\in\EE(S,S)$.
\end{definition}
\begin{proposition}\label{pp:closed}
  Let $X\in\EE$ be an object. Then an object $S\in\EE_{T(pX)}$ is closed with respect to $X$
  if and only if $S= [S]^{pX}X$.
\end{proposition}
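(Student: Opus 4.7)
The plan is to prove the two directions using the posetal fibration dictionary: in a posetal fibration, a morphism $g:A\to B$ in $\EE$ above $u:pA\to pB$ exists if and only if $A\leq u^{\ast}B$ in the fibre $\EE_{pA}$. Since both $S$ and $[S]^{pX}X$ live in $\EE_{T(pX)}$, the equality $S=[S]^{pX}X$ reduces to two inequalities in that fibre.

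For the direction ($\Leftarrow$), assume $S=[S]^{pX}X$. Condition (1) follows because any lifting $\dot{\mT}$ comes with a unit $\dot\eta_X:X\to\dot TX$ above $\eta_{pX}$; instantiating with $\dot{\mT}=\TT\mT=[S]^{pX}$ gives a morphism $X\to [S]^{pX}X=S$ above $\eta_{pX}$, so $\eta_{pX}\in\EE(X,S)$. Condition (2) follows similarly from the Kleisli extension of the lifted monad: given $f\in\EE(X,S)=\EE(X,[S]^{pX}X)$, applying $(-)^{\TT\#}$ produces $f^{\TT\#}:[S]^{pX}X\to[S]^{pX}X$, i.e., a morphism $S\to S$ above $(pf)^\#=f^\#$, so $f^\#\in\EE(S,S)$.

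For the direction ($\Rightarrow$), assume $S$ is closed. I will verify $S=\bigwedge_{f\in\EE(X,S)}(f^\#)^{-1}(S)$ by showing both inequalities in $\EE_{T(pX)}$. The $(\leq)$ direction uses condition (2): for each $f\in\EE(X,S)$, the morphism $f^\#\in\EE(S,S)$ exists above $f^\#$, which by posetality is equivalent to $S\leq(f^\#)^{-1}(S)$; taking the meet over $f$ gives the inequality. The $(\geq)$ direction uses condition (1): since $\eta_{pX}\in\EE(X,S)$, it appears as an index in the meet, and the corresponding component is $(\eta_{pX}^\#)^{-1}(S)=\id^{-1}(S)=S$ by the monad law $\eta^\#=\id$, so the meet is at most $S$.

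The only subtle point — which is not really an obstacle but worth attending to — is keeping the fibres straight: $pS=T(pX)$ so $f^\#:T(pX)\to T(pX)$ for every $f\in\EE(X,S)$, and $(f^\#)^{-1}(S)\in\EE_{T(pX)}$, making the meet well-defined in the same fibre as $S$. With this bookkeeping in place, each step is just a one-line application of the fibre-wise characterisation of morphisms in a posetal fibration together with the explicit formula for $[S]^{pX}X$ given just above the statement.
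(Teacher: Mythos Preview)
Your proof is correct and essentially matches the paper's. The paper organises the argument slightly differently, proving two finer biconditionals---condition~(1) $\iff$ $[S]^{pX}X\le S$ and condition~(2) $\iff$ $S\le [S]^{pX}X$---but the substantive steps (using $\eta_{pX}^\#=\id$ for one inequality, the posetal characterisation $f^\#\in\EE(S,S)\iff S\le(f^\#)^{-1}(S)$ for the other, and the lifted unit/Kleisli structure for the converse direction) are exactly yours.
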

\begin{proof}
  We first show that $\eta_{pX}\in\EE(X,S)$ if and only if $[S]^{pX}X\le
  S$. (only if) We have
  \begin{displaymath}
    [S]^{pX}X \le ((\eta_{pX})^\#)^{-1}(S) = (\id_{T(pX)})^{-1}(S) = S.
  \end{displaymath}
  (if) As $[S]^{pX}$ is a lifting of $T$, we have
  $\eta_{pX}\in\EE(X,[S]^{pX}X)\subseteq\EE(X,S)$.

  We next show that $\fa{f\in\EE(X,S)}f^\#\in\EE(S,S)$ holds
  if and only if $S\le[S]^{pX}X$.
  \begin{displaymath}
    S\le[S]^{pX}X \iff \fa{f\in\EE(X,S)}S\le(f^\#)^{-1}S
     \iff \fa{f\in\EE(X,S)} f^\#\in\EE(S,S).
  \tag*{\qedhere}
  \end{displaymath}
\end{proof}
\begin{definition}
  Let $X\in\EE$ be an object.
  \begin{enumerate}
  \item We define $\Cls(\mT,X)$ to be the set $\{S\in \EE_{T(pX)}~|~S=
    [S]^{pX}X\}$ of closed objects with respect to $X$.
  \item We regard the codensity lifting $[-]^{pX}$
    as a function of 
    type $\Cls(\mc T,X)\arrow \Lift(\mT)$.
  \item We define the monotone function $q_X
    :(\Lift(\mT),\preceq)\arrow(\Cls(\mc
    T,X),\le)$ to be the evaluation of a given lifting at $X$,
    that is,    $
      q_X(\dot T) = \dot TX.    
      $
    Here, the order $\le$ on $\Cls(\mc T,X)$ is the one inherited from $\EE_{T(pX)}$.
  \item We extend the order $\le$ on $\Cls(\mc T,X)$ to the pointwise
    order between parallel pairs of functions into $\Cls(\mc T,X)$.
  \end{enumerate}
\end{definition}

We note that $[-]^{pX}$ cannot be monotone, because its argument is
used both in a positive and a negative way.  Still, we have the following
adjoint-like relationship:
\begin{theorem}\label{th:adj}
  For any $X\in\EE$, we have
  $
    q_X\circ [-]^{pX}=\id_{\Cls(\mT,X)}$ and $\id_{\Lift(\mT)}\preceq [-]^{pX}\circ q_X.
  $
\end{theorem}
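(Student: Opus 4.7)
The proof reduces to unfolding the three pieces of structure and exploiting that $p$ is posetal, so $\EE(Y,Z)$ is a subset of $\BB(pY,pZ)$ and ``having a lift'' amounts to an inequality in the fibre.

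\medskip\noindent
\textbf{Step 1 (Well-definedness of $q_X$).} Before proving the two claims, one should check that $q_X$ actually lands in $\Cls(\mT,X)$, i.e.\ that $\dot T X$ is closed with respect to $X$ for every lifting $\dot T$. Since $\dot\eta_X:X\arrow\dot T X$ lies in $\EE$ and $p\dot\eta_X=\eta_{pX}$, faithfulness of $p$ gives $\eta_{pX}\in\EE(X,\dot T X)$. Similarly, for any $f\in\EE(X,\dot T X)$ the Kleisli extension $f^{\dot\#}\in\EE(\dot T X,\dot T X)$ exists in $\EE$ and satisfies $p(f^{\dot\#})=(pf)^\#$, so $(pf)^\#=f^\#\in\EE(\dot T X,\dot T X)$. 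By Proposition~\ref{pp:closed}, $\dot T X\in\Cls(\mT,X)$.

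\medskip\noindent
\textbf{Step 2 (Retract equation).} The equation $q_X\circ[-]^{pX}=\id_{\Cls(\mT,X)}$ is essentially Proposition~\ref{pp:closed}. Unfolding, for $S\in\Cls(\mT,X)$ one has $q_X([S]^{pX})=[S]^{pX}X$, which equals $S$ exactly by the characterisation $S=[S]^{pX}X$ of closed objects.

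\medskip\noindent
\textbf{Step 3 (Inequality).} For the second claim, fix a lifting $\dot T$ and any object $Y\in\EE$; I need
\[
\dot T Y \;\le\; [\dot T X]^{pX} Y \;=\; \bigwedge_{f\in\EE(Y,\dot T X)} (f^\#)^{-1}(\dot T X).
\]
It suffices to show, for each $f\in\EE(Y,\dot T X)$, that $\dot T Y\le (f^\#)^{-1}(\dot T X)$, i.e.\ that the base morphism $f^\#=(pf)^\#:T(pY)\arrow T(pX)$ lies in $\EE(\dot T Y,\dot T X)$. But $\dot T$ being a lifting of $\mT$ provides a Kleisli extension $f^{\dot\#}:\dot T Y\arrow\dot T X$ in $\EE$, and the equations $p\dot T=Tp$, $p\dot\eta=\eta p$, $p\dot\mu=\mu p$ ensure $p(f^{\dot\#})=(pf)^\#$. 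Since $p$ is faithful and posetal, this identification of homsets yields $(pf)^\#\in\EE(\dot T Y,\dot T X)$, which is exactly what was needed.

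\medskip\noindent
\textbf{Main obstacle.} There is no serious obstacle; the only subtlety is keeping straight the two roles of $f$ in the formula for $[S]^{pX}$ (once as a morphism of $\EE$ used to index the meet, once as a base morphism whose Kleisli extension is pulled back). Once the posetality of $p$ is used to identify an $\EE$-morphism with its underlying base morphism, the inequality is just the existence of the monadic lifting $f^{\dot\#}$ in the fibre, and the equality is a direct reading of Proposition~\ref{pp:closed}.
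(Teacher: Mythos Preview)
Your proof is correct and follows essentially the same route as the paper: the retract equation is read off the definition of $\Cls(\mT,X)$ (Proposition~\ref{pp:closed}), and the inequality is reduced to showing $f^\#\in\EE(\dot TY,\dot TX)$ for each $f\in\EE(Y,\dot TX)$, which holds because $\dot T$ is a lifting. Your Step~1 (well-definedness of $q_X$) and the explicit distinction between $f^{\dot\#}$ and $(pf)^\#$ are useful clarifications that the paper leaves implicit.
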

\begin{proof}

  We already have $q_X([S]^{pX})=[S]^{pX}X= S$ from the definition of
  $\Cls(\mT,X)$.

  We show $\dot T\preceq[\dot TX]^{pX}$. We have the following equivalence:
  \begin{eqnarray*}
    \dot T\preceq[\dot TX]^{pX}
    & \iff &
    \fa{Y\in\EE,f\in\EE(Y,\dot TX)}\dot TY\le (f^\#)^{-1}(\dot TX) \\
    & \iff & 
    \fa{Y\in\EE,f\in\EE(Y,\dot TX)}f^\#\in\EE(\dot TY,\dot TX)
  \end{eqnarray*}
  and the last line always holds as $\dot T$ is a lifting of $T$.
\end{proof}
We define a function $\phi_{X,Y}:\Cls(\mT,X)\arrow\Cls(\mT,Y)$ by
\begin{displaymath}
  \phi_{X,Y}(S)=q_{Y}\circ[-]^{pX}(S)=[S]^{pX}Y.
\end{displaymath}
Theorem \ref{th:adj} asserts that
$\phi_{X,X}=\id_{\Cls(\mT,X)}$.
Using the second inequality of Theorem \ref{th:adj},
for any $X,Y\in\EE$, we have
\begin{eqnarray}
  & & q_X \le  q_X\circ [-]^{pY}\circ q_Y=\phi_{Y,X}\circ q_Y \label{eq:lax} \\
  & & [S]^{pX}\preceq[[S]^{pX}Y]^{pY}  =[\phi_{X,Y}(S)]^{pY}.
  \label{eq:prec}
\end{eqnarray}

From Theorem \ref{th:adj}, $\dot T$ is a lower bound of the class
$\{[q_X(\dot T)]^{pX}~|~X\in\EE\}$. In fact, $\dot T$ is the {\em
  greatest} lower bound:
\begin{theorem}\label{th:intersection}
  For any lifting $\dot \mT$ of $\mT$, we have $\dot T= \bigwedge_{X\in\EE}[q_X(\dot T)]^{pX}$. \end{theorem}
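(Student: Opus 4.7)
The plan is to prove the asserted equality in $(\Lift(\mT),\preceq)$ by two $\preceq$-inequalities. The direction $\dot T \preceq \bigwedge_{X\in\EE}[q_X(\dot T)]^{pX}$ is immediate from Theorem \ref{th:adj}: for every $X \in \EE$ we have $\dot T \preceq [q_X(\dot T)]^{pX}$, so $\dot T$ is a lower bound of the indexed class and hence below its meet.

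For the reverse direction, I would exploit that fibred meets in $\Lift(\mT)$ are computed pointwise in each fibre (as $p$ is posetal with fibred small, indeed large, meets). So it suffices to fix an arbitrary $Y \in \EE$ and show the inequality $\bigl(\bigwedge_{X\in\EE}[q_X(\dot T)]^{pX}\bigr)Y \le \dot T Y$ inside the fibre $\EE_{T(pY)}$. Instantiating the meet at the index $X = Y$ gives
\[
  \Bigl(\bigwedge_{X\in\EE}[q_X(\dot T)]^{pX}\Bigr)Y \le [q_Y(\dot T)]^{pY}\,Y = [\dot T Y]^{pY}\,Y,
\]
so the whole claim reduces to proving $[\dot T Y]^{pY}\,Y = \dot T Y$.

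By Proposition \ref{pp:closed}, this is equivalent to verifying that $\dot T Y$ is closed with respect to $Y$, i.e.\ (i) $\eta_{pY} \in \EE(Y, \dot T Y)$ and (ii) $f^\# \in \EE(\dot T Y, \dot T Y)$ for every $f \in \EE(Y, \dot T Y)$. Both are direct consequences of $\dot{\mT}$ being a lifting of $\mT$: (i) holds because $\dot\eta_Y \colon Y \arrow \dot T Y$ is a morphism of $\EE$ sitting above $\eta_{pY}$; and (ii) holds because, given $f \in \EE(Y, \dot T Y)$, the composite $\dot\mu_Y \circ \dot T f \colon \dot T Y \arrow \dot T Y$ lives in $\EE$ and its image under $p$ is exactly $\mu_{pY} \circ T(pf) = (pf)^\#$.

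There is no real obstacle here: the result is essentially a packaging of Theorem \ref{th:adj} together with the ``closure'' characterisation in Proposition \ref{pp:closed}. The only point worth being careful about is ensuring the reduction to a fibrewise argument, which is legitimate because the meet defining the right-hand side is taken pointwise in each fibre, and at the index $X = Y$ the codensity lifting evaluated at $Y$ already reproduces $\dot T Y$ via closedness.
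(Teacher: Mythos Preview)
Your proof is correct and follows essentially the same approach as the paper's. Both proofs obtain $\dot T \preceq \bigwedge_X[q_X(\dot T)]^{pX}$ from Theorem~\ref{th:adj}, and for the reverse inequality both evaluate the meet at an arbitrary $Y$, drop to the term indexed by $X=Y$, and use that $[\dot T Y]^{pY}Y=\dot T Y$; the paper phrases this last step as $\phi_{Y,Y}(q_Y(\dot T))=q_Y(\dot T)$ (relying on $q_Y(\dot T)\in\Cls(\mT,Y)$, which is built into the typing of $q_Y$), whereas you spell out the closedness of $\dot T Y$ explicitly via Proposition~\ref{pp:closed}.
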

\begin{proof}
  It suffices to show $\bigwedge_{X\in\EE}[q_X(\dot T)]^{pX}\preceq\dot T$.
  For any $Y\in\EE$,
  \begin{displaymath}
    \bigwedge_{X\in\EE}[q_X(\dot T)]^{pX}Y
    =
    \bigwedge_{X\in\EE}\phi_{X,Y}(q_{X}(\dot T))
    \le
    \phi_{Y,Y}(q_{Y}(\dot T))
    =
    q_Y(\dot T)
    =
    \dot TY.
    \tag*{\qedhere}
  \end{displaymath}
\end{proof}
This theorem also states that any lifting of a monad $\mT$ is an intersection
of a class of single lifting parameter codensity liftings;
see also Theorem \ref{th:comp}. From this, we obtain the following corollary:
\begin{corollary}
  Let $\mc X\subseteq\Lift(\mT)$ be a class of liftings of $\mT$.  If
  1) for any $X\in\EE$ and $S\in\Cls(\mT,X)$, $[S]^{pX}\in\mc X$, and
  2) $\mc X$ is closed under class-size intersection, then $\mc
  X=\Lift(\mT)$.
\end{corollary}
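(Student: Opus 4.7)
The corollary should be an almost immediate consequence of Theorem \ref{th:intersection}, which already expresses every lifting of $\mT$ as a class-size meet of single-parameter codensity liftings of the form $[q_X(\dot T)]^{pX}$. The strategy is therefore to take an arbitrary element of $\Lift(\mT)$ and exhibit it as a witness of membership of $\mc X$ by invoking the two closure hypotheses in sequence.

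My plan is as follows. Since $\mc X \subseteq \Lift(\mT)$ holds by assumption, it remains to prove $\Lift(\mT) \subseteq \mc X$. Pick an arbitrary lifting $\dot \mT \in \Lift(\mT)$. For each $X \in \EE$, the evaluation $q_X(\dot T) = \dot T X$ lies in $\Cls(\mT,X)$ by the definition of $q_X$ (this is why its codomain can legitimately be taken to be $\Cls(\mT,X)$, cf.\ Proposition \ref{pp:closed}). Hence hypothesis (1) applies pointwise, yielding $[q_X(\dot T)]^{pX} \in \mc X$ for every $X \in \EE$.

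Now hypothesis (2) allows us to take the class-size meet of this family inside $\mc X$, so
\[
\bigwedge_{X\in\EE}[q_X(\dot T)]^{pX}\;\in\;\mc X.
\]
By Theorem \ref{th:intersection} this meet is exactly $\dot T$, so $\dot T \in \mc X$, finishing the argument. I do not foresee any real obstacle here: the content of the corollary is essentially a repackaging of Theorem \ref{th:intersection}, and the only point that needs care is keeping the sizes straight, namely that the indexing over $X \in \EE$ is a proper class-size meet and therefore really requires condition (2) in its class-size form rather than merely a small-meet closure condition.
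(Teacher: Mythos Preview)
Your proposal is correct and is precisely the argument the paper has in mind: the corollary is stated immediately after Theorem \ref{th:intersection} with no separate proof, only the remark that any lifting is an intersection of a class of single-parameter codensity liftings. Your spelling-out of the two steps (applying hypothesis (1) to each $q_X(\dot T)\in\Cls(\mT,X)$, then hypothesis (2) to the class-indexed meet, and invoking Theorem \ref{th:intersection}) matches this exactly.
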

\begin{definition}
  We say that an object $X\in\EE$ is a {\em split subobject} of an object $Y\in\EE$
  (denoted by $X\lhd Y$) if there is a split monomorphism $m:X\arrow
  Y$.
\end{definition}
One easily sees that the binary relation $\lhd$ on $\Obj\EE$ is reflexive and transitive.
We define $\Split(\EE)$ to be the preordered class
$(\Obj\EE,\lhd)$. 
\begin{lemma}\label{lm:spcone}
  Suppose that $X\lhd Y$ holds for objects $X,Y\in\EE$. The following holds:
  \begin{enumerate}
  \item $\phi_{Y,X}\circ q_{Y}= q_X$.
  \item For any object $Z\in\EE$, $\phi_{Y,X}\circ\phi_{Z,Y}=\phi_{Z,X}$.
  \end{enumerate}
\end{lemma}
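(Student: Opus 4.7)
My plan is as follows. For part (1), I aim to show, for every lifting $\dot T$ of $\mT$, that $\phi_{Y,X}(q_Y(\dot T)) = [\dot T Y]^{pY}X$ coincides with $q_X(\dot T) = \dot T X$. The inequality $\dot T X \le [\dot T Y]^{pY}X$ is already provided by \eqref{eq:lax}, so the work lies in the reverse inequality.

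For the reverse, I will use the single-parameter formula
\begin{displaymath}
  [\dot T Y]^{pY}X = \bigwedge_{f \in \EE(X, \dot T Y)} ((pf)^\#)^{-1}(\dot T Y),
\end{displaymath}
and bound the meet from above by a single cleverly chosen term. Since $X \lhd Y$, fix a retraction pair $m : X \arrow Y$ and $r : Y \arrow X$ in $\EE$ with $r \circ m = \id_X$. My intended witness is $f = \dot\eta_Y \circ m \in \EE(X, \dot T Y)$. A short monad-law computation gives $(pf)^\# = \mu_{pY} \circ T\eta_{pY} \circ Tpm = Tpm$, so the corresponding term in the meet reduces to $(Tpm)^{-1}(\dot T Y)$, and it suffices to show $(Tpm)^{-1}(\dot T Y) \le \dot T X$ in $\EE_{T(pX)}$.

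This is exactly where the retraction does its work. Applying $\dot T$ to $r \circ m = \id_X$ yields $\dot T r \circ \dot T m = \id_{\dot T X}$ in $\EE$, and applying $p$ gives $Tpr \circ Tpm = \id_{T(pX)}$ in $\BB$. Since $\dot T r \in \EE(\dot T Y, \dot T X)$ lives above $Tpr$, we have $\dot T Y \le (Tpr)^{-1}(\dot T X)$ in $\EE_{T(pY)}$; monotone reindexing by $(Tpm)^{-1}$, followed by the functoriality of reindexing, collapses the composite back to the identity and delivers $(Tpm)^{-1}(\dot T Y) \le (Tpr \circ Tpm)^{-1}(\dot T X) = \dot T X$, as required.

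Part (2) is then essentially a corollary: since $[S]^{pZ}$ is itself a lifting of $\mT$, instantiating part (1) at $\dot T := [S]^{pZ}$ gives $\phi_{Y,X}(\phi_{Z,Y}(S)) = \phi_{Y,X}(q_Y([S]^{pZ})) = q_X([S]^{pZ}) = \phi_{Z,X}(S)$ for every $S \in \Cls(\mT, Z)$. I do not foresee a serious obstacle; the only mildly delicate point is ensuring the witness $f = \dot\eta_Y \circ m$ indeed makes $(pf)^\#$ collapse to $Tpm$ via the monad identities.
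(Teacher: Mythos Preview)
Your proposal is correct and follows essentially the same route as the paper's proof. For part (1) the paper also uses \eqref{eq:lax} for one inequality and, for the reverse, picks the same witness $\dot\eta_Y\circ m$ in the meet, then observes (via a small diagram) that the resulting inverse image sits over $T(pm)$ and composes with $\dot T e$ (your $\dot T r$) to a vertical morphism into $\dot T X$; your explicit monad-law computation of $(pf)^\# = Tpm$ and the reindexing step simply unpack that diagram. Part (2) is derived identically in both, by instantiating part (1) at $\dot T = [S]^{pZ}$.
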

\begin{proof}
Assume $X\lhd Y$ in $\EE$.
\begin{enumerate}
\item 
  Let $(\dot T,\dot\eta,\dot\mu)\in\Lift(\mT)$. From
  \eqref{eq:lax} we have $q_X(\dot
  T)\le\phi_{Y,X}(q_{Y}(\dot T))$. To show the opposite inequality,
  take a split monomorphism $m:X\arrow Y$ in $\EE$. It comes with
  $e:Y\arrow X$ such that $e\circ m=\id_X$.  We then consider the
  following diagram:
  \begin{displaymath}
    \xymatrix{
      [\dot TY]^{pY}X \ar[d]_-{\le} \\
      ((\dot\eta_{Y}\circ m)^{\dot\#})^{-1}(\dot TY) \ar@{.>}[r] & \dot TY \ar[rr]^-{\dot Te} & & \dot TX & \EE \ar[d]^-p\\
      T(pX) \ar[r]_-{T(pm)} & T(pY) \ar[rr]_-{T(pe)} & & T(pX) & \BB
    }
  \end{displaymath}
  The composite of the morphisms in the base category is $T(e\circ m)=\id_{T(pX)}$. Therefore
  $[\dot TY]^{pY}X\le\dot TX$.

\item 
  From the previous equality, we have
  \begin{displaymath}
    \phi_{Y,X}\circ\phi_{Z,Y}(S)=\phi_{Y,X}(q_{Y}([S]^{pZ}))= q_{X}([S]^{pZ})=\phi_{Z,X}(S).
    \tag*{\qedhere}
  \end{displaymath}

\end{enumerate}
\end{proof}
Following this lemma, we extend $\Cls(\mT,-)$ to a functor of type
$\Split(\EE)^{op}\arrow\Set$ by
\begin{displaymath}
  \Cls(\mT,X\lhd Y)=\phi_{Y,X}:\Cls(\mT,Y)\arrow\Cls(\mT,X).  
\end{displaymath}
This is indeed a functor thanks to Theorem \ref{th:adj} (for the
preservation of the identity) and Lemma \ref{lm:spcone}-2 (for the
preservation of the composition).

We establish a universal property of $\Lift(\mT)$
with respect to a restricted class of cones over $\Cls(\mT,-)$.
\begin{definition}
  Let $V$ be a class and $\{r_X:V\arrow\Cls(\mT,X)\}_{X\in\EE}$ be a
  cone from $V$ over $\Cls(\mT,-)$. We say that the cone $r$
  satisfies $\phi$-{\em inequality} if $\phi_{Y,X}\circ r_Y \ge r_X$
  holds for {\em any} $X,Y\in\EE$.
\end{definition}
From Lemma \ref{lm:spcone}-1, $ \{q_X:\Lift(\mT)\arrow
\Cls(\mT,X)\}_{X\in\EE} $ is a cone from $\Lift(\mT)$ over
$\Cls(\mT,-)$, and moreover it satisfies the $\phi$-inequality by
\eqref{eq:lax}.
\begin{theorem}\label{th:limit}
  For any class $V$ and cone $r$ from $V$ over $\Cls(\mT,-)$
  satisfying $\phi$-inequality, 
  there exists a unique function $m:V\arrow\Lift(\mT)$
  such that $r_X=q_X\circ m$ holds for any $X\in\EE$.
\end{theorem}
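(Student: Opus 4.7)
The plan is to construct $m$ as a pointwise large meet of codensity liftings, using Theorem \ref{th:intersection} to guarantee uniqueness and the $\phi$-inequality of the cone $r$ to guarantee that $m$ actually sits above $r$.

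For each $v \in V$, define
\[ m(v) \;=\; \bigwedge_{X \in \EE} [r_X(v)]^{pX}, \]
where the meet is taken in $(\Lift(\mT),\preceq)$. This meet exists because $(\Lift(\mT),\preceq)$ admits arbitrary large meets (computed pointwise). I then have to verify that $q_Y \circ m = r_Y$ for every $Y \in \EE$. Since meets in $\Lift(\mT)$ are pointwise and $q_Y$ is evaluation at $Y$,
\[ q_Y(m(v)) \;=\; \bigwedge_{X \in \EE} q_Y([r_X(v)]^{pX}) \;=\; \bigwedge_{X \in \EE} \phi_{X,Y}(r_X(v)). \]
The inequality $q_Y(m(v)) \leq r_Y(v)$ is obtained by taking $X = Y$ in the meet and using $\phi_{Y,Y} = \id_{\Cls(\mT,Y)}$ from Theorem~\ref{th:adj}. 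The reverse inequality $q_Y(m(v)) \geq r_Y(v)$ comes from the $\phi$-inequality of the cone: for every $X \in \EE$, $\phi_{X,Y}(r_X(v)) \geq r_Y(v)$, and large meets preserve such pointwise lower bounds.

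Uniqueness is immediate from Theorem~\ref{th:intersection}: if $m' : V \to \Lift(\mT)$ also satisfies $q_X \circ m' = r_X$ for all $X$, then
\[ m'(v) \;=\; \bigwedge_{X \in \EE} [q_X(m'(v))]^{pX} \;=\; \bigwedge_{X \in \EE} [r_X(v)]^{pX} \;=\; m(v). \]

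The main delicate point is making sure one is allowed to write $q_Y(\bigwedge_X \dot T_X) = \bigwedge_X q_Y(\dot T_X)$ for a large-indexed family of liftings. This is however a direct consequence of the preceding remark in the paper that large meets in $\Lift(\mT)$ are computed pointwise in each fibre, combined with the fact that every fibre of the posetal fibration $p$ admits large meets; no subtlety about the monad structure is needed because the meet of liftings is again defined to be a lifting. Once this is in hand, both the existence verification and the uniqueness argument reduce to the two observations above.
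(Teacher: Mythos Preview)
Your proof is correct and follows essentially the same approach as the paper: the same definition of $m(v)=\bigwedge_{X\in\EE}[r_X(v)]^{pX}$, the same computation $q_Y(m(v))=\bigwedge_{X}\phi_{X,Y}(r_X(v))$, and the same appeal to Theorem~\ref{th:intersection} for uniqueness. The only cosmetic difference is that the paper splits the meet according to whether $Y\lhd X$ (using the cone equation there) before applying the $\phi$-inequality to the remaining terms, whereas you apply the $\phi$-inequality uniformly and pick out the $X=Y$ term for the other direction; your route is in fact slightly more economical.
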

\begin{proof}
  Let $V$ be a class and $r$ be a cone from $V$ over
  $\Cls(\mT,-)$ satisfying $\phi$-inequality. We define the function
  $m:V\arrow\Lift(\mT)$ by
  \begin{displaymath}
    m(v)=\bigwedge_{X\in\EE}[r_X(v)]^{pX}.
  \end{displaymath}
  We show that $m$ satisfies $q_Y\circ m=r_Y$ for any $Y\in\EE$.
  \begin{displaymath}
    q_Y(m(v))=
    m(v)(Y)=
    \bigwedge_{X\in\EE}\phi_{X,Y}(r_X(v))=
    r_Y(v)\wedge
    \bigwedge_{X\in\EE,Y\not\lhd X}\phi_{X,Y}(r_X(v))=
    r_Y(v).
  \end{displaymath}
  If there is another function $m':V\arrow\Lift(\mT)$ such that $q_Y\circ m'= r_Y$ then
  from Theorem \ref{th:intersection}, we have
  \begin{displaymath}
    m'(v)=\bigwedge_{X\in\EE}[q_X(m'(v))]^{pX}=\bigwedge_{X\in\EE}[r_X(v)]^{pX}=m(v).
  \end{displaymath}
  Thus $m= m'$.
\end{proof}
In Theorem 29 of the conference version of this paper \cite{calco15},
we showed a different universal property about the cone $q$ from
$\Lift(\mT)$. There, we considered all cones (which may not satisfy
$\phi$-inequality), while we restricted $\Split(\EE)$ to be
directed. We later realised that 1) $\Split(\EE)$ is often not
directed due to the initial object in $\EE$ (this happens, for
instance, when $\EE=\Pred,\Top,\Pre,\Met$), and 2) the
$\phi$-inequality property makes the proof of the universal property
of $q$ work. We therefore changed the claim of \cite[Theorem
29]{calco15} to Theorem \ref{th:limit}.



\section{Conclusion and Future Work}
\label{sec:}

We introduced the codensity lifting of monads along the fibrations
that preserve the right Kan extensions giving codensity monads (this
codensity condition was relaxed later in Section
\ref{sec:without}). The codensity lifting allows us to lift various
monads on non-closed base categories, which was not possible
by its precursor, $\top\top$-lifting \cite{katsumatatt}. The categorical
dual of the codensity lifting is also given, which lifts
comonads along cofibrations.

\section*{Acknowledgments}

The derivation of Kantorovich metric by the
codensity lifting of Giry monad in Section \ref{sec:kantoro} was constructed after
Katsumata learned about
Ogawa's work on deriving Kantorovich metric for
finitely-supported subdistributions using observational algebra
\cite{ogawa}. Katsumata is grateful to him for the discussion
about Kantorovich metric and pseudometric spaces at CSCAT 2015. 

Katsumata and Sato were supported by JSPS KAKENHI grant no.\ 24700012.
Uustalu was supported by the ERDF funded Estonian ICT R\&D support
programme project ``Coinduction'' (3.2.1201.13-0029), the Estonian
Science Foundation grant no. 9475 and the Estonian Ministry of
Education and Research institutional research grant no.\ IUT33-13.


\bibliographystyle{alpha}
\bibliography{all}
\end{document}